\title{Quantitative bisimulations using coreflections and open morphisms}
\titlerunning{Quantitative bisimulations using coreflections and open morphisms}%optional, please use if title is longer than one line
\author{J\'er\'emy Dubut}{National Institute of Informatics, Tokyo, Japan\\Japanese-French Laboratory for Informatics, Tokyo, Japan}{}{}{}%{johnqpublic@dummyuni.org}{https://orcid.org/0000-0002-1825-0097}{[funding]}%mandatory, please use full name; only 1 author per \author macro; first two parameters are mandatory, other parameters can be empty.
\author{Ichiro Hasuo}{National Institute of Informatics and SOKENDAI, Tokyo, Japan}{}{}{}%{joanrpublic@dummycollege.org}{[orcid]}{[funding]}
\author{Shin-ya Katsumata}{National Institute of Informatics, Tokyo, Japan}{}{}{}
\author{David Sprunger}{National Institute of Informatics, Tokyo, Japan}{}{}{}
\authorrunning{J. Dubut, I. Hasuo, S. Katsumata and D. Sprunger}%mandatory. First: Use abbreviated first/middle names. Second (only in severe cases): Use first author plus 'et al.'
\subjclass{Theory of computation $\rightarrow$ Models of computation $\rightarrow$ Probabilistic computation, Theory of computation $\rightarrow$ Models of computation $\rightarrow$ Timed and hybrid models, Theory of computation $\rightarrow$ Semantics and reasoning}% mandatory: Please choose ACM 2012 classifications from https://www.acm.org/publications/class-2012 or https://dl.acm.org/ccs/ccs_flat.cfm . E.g., cite as "General and reference $\rightarrow$ General literature" or \ccsdesc[100]{General and reference~General literature}. 
\keywords{Open maps bisimilarity, coreflections}%mandatory
\begin{document}

\theoremstyle{plain}
\newtheorem{proposition}{Proposition} 

\newcommand\map[3]{#1 : #2 \longrightarrow #3}
\newcommand\mapa[3]{#2 \stackrel{#1}{\longrightarrow} #3}
\newcommand\maps[2]{#1 \longrightarrow #2}
\newcommand\natr[3]{#1 : #2 \Longrightarrow #3}
\newcommand\reach{\textbf{Reach}_{\Sigma}}

\newcommand\ts{\textbf{TS}_{\Sigma}}
\newcommand\tso{\textbf{TS}_{\Sigma}\OO}

\newcommand\lin{\textbf{Lin}_{\Sigma}}
\newcommand\lino{\textbf{Lin}_{\Sigma}\OO}

\newcommand\tr{\textbf{Tr}_{\Sigma}}
\newcommand\tro{\textbf{Tr}_{\Sigma}\OO}

\newcommand\OO{\mathcal{O}}
\newcommand\PP{\mathcal{P}}
\newcommand\M{\mathcal{M}}
\newcommand\C{\mathcal{C}}
\newcommand\D{\mathcal{D}}
\newcommand\trcm{\textbf{Tr(}\M\textbf{,}\PP\textbf{)}}
\newcommand\prob{\textbf{Prob}_{\Sigma}}
\newcommand\rea{\mathbb{R}_{\geq 0}}
\newcommand\sem[2]{\llbracket #1 \rrbracket_{#2}}
\newcommand\tts{\textbf{TTS}_{\Sigma}}
\newcommand\tsrea{\textbf{TS}_{\Sigma\times\mathbb{R}_{>0}}}
\newcommand\trrea{\textbf{Tr}_{\Sigma\times\mathbb{R}_{>0}}}
\newcommand\tsorea{\textbf{TS}_{\Sigma\times\rea}\OO}
\newcommand\trorea{\textbf{Tr}_{\Sigma\times\rea}\OO}
\newcommand\hysy{\textbf{HS}_{\Sigma}}
\newcommand\unf{\text{unf}}
\newcommand{\Ext}{\mathrm{Ext}}
% Shin-ya's macr
\newcommand{\arrow}{\mathbin{\rightarrow}}
\newcommand{\fa}[1]{\forall{#1}~.~}
\newcommand{\ex}[1]{\exists{#1}~.~}
\newcommand{\Em}[1]{\textbf{#1}}

\maketitle

\begin{abstract}
We investigate a canonical way of defining bisimilarity of systems when their semantics is given by a coreflection, typically in a category of transition systems. We use the fact, from Joyal et al., that coreflections preserve open morphisms situations in the sense that a coreflection induces a path subcategory in the category of systems in such a way that open bisimilarity with respect to the induced path category coincides with usual bisimilarity of their semantics. We prove that this method is particularly well-suited for systems with quantitative information: we canonically recover the path category of probabilistic systems from Cheng et al., and of timed systems from Nielsen et al., and, finally, we propose a new canonical path category for hybrid systems.
 \end{abstract}

\section{Introduction}

%\textbf{Bisimulations, Categorically -}  
Bisimulations were introduced in \cite{park81}, to express the equivalence of two systems in a way that would reflect not only trace equivalence but also the branching structure of executions. Later, several categorical frameworks were introduced to understand the general theory of bisimilarity. 
%The first example is the theory of coalgebras \cite{jacobs16}, where systems are modeled as a set $X$, intuitively of states, and a function from $X$ to $FX$, where $F$ is a functor that encode the branching type of the systems. In this framework, functional bisimulations are modeled as morphisms of coalgebras, that is, functions between state spaces that preserve the branching structure. 
In this paper, we adopt the open maps approach of Joyal et al.\ \cite{joyal94, joyal96}, where systems are modeled as objects of a category with a specified subcategory---the \textit{path} category---which models the shapes of executions of the system. Functional bisimulations are modeled as the eponymous \textit{open maps}, morphisms of the category that have a lifting property with respect to the path category.

%\noindent\textbf{Translating Systems and Relating Bisimulation Notions -}
Many different models of computation can be presented in the open maps framework, leading researchers to compare these categories. Translations from one model to another (for example, Petri nets into event structures, as in \cite{nielsen95}) enable us to compare their expressive power, see for example \cite{vanglabbeek05} for an exhaustive hierarchy for concurrent models. It has been observed that, in many cases, those translations can be made functorial \cite{goubault12, nielsen95}. In particular, Nielsen et al.\ advocated for the use of coreflections, a special kind of adjunction, as a good way to compare the expressive power of two models.

Translations are also used to give semantics for complex system types in terms of well-understood simple systems. Indeed, many models of computation, e.g.\ CCS \cite{milner80}, timed systems \cite{bengtsson03}, and hybrid systems \cite{girard07, henzinger00}, are given semantics by translating them into labelled transition systems. Generally speaking, these semantic translations operate by squeezing the complex parts of a system into the state space and/or action space of the labelled transition system (LTS). For example, as we will see in section~\ref{sec:timed}, the actions of LTSs giving the semantics of timed systems consist of both an action of the original timed system and an arbitrary nonnegative real number representing the elapsed time since the last action. An obvious effect of this semantic translation strategy is that even finite systems of sufficiently complex type usually have infinite sized LTSs representing their semantics. To fit these complex system types in the open maps framework, usually a path (sub)category is carefully handcrafted and then a result is presented confirming that two complex systems are bisimilar according to this custom path category if and only if their semantics are bisimilar as LTSs. 

A general strategy for avoiding this involved isolation of a path category is suggested by a general technical observation of Joyal et al.\ in~\cite{joyal96}. They showed that path categories\tikzstyle{mybox} = [draw=black, fill=gray!20, very thick,
    rectangle, rounded corners, inner sep=4pt, inner ysep=6pt]
\begin{wrapfigure}{r}{0.55\textwidth}
\vspace{-4.5mm}
\begin{tikzpicture}[scale=0.6]
    \node[mybox] (qm) at (0,0) {
    \begin{minipage}{0.20\textwidth}
    \centering
    \scriptsize{Quantitative models\\of computation\\(prob., timed, hybrid)}
    \end{minipage}
    };
    
    \node[mybox] (lts) at (7.5,0) {
    \begin{minipage}{0.17\textwidth}
    \centering
    \scriptsize{LTS\\(with observations)}
    \end{minipage}
    };
    
    \node (adj) at (3.9, -0.4) {$\bot$};
    
%    \path[->, font=\scriptsize]
%        (qm) edge node[above]{semantics} (lts);
    \path[->, font=\scriptsize]
        (qm) edge (lts);
    \node at (3.9, 1) {\scriptsize{semantics}};
    \node at (3.9, 0.63) {\scriptsize{+}};
    \node at (3.9, 0.26) {\scriptsize{unfolding}};
        
    \path[left hook->, dashed, bend left = 15, font=\scriptsize]
        (lts) edge node[below]{coreflection} (qm);
        
    \node (bisqm) at (0, -1.8) {\scriptsize{bisimilarity}};
    \node (bislts) at (7.5, -1.8) {\scriptsize{bisimilarity}};
    
    \path[->, font=\scriptsize, dashed]
        (bislts) edge node[below]{gives rise to} (bisqm);
\end{tikzpicture}
\vspace{-6mm}
\end{wrapfigure} can automatically be translated along coreflections in such a way that the coreflection preserves and reflects bisimilarity. Unfortunately, this general strategy is often not applicable to the semantic case---the common semantic translations are usually not coreflections because they create too many unreachable states.

%In this paper, we remedy this defect using a condition investigated by the first author in~\cite{dubut16} called \textit{accessibility}. When a category of systems is accessible, it admits an \textit{unfolding} operation which allows us to remove the unreachable states present in naive semantic translations. This allows us to obtain path categories uniformly and automatically for several quantitative system types using new, but bisimilar, semantics. In particular, we recover the path categories of Cheng et al.~\cite{cheng95} and Nielsen et al.~\cite{nielsen99} for probabilistic systems and timed systems, respectively. Additionally, we synthesise a path category for hybrid systems, which we believe to be the first such category.\\

In this paper, we remedy this defects by composing usual translations with unfolding. This operation allows us to remove the unreachable states present in naive semantics translations, but also to use the unique path property of trees to obtain coreflections. This also suggests that the crucial property of LTSs is \emph{accessibility}, as investigated by the first author in \cite{dubut16}. When a category is accessible with respect to a path category, the open morphism situation is well-behaved, in that open bisimilarity is equivalent to the existence of a bisimulation-like relation on runs, and admits an unfolding operation. Furthermore, accessibility is a property preserved by coreflection, which means that the open morphism situation we synthesise using corefection is automatically well-behaved. This allows us to obtain path categories uniformly and automatically for several quantitative system types using new, but bisimilar, semantics. In particular, we recover the path categories of Cheng et al.~\cite{cheng95} and Nielsen et al.~\cite{nielsen99} for probabilistic systems and timed systems, respectively. Additionally, we synthesise a path category for hybrid systems, which we believe to be the first such category.

\textbf{Organization - }
In Section \ref{sec:TS}, we start by recalling the bisimilarity theory of transition systems using open morphisms, and the classical notion of unfolding. We then show how this theory can be extended to transition systems with observations from \cite{girard11}. In Section \ref{sec:coreflection}, we recall the theory of coreflections and their interactions with open morphisms (Theorem \ref{theo:joyal} and Proposition \ref{theo:dubut}). Finally, in Sections \ref{sec:proba}, \ref{sec:timed} and \ref{sec:hybrid}, we study three examples of quantitative systems by providing a category for each type, describing a translation with values in transition systems (with observations), proving it is a coreflection, and finally synthesising a path category. In the case of probabilistic systems (resp.~timed systems), we canonically obtain the path category of \cite{cheng95} from Theorem \ref{theo:proba} (resp.~\cite{nielsen99} from Theorem \ref{theo:timed}). On the other hand, the path category of hybrid systems obtained from Theorem \ref{theo:hybrid} is completely new.

%~\\~\\~\\~\\

\section{Bisimilarity in transition systems}
\label{sec:TS}

%transition systems, strong bisimulations, transition systems with observation, approximate bisimulations, trees and unfoldings (needed in some coreflections)

\subsection{The category of transition systems and open morphisms}

\begin{wrapfigure}{r}{0.25\textwidth}
\vspace{-4mm}
\begin{tikzpicture}[scale=0.6]
	
	\node (eer) at (-1,0) {};	
	\node (1) at (0,0) {\scriptsize{$q_1$}};
	\node (2) at (1.5,-1) {\scriptsize{$q_2$}};
	\node (4) at (1.5,1) {\scriptsize{$q_4$}};
	\node (3) at (3,-1) {\scriptsize{$q_3$}};
	\node (5) at (3, 1) {\scriptsize{$q_5$}};
	
	\node (1p) at (0, -3.5) {\scriptsize{$q'_1$}};
	\node (2p) at (1.5, -3.5) {\scriptsize{$q'_2$}};
	\node (3p) at (3,-4.5) {\scriptsize{$q'_3$}};
	\node (5p) at (3,-2.5) {\scriptsize{$q'_5$}};
	
	\path[->,font=\scriptsize]
		(-0.7,0) edge (1)
		(1) edge node[below]{$a$} (2)
		(1) edge node[above]{$a$} (4)
		(2) edge node[below]{$b$} (3)
		(4) edge node[above]{$c$} (5)
		
		(-0.7,-3.5) edge (1p)
		(1p) edge node[below]{$a$} (2p)
		(2p) edge node[below]{$b$} (3p)
		(2p) edge node[above]{$c$} (5p);
		
	\path[->, dotted, bend left = 20]
		(1) edge (1p)
		(2) edge (2p)
		(4) edge (2p)
		(3) edge (3p)
		(5) edge (5p);
			
\end{tikzpicture}
\vspace{-5mm}
\end{wrapfigure}
From now on fix an alphabet $\Sigma$. A \textbf{(labelled) transition system} is a triple $(S,i,\Delta)$ with $S$ a set of \textbf{states}, $i \in S$ the \textbf{initial state}, and $\Delta \subseteq S\times\Sigma\times S$ the \textbf{transition relation}. A \textbf{morphism of transition systems} $f$ from $T = (S,i,\Delta)$ to $T' = (S',i',\Delta')$ is a function $\map{f}{S}{S'}$ such that for every $(s,a,s') \in \Delta$, $(f(s),a,f(s')) \in \Delta'$. Transition systems and morphisms of transition systems form a category, which we denote by $\ts$. For instance, the diagram on the right presents two transition systems $T_u$ (upper) and $T_d$ (lower) and a morphism $f$ of transition systems from $T_u$ to $T_d$ (dotted arrows). The small arrow into $q_1$ (resp. $q_1'$) represents the initial state of $T_u$ (resp. $T_d$).

%We bring the description of finite transition systems here.
There is a special class of transition systems called \Em{finite linear
systems}. A finite linear system of a word $a_1\cdots a_n\in\Sigma^*$ is a transition system specified by the following diagram:
$$
  L(a_1\cdots a_n)=\arrow 0\xrightarrow{~a_1~}1\xrightarrow{~a_2~}\cdots\xrightarrow{~a_{n}~} n,\quad L(\epsilon)=\arrow 0
$$
We write $\lin$ for the full subcategory of $\ts$ consisting of all finite linear systems $\{L(w)~|~w\in\Sigma^*\}$. % We note that $\lin$ is isomorphic to the prefix order of $\Sigma^*$. 
Finite linear systems are used to represent \Em{transition paths} inside a transition system $T$---these paths are simply morphisms $p:P\arrow T$ in $\ts$ from some $P\in\lin$. An important prerequisite to the notion of open maps is the ordering on paths called \textbf{path extension}: given a morphism between finite linear systems $\map eP{P'}$ and a path $\map pPT$, an \textbf{extension of $p$ along $e$} is a morphism $\map{p'}{P'}T$ such that $p'\circ e=p$. We write $\Ext(p,e)$ for the set of extensions of $\map pPT$ along $\map eP{P'}$.

We now consider the interaction between morphisms of transition systems and path extensions. Fix a morphism $\map fT{T'}$ in $\ts$. This maps a transition path $p$ inside $T$ to the path $f\circ p$ inside $T'$. We define a family of binary relations $R_f^P=\{(p,f\circ p)~|~\map pPT\}$ indexed by objects $P$ in $\lin$, which exhibits that $T'$ can simulate path extensions in $T$ in the following sense (to show it, let $q'=f\circ q$):
\begin{equation}
  \label{eq:simu}
  \fa{e\in\lin(P,P')}
  (p,p')\in R_f^P\implies
  \fa{q\in \Ext(p,e)}\ex{q'\in \Ext(p',e)}(q,q')\in R_f^{P'}.
\end{equation}
% This is somewhat expected because $f:T\arrow T'$ is a homomorphism between transition graphs, and any path in $T$ can be transferred to the one in $T'$.

More interesting is the situation when $R_f^P$ is \Em{bisimulation-like}, that is, it also enjoys the symmetric version of \eqref{eq:simu} (this is what is called (strong) path bisimulation in \cite{joyal96}):
\begin{equation}
  \label{eq:open1}
  \fa{e\in\lin(P,P')}
  (p,p')\in R_f^P\implies
  \fa{q'\in \Ext(p',e)}\ex{q\in \Ext(p,e)}(q,q')\in R_f^{P'}.
\end{equation}
\begin{wrapfigure}{r}{0.3\textwidth}
\vspace{-6mm}
\centering
  \begin{tikzpicture}[scale=0.75]	

    \node (B') at (0,0) {$P'$};
    \node (B) at (0,1.5) {$P$};
    \node (S) at (3,0) {$T'$};
    \node (T) at (3,1.5) {$T$};
    
    \path[->,font=\scriptsize]
    (B) edge node[left]{$e$} (B')
    (B') edge node[below]{$q'$} (S)
    (T) edge node[right]{$f$} (S)
    (B) edge node[above]{$p$} (T);
    
    \path[->, font = \scriptsize, dotted]
    (B') edge node[above]{$q$} (T);	
    
  \end{tikzpicture}
    \vspace{-6mm}
\end{wrapfigure}
This says that $R_f^P$ also witnesses that $T$ can simulate path extensions in $T'$. By unfolding the definition of $R_f^P$ and $\Ext$, \eqref{eq:open1} is actually equivalent to the following \Em{lifting property}: in the figure on the right, for any morphism $e\in\lin$ and $p,q'\in\ts$ making the square commute, there exists a morphism $q$ in $\ts$ making two triangles commute. When $f:T\arrow T'$ satisfies this lifting property, the relation $R_f^P$ witnesses that the transition systems $T$ and $T'$ can mutually simulate path extensions. We call such morphisms $\lin$-\Em{open}, and adopt them as a primitive form of bismulation between two transition systems. In the category $\ts$, a morphism $f:T\arrow T'$ is $\lin$-open if and only if the graph of $f$ is a strong bisimulation relation between $T$ and $T'$. 

To represent general relational bisimulations, it suffices to combine two $\lin$-open morphisms as a \Em{span} $T\leftarrow R\rightarrow T'$. We call such spans \Em{$\lin$-open bisimulation} from $T$ to $T'$. This subsumes usual bisimulation relations between transition systems: for any strong bisimulation $R$ between $T$ and $T'$, its projection legs $T\leftarrow R\rightarrow T'$ are a span of $\lin$-open morphisms.

% As a consequence, ..
% A morphism is meant to be a functional simulation in the following sense: the graph relation $R_f = \{(s,f(s))\in S\times S' \mid s\in S\}$ is a simulation of $T$ by $T'$ \cite{park81}, that is, $(i,i') \in R_f$ and, for any $(s,s') \in R_f$ and transition $(s,a,t) \in \Delta$, there is a transition $(s', a, t') \in \Delta$ such that $(t,t') \in R_f$.
 
\subsection{Open morphisms and open bisimulation}

The definition of lifting property, open morphisms, and open bisimulations introduced in the previous section make sense in a more general situation abstracting the underlying categories $\ts$ and $\lin$.
% \sk{I found two usages: ``open morphisms'' and ``open morphisms''.}
% Intuitively, open morphisms \cite{joyal96} are those morphisms whose graph relation is a strong bisimulation \cite{park81}, that is, a simulation whose inverse relation is also a simulation.
% This intuition is true for reachable systems, that is, those transition systems such that for any state, there is a path from the initial state to this state. In more general systems, this distinction between reachable and non-reachable states is the main difference between open morphisms and morphisms of coalgebra \cite{jacobs16}. The more precise intuition is that open morphisms are those morphisms whose restriction to the reachable part are morphisms of coalgebra, that is, whose graph relation is a strong bisimulation. This can be defined categorically, in a more general framework.
By a \textbf{categorical model}, we mean a category $\M$ (of \textbf{systems and functional simulations}, much as $\ts$) and its subcategory $\PP$ (of \textbf{execution shapes and shape extensions}, much as $\lin$). In this framework, a morphism $\map{f}{T}{T'}$ of $\M$ is said to be \textbf{$\PP$-open} if in the figure above, $f$ enjoys the same lifting property stated as before: for any morphism $p,q'$ in $\M$ and $e$ in $\PP$ making the square commute, there exists a morphism $q$ in $\M$ making two triangles commute. A span $T\leftarrow R\rightarrow T'$ of $\PP$-open morphisms is called \Em{$\PP$-open bisimulation} from $T$ to $T'$. We then say that $T$ and $T'$ in $\M$ are \Em{$\PP$-open bisimilar} if there is a $\PP$-open bisimulation from $T$ to $T'$.

\subsection{Unfolding}
\label{subsec:unfolding}

Let $\map I\PP \M$ be a categorical model, and assume a certain cocompleteness (called \Em{$\PP$-accessibility}, formulated by the first author in \cite{dubut16} and repeated below) on $\M$. Using the colimits provided by the $\PP$-accessibility, we can \Em{unfold} any system $X$ into a tree-like system that is bisimilar to $X$:
$$U(X)=\mathrm{colim}(\mapa{\pi}{I\downarrow X}{\mapa I\PP\M}),$$
where $\pi$ is the canonical projection from the comma category $I\downarrow X$.
We call any such colimit a \Em{tree-like system}, and write $\trcm$ for the full subcategory of $\M$ consisting of tree-like systems. The following proposition shows that it is harmless to consider the unfolding instead of the system itself, as long as we are interested in bisimilarity.

% In the categorical model $\map I\PP M$, we consider systems that arises from a colimit of a diagram 
% By simple, we mean here a system that only cares about branching, and in the case of transition systems,

\begin{proposition}[\cite{dubut16}]
  When $\M$ is $\PP$-accessible (which is explained below), we have:
  \begin{enumerate}
  \item The mapping $X\mapsto U(X)$ extends to a \Em{coreflection} (Section \ref{sec:coreflection}) $\map{U}{\M}{\trcm}$.
  \item The unit $\map{\unf_X}{U(X)}{X}$ of the coreflection is open.
  \item $X$ and $Y$ are $\PP$-open bisimilar iff $U(X)$ and $U(Y)$ are $\PP$-open bisimilar.
  \end{enumerate}
\end{proposition}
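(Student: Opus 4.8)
The entire argument turns on a single \emph{unique path property} that $\PP$-accessibility is engineered to supply, so I would isolate it first. Writing $\sigma_{(P,p)}\colon I(P)\to U(X)$ for the colimit injection attached to an object $(P,p)$ of $I\downarrow X$, the family $\{p\}_{(P,p)}$ is itself a cocone on $I\circ\pi$ with vertex $X$, so the universal property of the colimit yields a canonical morphism $\map{\unf_X}{U(X)}{X}$ determined by $\unf_X\circ\sigma_{(P,p)}=p$; in particular $U(X)$ is tree-like by construction. The lemma I would extract from accessibility is that $(P,p)\mapsto\sigma_{(P,p)}$ is a \emph{bijection} between paths $\maps{I(P)}{X}$ and paths $\maps{I(P)}{U(X)}$, with inverse $u\mapsto\unf_X\circ u$; concretely, every path $u\colon I(P)\to U(X)$ satisfies $u=\sigma_{(P,\,\unf_X\circ u)}$. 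One direction is immediate from $\unf_X\circ\sigma_{(P,p)}=p$; the other --- that a path into the colimit factors through a \emph{unique} injection and is itself canonical --- is exactly where the cocompleteness and the ``non-merging of paths'' packaged into $\PP$-accessibility must be used, and I expect this to be the main obstacle. Functoriality of $U$ (postcomposition $(P,p)\mapsto(P,f\circ p)$ induces a map of diagrams), naturality of $\unf$, and the compatibility $U(f)\circ\sigma_{(P,q)}=\sigma_{(P,\,f\circ q)}$ are then routine diagram chases.

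Granting the lemma, part 1 reduces to showing $\unf_X$ is couniversal, for which I would first check that $\unf_A$ is an isomorphism whenever $A$ is tree-like. Since such an $A$ is by definition a colimit $U(X')$, the lemma gives an isomorphism of comma categories $I\downarrow A\cong I\downarrow X'$ over $\PP$ (on objects $(P,p)\mapsto(P,\sigma_{(P,p)})$; cocone compatibility makes it functorial, and injectivity of $\sigma_{(P,-)}$ makes it full and faithful), identifying the two diagrams $I\circ\pi$. Transporting the colimiting cocone of $U(X')$ then shows that the tautological cocone $\{u\}_{(P,u)\in I\downarrow A}$ already exhibits $A$ as $\mathrm{colim}(I\circ\pi)$, i.e.\ $\unf_A$ is invertible. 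Couniversality follows: for tree-like $B$ and $\map{h}{B}{X}$ I set $\tilde h=U(h)\circ\unf_B^{-1}$ and obtain $\unf_X\circ\tilde h=h$ by naturality, while uniqueness comes from testing a competitor against every path $v\colon I(P)\to B$ (legitimate since $B$ is the colimit of these) and invoking injectivity of $\unf_X\circ(-)$. This is precisely the couniversal-arrow criterion making $\trcm\hookrightarrow\M$ coreflective with coreflector $U$.

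For part 2 I would show $\unf_X$ is $\PP$-open directly from the lemma. Given a lifting square with $e\in\PP(P,P')$, a path $u\colon I(P)\to U(X)$ and $v\colon I(P')\to X$ satisfying $v\circ I(e)=\unf_X\circ u$, I take $w=\sigma_{(P',v)}$. Then $\unf_X\circ w=v$ by definition, and $w\circ I(e)=\sigma_{(P',v)}\circ I(e)=\sigma_{(P,\,v\circ I(e))}=\sigma_{(P,\,\unf_X\circ u)}=u$, the last step being the lemma; so $w$ is the required diagonal.

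Finally, part 3 is formal once the unit is open. For the ``only if'' direction I would show that $U$ preserves $\PP$-open maps: given $f$ open and a lifting problem for $U(f)$, I normalise the two input paths as canonical injections via the lemma, push the square down through $\unf_R$ and $\unf_X$ (using naturality to verify it is a lifting square for $f$), solve it by openness of $f$, and pull the solution $q$ back up as $\sigma_{(P',q)}$, checking the two triangles exactly as in part 2. Applying $U$ to an open span $X\leftarrow R\rightarrow Y$ then yields the open span $U(X)\leftarrow U(R)\rightarrow U(Y)$. The ``if'' direction is simpler: given an open span $U(X)\leftarrow R\rightarrow U(Y)$, I postcompose its legs with the open units $\unf_X$ and $\unf_Y$ and use that $\PP$-open morphisms are closed under composition \cite{joyal96} to obtain an open span $X\leftarrow R\rightarrow Y$.
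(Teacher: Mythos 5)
The paper itself gives no proof of this proposition---it is imported verbatim from \cite{dubut16}---so there is no in-paper argument to compare against; judged on its own, your proof is correct and follows the route one would expect from that reference. The central lemma you isolate (every path $u:I(P)\to U(X)$ equals $\sigma_{(P,\,\unf_X\circ u)}$, so that $p\mapsto\sigma_{(P,p)}$ is a bijection between paths into $X$ and paths into $U(X)$) is precisely what condition 2) of $\PP$-accessibility supplies, and parts 1--3 then follow by the diagram chases you describe. Two small points deserve to be made explicit. First, your derivation of $a=\sigma_{(P,\,\unf_X\circ a)}$ uses the cocone identity $\sigma_{(P',p')}\circ I(e)=\sigma_{(P,\,p'\circ I(e))}$, which holds only when the factorization morphism $e:P\to Dd$ provided by accessibility is a morphism of $\PP$ (hence a morphism of the comma category $I\downarrow X$), not merely of $\M$; this is the intended reading of the definition, but it is the load-bearing hypothesis of your lemma and should be flagged. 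Second, condition 1) of accessibility only guarantees colimits of \emph{nonempty} diagrams, so the existence of $U(X)$ tacitly assumes $I\downarrow X\neq\emptyset$ (e.g.\ that $\PP$ contains an ``empty path'' admitting a morphism into every system), a standing hypothesis in \cite{dubut16}.
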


We say that $\M$ is \Em{$\PP$-accessible} if 1) for any nonempty diagram in $\PP$, its $\M$-colimit exists, and 2) for any non-empty diagram $D:\D\arrow\PP$, its $\M$-colimit $(Z,\eta)$ satisfies the following factorization property: for any $a:P\arrow Z$ in $\M$ with $P\in\PP$, there exists $d\in\D$ and $e:P\arrow Dd$ such that $\eta_d\circ e=a$.

\begin{example}
In the categorical model $\maps{\lin}{\ts}$ of transition systems, tree-like systems are a well-known concept. The category $\textbf{Tr(}\ts,\lin\textbf{)}$ of tree-like systems is isomorphic to the category $\tr$ of synchronization trees \cite{winskel84}, and the unfolding of $T = (S, i , \Delta)$ is given by the following tree: its states are runs of $T$, that is, sequences $i = q_0 \xrightarrow{a_1} \ldots \xrightarrow{a_n} q_n$, with $(q_{i-1}, a_i, q_i) \in \Delta$, its initial state is the singleton sequence $i$, and its transitions are $$(q_0 \xrightarrow{a_1} \ldots \xrightarrow{a_n} q_n, a_{n+1}, q_0 \xrightarrow{a_1} \ldots \xrightarrow{a_n} q_n \xrightarrow{a_{n+1}} q_{n+1}).$$
The morphism $\unf_T: U(T) \to T$ maps a run to its ending state.
\end{example}

\subsection{Transition systems with observations}
\label{subsec:observation}

One usual semantics of hybrid systems uses an extension of transition systems with observations \cite{girard11}. Fix a pseudometric space $(\OO, d)$, called the \textbf{observation space}. A \textbf{transition system with observations} is a tuple $(S,i,\Delta,\omega)$ where, $(S,i,\Delta)$ is a transition system and $\map{\omega}{S}{\OO}$ is an \textbf{observation} function. An \textbf{$\epsilon$-bounded morphism} $\map{f}{(S,i,\Delta,\omega)}{(S',i',\Delta',\omega')}$ is a morphism between the underlying transition systems such that $d(\omega(s),\omega'(f(s))) \leq \epsilon$ for every $s \in S$. A \textbf{bounded morphism} is an $\epsilon$-bounded morphism for some $\epsilon \geq 0$. Transition systems with observations and bounded morphisms form a category that we denote by $\tso$. The evident forgetful functor $\maps \tso\ts$ is denoted by $W$. We denote by $\lino$ (resp.~$\tro$) the full subcategory of $\tso$ consisting of systems whose underlying transition system is finite linear (resp.~is a synchronization tree).

\begin{proposition}
  A (bounded) morphism $\map f T {T'}$ in $\tso$ is $\lino$-open if and only if $\map{Wf}{WT}{WT'}$ in $\ts$ is $\lin$-open.
\end{proposition}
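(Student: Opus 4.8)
The plan is to exploit two facts. First, $\lino$-openness and $\lin$-openness are defined by the very same diagonal lifting property, only in $\tso$ and $\ts$ respectively, and the forgetful functor $W$ is faithful and acts as the identity on underlying state functions; consequently a square (or triangle) commutes in $\tso$ exactly when its $W$-image commutes in $\ts$, provided every leg is a genuine morphism of $\tso$. Second, and crucially, every object of $\lino$ has a \emph{finite} underlying transition system. Hence for any transition-respecting function $g$ from the states of such an object into the states of a system with observations, the quantity $\max_{s} d(\omega(s),\omega'(g(s)))$ is a finite maximum, so $g$ is automatically an $\epsilon$-bounded morphism for some $\epsilon$. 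In short, the metric side-condition defining $\tso$ is vacuous for morphisms out of execution shapes, and this is exactly what will let openness travel across $W$ in both directions.

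For the ``only if'' direction, I would assume $f$ is $\lino$-open and take a lift problem for $Wf$ in $\ts$: an arrow $\bar e \in \lin(\bar P,\bar P')$ together with $\bar p : \bar P \to WT$ and $\bar q' : \bar P' \to WT'$ satisfying $Wf\circ\bar p = \bar q'\circ\bar e$. I lift this square to $\tso$ by equipping $\bar P$ and $\bar P'$ with arbitrary observation functions (for instance constant ones). Because both are finite, the underlying functions $\bar p$, $\bar q'$, $\bar e$ are then automatically bounded morphisms $p : P \to T$, $q' : P' \to T'$, $e : P \to P'$ in $\tso$, and faithfulness of $W$ turns the commuting $\ts$-square into a commuting $\tso$-square. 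Openness of $f$ now supplies a diagonal $q : P' \to T$ in $\tso$ with $q\circ e = p$ and $f\circ q = q'$, and applying $W$ yields the desired filler $\bar q = Wq$ for the original $\ts$-square.

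For the ``if'' direction, I would assume $Wf$ is $\lin$-open and take a lift problem for $f$ in $\tso$: an arrow $e \in \lino(P,P')$ with $p : P \to T$, $q' : P' \to T'$ and $f\circ p = q'\circ e$. Applying $W$ gives a commuting square in $\ts$ with $We \in \lin$, so $\lin$-openness of $Wf$ produces $\bar q : WP' \to WT$ with $\bar q\circ We = Wp$ and $Wf\circ\bar q = Wq'$. Here finiteness does the real work: since $P'$ has finitely many states, $\bar q$ is automatically an $\epsilon$-bounded morphism $q : P' \to T$ in $\tso$ for $\epsilon = \max_{s} d(\omega_{P'}(s),\omega_T(\bar q(s)))$, and faithfulness of $W$ promotes the two commuting $\ts$-triangles to commuting $\tso$-triangles, exhibiting $q$ as the required diagonal.

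I do not expect a genuine obstacle: the single point needing care is the systematic use of the finiteness of the objects of $\lino$ to discharge the boundedness condition, both when lifting a square from $\ts$ to $\tso$ and when promoting a diagonal filler from $\ts$ back to $\tso$. The degenerate shapes, in particular the one-state system $L(\epsilon)$, are finite as well, so no case is excluded.
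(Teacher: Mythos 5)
Your proof is correct; the paper in fact omits a proof of this proposition, and your argument is the expected one. The key observation---that the boundedness side-condition is automatically satisfied by any transition-preserving map out of a finite linear system (a finite maximum of finite pseudometric distances), so that lifting problems and diagonal fillers transfer freely across the faithful functor $W$ in both directions---is exactly what makes the statement true.
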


%\begin{proof}
%Any morphism of transition systems from a finite linear system to a transition system with observation is necessarily bounded.
%\end{proof}

This defines a notion of open bisimilarity as the existence of a span of bounded open morphisms. We say that two transition systems with observations are \textbf{$\epsilon$-open bisimilar} if there is a span consisting of an $\epsilon_1$-bounded open morphism and an $\epsilon_2$-bounded open morphism, with $\epsilon_1+\epsilon_2 = \epsilon$. This coincides with the usual notion of approximate bisimilarity \cite{girard11}: an \textbf{$\epsilon$-approximate bisimulation} between two transition systems with observations is a strong bisimulation $R$ between the underlying transition systems such that:
$$(s,s') \in R \Rightarrow d(\omega(s),\omega'(s')) \leq \epsilon.$$

\begin{proposition}
\label{prop:approximate}
The following are equivalent for two systems with observations $T$ and $T'$:
\begin{enumerate}
	\item $T$ and $T'$ are $\epsilon$-open bisimilar.
	\item There is an $\epsilon$-approximate bisimulation between $T$ and $T'$.
	\item There is a span consisting of a $0$-bounded open morphism and an $\epsilon$-bounded open morphism between $T$ and $T'$.
\end{enumerate}
\end{proposition}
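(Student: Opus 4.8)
The plan is to establish the three equivalences by proving a cycle of implications, $(1)\Rightarrow(2)\Rightarrow(3)\Rightarrow(1)$, where the implication $(3)\Rightarrow(1)$ is immediate since a $0$-bounded morphism is in particular $0$-bounded, hence the span in $(3)$ witnesses $\epsilon$-open bisimilarity with $\epsilon_1=0$ and $\epsilon_2=\epsilon$. The genuine content lies in relating the span-based notions of $(1)$ and $(3)$ to the relational notion in $(2)$, and in the symmetrization trick that lets us move all the ``error'' onto one leg of the span.

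First I would prove $(1)\Rightarrow(2)$. Suppose $T\xleftarrow{f}R\xrightarrow{g}T'$ is a span where $f$ is $\epsilon_1$-bounded open and $g$ is $\epsilon_2$-bounded open, with $\epsilon_1+\epsilon_2=\epsilon$. By the preceding proposition, $Wf$ and $Wg$ are $\lin$-open in $\ts$, so by the characterization of open morphisms recalled in Section~\ref{sec:TS} their graphs are strong bisimulations between the underlying transition systems. The plan is to take the relation
$$
  S=\{(s,s')\mid \exists r\in R.\ f(r)=s \text{ and } g(r)=s'\}
$$
obtained by composing these two bisimulation relations; this is a strong bisimulation between $WT$ and $WT'$ by the usual closure of bisimulations under relational composition. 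For the observation bound, given $(s,s')\in S$ witnessed by $r\in R$, I would use the triangle inequality of the pseudometric $d$ together with the $\epsilon_i$-boundedness of $f$ and $g$:
$$
  d(\omega(s),\omega'(s'))\leq d(\omega(s),\omega_R(r))+d(\omega_R(r),\omega'(s'))\leq \epsilon_1+\epsilon_2=\epsilon,
$$
so $S$ is an $\epsilon$-approximate bisimulation.

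Next I would prove $(2)\Rightarrow(3)$, which is the step I expect to be the main obstacle, since it requires manufacturing a span whose legs are open and, crucially, whose boundedness is redistributed so that one leg is exactly $0$-bounded. Given an $\epsilon$-approximate bisimulation $R$, I would equip the carrier of $R$ with the structure of a transition system with observations and realize $R$ as a span $T\leftarrow R\rightarrow T'$. The key idea is to define the observation function on $R$ to be the restriction of $\omega$ pulled back along the left projection, i.e.\ $\omega_R(s,s')=\omega(s)$. With this choice the left leg is automatically $0$-bounded, because $d(\omega(s),\omega_R(s,s'))=d(\omega(s),\omega(s))=0$, while the right leg inherits the bound $d(\omega_R(s,s'),\omega'(s'))=d(\omega(s),\omega'(s'))\leq\epsilon$ directly from the approximate-bisimulation condition, so the right leg is $\epsilon$-bounded. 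That both projections are $\lin$-open at the level of underlying transition systems follows from the standard fact that the projection legs of a strong bisimulation are $\lin$-open morphisms (as recalled at the end of Section~\ref{subsec:observation}), and then the preceding proposition upgrades this to $\lino$-openness in $\tso$. Verifying that this observation-carrying span still has $R$ as a strong bisimulation on its underlying transition systems is routine, so the only delicate point is checking that the chosen $\omega_R$ is well defined and compatible with the transition structure, which it is by construction.

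Having closed the cycle, the equivalence of all three statements follows. The principal conceptual obstacle throughout is ensuring that the observation data can be transported coherently along the span and that the asymmetric redistribution of the error bound in $(2)\Rightarrow(3)$ respects both the metric constraint and openness simultaneously; once the observation function on $R$ is chosen to coincide with the left projection's observation, both requirements fall out cleanly from the triangle inequality and from the fact that openness is a purely underlying-transition-system property, as guaranteed by the preceding proposition.
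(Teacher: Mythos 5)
Your proposal is correct and follows essentially the same route as the paper: the same cycle $(3)\Rightarrow(1)\Rightarrow(2)\Rightarrow(3)$, the same image relation $\{(f(r),g(r))\}$ with the triangle inequality for $(1)\Rightarrow(2)$, and the same span on the carrier of $R$ with $\omega_R(s,s')=\omega(s)$ for $(2)\Rightarrow(3)$. The only detail you leave implicit that the paper spells out is the transition relation on $R$, namely $\Delta_R=\{((s,s'),a,(t,t'))\mid (s,a,t)\in\Delta\wedge(s',a,t')\in\Delta'\}$.
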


The category $\tso$ is not $\lino$-accessible because some colimits do not exist, so we do not automatically have a notion of unfolding. But we can do something by using the unfolding of the underlying transition system. Define the \textbf{unfolding} $V(T)$ of a transition system with observations as the object of $\tro$ whose underlying transition system is the unfolding in $\ts$ of the underlying transition system of $T$ and whose observation is defined as
$$\omega(q_0 \xrightarrow{a_1} \ldots \xrightarrow{a_n} q_n) = \omega(q_n).$$

\begin{proposition}
We have the following properties:
\begin{enumerate}
	\item The mapping $T\mapsto V(T)$ extends to a coreflection functor $\map{V}{\tso}{\tro}$.
	\item The unit $\map{\unf_X}{V(X)}{X}$ is $0$-bounded open.
	\item $T$ and $T'$ are $\epsilon$-approximate bisimilar iff $V(T)$ and $V(T')$ are $\epsilon$-approximate bisimilar.
\end{enumerate}
\end{proposition}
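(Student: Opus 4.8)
The plan is to lift the unfolding coreflection $\map{U}{\ts}{\tr}$ of Section~\ref{subsec:unfolding} to the setting with observations, exploiting the fact that, by definition, the observation of $V(T)$ at a run is exactly the observation of $T$ at that run's endpoint. First I would fix the action of $V$ on morphisms and settle statement~2. Given a bounded morphism $\map f{T}{T'}$, I set $Vf := U(Wf)$, which sends a run $q_0\xrightarrow{a_1}\cdots\xrightarrow{a_n}q_n$ of $T$ to the run $f(q_0)\xrightarrow{a_1}\cdots\xrightarrow{a_n}f(q_n)$ of $T'$. If $f$ is $\epsilon$-bounded then $d(\omega(q_n),\omega'(f(q_n)))\le\epsilon$, and since these two values are precisely $\omega_{V(T)}$ and $\omega_{V(T')}$ of the corresponding runs, $Vf$ is again $\epsilon$-bounded; functoriality is inherited from $U$ and $W$. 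Taking $f=\mathrm{id}$ in the same computation gives $\omega_{V(X)}(q_0\xrightarrow{a_1}\cdots\xrightarrow{a_n}q_n)=\omega_X(q_n)=\omega_X(\unf_X(q_0\xrightarrow{a_1}\cdots\xrightarrow{a_n}q_n))$, so $\unf_X$ is $0$-bounded; its openness follows from the $\ts$-level results, since $W\unf_X=\unf_{WX}$ is $\lin$-open (as the unit of the coreflection $U$ in $\ts$, by \cite{dubut16}) and a morphism of $\tso$ is $\lino$-open iff its image under $W$ is $\lin$-open. This is statement~2.

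For statement~1 I would verify the universal property of the coreflection directly. The inclusion $\maps{\tro}{\tso}$ is full and faithful, so it suffices to show that for every $Z\in\tro$ and every bounded $\map g{Z}{X}$ there is a unique bounded $\map h{Z}{V(X)}$ with $\unf_X\circ h=g$. Since $WZ$ is a synchronization tree and $U$ is a coreflection in $\ts$, there is a unique $\ts$-morphism $\map h{WZ}{WV(X)}$ with $\unf_{WX}\circ h=Wg$; the only remaining point is that $h$ respects observations with the same bound as $g$. This holds because $\omega_{V(X)}(h(z))=\omega_X(\unf_{WX}(h(z)))=\omega_X(g(z))$, using $0$-boundedness of $\unf$ together with $\unf_{WX}\circ h=Wg$, so that $d(\omega_Z(z),\omega_{V(X)}(h(z)))=d(\omega_Z(z),\omega_X(g(z)))$ directly inherits the bound of $g$. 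Uniqueness is inherited from the $\ts$-level uniqueness.

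For statement~3 I would argue directly on approximate bisimulations, since the branching part is governed by the unfolding of underlying transition systems and the quantitative part is controlled by the exact ($0$-bounded) behaviour of $\unf$ on observations. Given an $\epsilon$-approximate bisimulation $R$ between $T$ and $T'$, I would lift it to the relation $\tilde R$ on runs that relates $q_0\xrightarrow{a_1}\cdots\xrightarrow{a_n}q_n$ and $q'_0\xrightarrow{a_1}\cdots\xrightarrow{a_n}q'_n$ whenever $(q_k,q'_k)\in R$ for all $k$; a routine zig-zag, using that $R$ is a strong bisimulation between $WT$ and $WT'$, shows $\tilde R$ is a strong bisimulation between $WV(T)$ and $WV(T')$, and the observation bound is immediate because $\omega_{V(T)}$ and $\omega_{V(T')}$ read off the (necessarily $R$-related) endpoints. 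Conversely, given an $\epsilon$-approximate bisimulation $S$ between $V(T)$ and $V(T')$, I would push it down along the unfolding maps: since $\unf_T$ and $\unf_{T'}$ are $\lin$-open, their graphs are strong bisimulations, and the relation $R=\{(s,s')\mid \exists (\rho,\rho')\in S,\ \unf_T(\rho)=s,\ \unf_{T'}(\rho')=s'\}$ — the composite of $S$ with the reversed graphs of the unfolding maps — is again a strong bisimulation, composition and converse of bisimulations being bisimulations. The bound transfers because $\unf$ is $0$-bounded, so $\omega_T(\unf_T(\rho))=\omega_{V(T)}(\rho)$ exactly, and symmetrically for $T'$.

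I expect the only delicate point to be the bookkeeping of observation bounds: everything hinges on $\unf$ being $0$-bounded, i.e.\ on observations being transported with no loss, which is exactly what keeps the bound at $\epsilon$ in both directions of statement~3 and leaves the bound of $g$ unchanged in statement~1. As an alternative, statement~3 could instead be deduced from statement~1 and Proposition~\ref{prop:approximate} once the general preservation of open maps by coreflections is in hand, but the direct relational argument above avoids that dependency.
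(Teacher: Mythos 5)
Your proof is correct. The paper does not actually include a proof of this proposition (the appendix for Section~\ref{subsec:observation} only proves Proposition~\ref{prop:approximate}), but your argument---reducing statements 1 and 2 to the $\lin$-level coreflection $U$ via the forgetful functor $W$, and handling statement 3 by lifting/pushing approximate bisimulations along the $0$-bounded unfolding maps---is precisely the routine verification the paper leaves implicit, with the one genuinely delicate point (that all observation bounds are preserved exactly because $\unf$ is $0$-bounded) correctly identified and handled.
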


\section{Reflecting bisimilarity through coreflections}
\label{sec:coreflection}

A functor $\map{F}{\M'}{\M}$ is a \textbf{coreflection} if it is the right adjoint of a fully-faithful functor, or equivalently, if it is a right adjoint and the unit of this adjunction is an isomorphism. Concretely, $F$ is a coreflection if there are:
(1) a functor $\map{\iota}{\M}{\M'}$, (2) a natural transformation $\natr{\epsilon}{\iota\circ F}{\text{Id}_{\M'}}$ called \textbf{the counit}, (3) a natural isomorphism $\natr{\eta}{\text{Id}_\M}{F\circ\iota}$ called \textbf{the unit}
and these satisfy for every $X \in \M$ and $X' \in \M'$:
$$\epsilon_{\iota(X)}\circ\iota(\eta_X) = \text{id}_{\iota(X)} \text{~~~~and~~~~} F(\epsilon_{X'})\circ\eta_{F(X')} = \text{id}_{F(X')}.$$

In the case where $F$ gives the semantics of systems in $\M'$ in terms of systems in $\M$, the existence of $\iota$ and of the natural isomorphism $\eta$ tells us that $\M$ acts mostly like a subcategory of $\M'$. We will see later that the existence of the natural transformation $\epsilon$ mainly boils down to the fact that $\M$ is expressive enough to allow us to encode all the additional features from $\M'$. In the case of quantitative systems, in particular timed and hybrid systems, this will mean that $\iota$ is clever enough to encode any possible clock or any possible dynamics.

A crucial property of coreflections we use in the following is that they reflect open morphisms situations:

\begin{theorem}[Corollary 7 of \cite{joyal96}]
\label{theo:joyal}
Let $\maps\PP\M$ be a categorical model, and let $\map{F}{\M'}{\M}$ be a coreflection with fully-faithful left adjoint $\map{\iota}{\M}{\M'}$. Then for every $X$ and $Y \in \M'$, $FX$ and $FY$ are $\PP$-bisimilar iff $X$ and $Y$ are $\iota\PP$-bisimilar.
\end{theorem}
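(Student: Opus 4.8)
The plan is to reduce the biconditional to three facts about how openness transfers across the coreflection, and then to assemble the required open spans. Throughout I will use the standard properties that $\PP$-open (resp.\ $\iota\PP$-open) morphisms are closed under composition and contain all isomorphisms, together with the adjunction transpose $\M'(\iota A, B)\cong\M(A, FB)$, natural in $A$ and $B$, which sends $h\colon\iota A\to B$ to $Fh\circ\eta_A$ and $k\colon A\to FB$ to $\epsilon_B\circ\iota k$ (here $\eta$ and $\epsilon$ are the unit and counit, with $\eta$ invertible).

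First I would establish two preservation facts. \emph{($\iota$ preserves openness.)} Given a $\PP$-open $f\colon T\to T'$ and a commuting square for $\iota f$ against some $\iota e$ with $e\colon P\to P'$ in $\PP$, whose top and bottom edges are $p\colon\iota P\to\iota T$ and $q'\colon\iota P'\to\iota T'$, full-faithfulness of $\iota$ rewrites $p,q'$ uniquely as $\iota\bar p,\iota\bar q'$ and transports the square back to $\M$; openness of $f$ supplies a diagonal filler in $\M$, and applying $\iota$ returns the filler sought. This is pure bookkeeping. \emph{($F$ preserves openness.)} Given an $\iota\PP$-open $g\colon X\to Y$ and a commuting square for $Fg$ against $e\colon P\to P'$ in $\PP$, I transpose its top and bottom edges across the adjunction, obtaining a commuting square for $g$ against $\iota e$ --- and this is exactly where the choice of $\iota\PP$ as the path category is forced, since the transposed square tests $g$ against the morphism $\iota e$ of $\iota\PP$. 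Openness of $g$ yields a filler in $\M'$, which I transpose back; naturality of the transpose (equivalently of $\epsilon$) ensures the two resulting triangles commute.

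The crucial step, and the one I expect to require the most care, is to show that every counit component $\epsilon_X\colon\iota FX\to X$ is $\iota\PP$-open. Given a commuting square with left edge $\iota e$ ($e\colon P\to P'$ in $\PP$), right edge $\epsilon_X$, top $p=\iota\bar p$ with $\bar p\colon P\to FX$, and bottom $q'\colon\iota P'\to X$, I would take the candidate filler to be $\iota\bar q$, where $\bar q\colon P'\to FX$ is the transpose $Fq'\circ\eta_{P'}$ of $q'$. The lower triangle $\epsilon_X\circ\iota\bar q=q'$ holds because $\epsilon_X\circ\iota(-)$ is the inverse of the transpose and $\bar q$ was defined as the transpose of $q'$. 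For the upper triangle, full-faithfulness reduces $\iota\bar q\circ\iota e=p$ to $\bar q\circ e=\bar p$; transposing the hypothesis $\epsilon_X\circ\iota\bar p=q'\circ\iota e$ and using naturality of the transpose in the domain variable, both sides are seen to be the transpose of one and the same morphism, hence equal. Getting the directions of naturality right here is the only genuinely delicate point.

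Finally I would assemble the two implications from spans. If $FX$ and $FY$ are $\PP$-bisimilar, witnessed by a span $FX\leftarrow R\rightarrow FY$ of $\PP$-open maps, then applying $\iota$ (first step) and postcomposing the two legs with $\epsilon_X$ and $\epsilon_Y$ (third step), while using closure of open maps under composition, produces an $\iota\PP$-open span $X\leftarrow\iota R\rightarrow Y$, so $X$ and $Y$ are $\iota\PP$-bisimilar. Conversely, if $X$ and $Y$ are $\iota\PP$-bisimilar, witnessed by a span $X\leftarrow S\rightarrow Y$ of $\iota\PP$-open maps, then applying $F$ (second step) produces a $\PP$-open span $FX\leftarrow FS\rightarrow FY$, so $FX$ and $FY$ are $\PP$-bisimilar. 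This yields both directions and completes the argument.
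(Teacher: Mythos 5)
Your proof is correct and is essentially the standard argument: the paper does not reprove this statement but cites it as Corollary~7 of \cite{joyal96}, where the proof proceeds exactly as you do, by showing that $\iota$ and $F$ preserve openness, that the counit components are $\iota\PP$-open, and then transporting spans in both directions using closure of open maps under composition. All three lemmas and the final assembly check out (in particular your treatment of the transposed square in the second lemma and of the upper triangle for $\epsilon_X$ via naturality of the transpose are the right computations), so nothing further is needed.
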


Our plan is then the following. We start with a translation functor $\map{F}{\M'}{\M}$. We know $\M$ well, typically it is a category of transition systems (with observations). In particular, we know an open morphism characterization of bisimilarity in $\M$, given by a path category $\PP$. Then, if the bisimilarity in $\M'$ is defined as the bisimilarity in $\M$ through $F$, this result tells us that it is equivalent to define it directly in $\M'$ using $\iota \PP$-open morphisms. Note the construction of the path category $\iota\PP$ is canonical since adjoints are unique up to isomorphism. Moreover, with the following:

\begin{proposition}[\cite{dubut16}]
\label{theo:dubut}
When $\map{F}{\M'}{\M}$ is a coreflection and $\M$ is $\PP$-accessible, then $\M'$ is $\iota\PP$-accessible.
\end{proposition}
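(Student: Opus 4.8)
The plan is to lean on the two defining features of a coreflection: the left adjoint $\iota$ preserves all colimits, and the unit $\eta:\text{Id}_\M\Rightarrow F\iota$ is an isomorphism satisfying the triangle identity $\epsilon_{\iota X}\circ\iota(\eta_X)=\text{id}_{\iota X}$. Together these let me transfer both clauses of $\PP$-accessibility from $\M$ to $\M'$ almost mechanically. Throughout I would write $\kappa$ for colimit cocones, reserving $\eta$ for the coreflection unit (the accessibility definition above uses $\eta$ for the cocone, so renaming avoids a clash). The first reduction is that, since $\iota$ is fully faithful, it restricts to an isomorphism of $\PP$ onto the image subcategory $\iota\PP$; hence any nonempty diagram $\map{D'}{\D}{\iota\PP}$ arises as $\iota\circ D$ for a nonempty diagram $\map{D}{\D}{\PP}$. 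This identification is what makes it legitimate to pull questions about $\iota\PP$-colimits back to $\PP$. For clause (1), I would then take $(Z,\kappa)$ to be the $\M$-colimit of $D$ guaranteed by $\PP$-accessibility of $\M$; because $\iota$ is a left adjoint it preserves this colimit, so $(\iota Z,\iota\kappa)$ is an $\M'$-colimit of $\iota\circ D=D'$. This settles existence of $\M'$-colimits of diagrams in $\iota\PP$ and simultaneously identifies the colimit I must analyse for the factorization clause.

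For clause (2) the work lies in transporting a test morphism between the two categories. Given $\map{a'}{P'}{\iota Z}$ in $\M'$ with $P'=\iota P\in\iota\PP$, I would transpose across $\iota\dashv F$ to obtain $\hat a=F(a')\circ\eta_P:P\arrow F\iota Z$, and then use the unit isomorphism to set $b:=\eta_Z^{-1}\circ\hat a:P\arrow Z$, a morphism in $\M$ from a path object into the colimit $Z$. Applying the factorization clause of $\PP$-accessibility in $\M$ yields $d\in\D$ and $e:P\arrow Dd$ with $\kappa_d\circ e=b$. The natural candidate in $\M'$ is $e':=\iota(e):\iota P\arrow\iota(Dd)=D'd$, and it remains to check $(\iota\kappa)_d\circ e'=a'$. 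Unwinding the inverse transpose gives $a'=\epsilon_{\iota Z}\circ\iota(\hat a)=\epsilon_{\iota Z}\circ\iota(\eta_Z)\circ\iota(b)$, and the triangle identity $\epsilon_{\iota Z}\circ\iota(\eta_Z)=\text{id}_{\iota Z}$ collapses this to $a'=\iota(b)=\iota(\kappa_d\circ e)=(\iota\kappa)_d\circ e'$, which is exactly the required factorization through the cocone of the $\M'$-colimit.

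I expect the only genuinely delicate step to be this final computation: keeping the adjunction transpose, the unit isomorphism $\eta_Z$, and the colimit cocone $\kappa$ straight, and confirming that the triangle identity makes them collapse so that $e'=\iota(e)$ really recovers $a'$. By contrast, the existence clause is immediate once one recalls that left adjoints preserve colimits, and the passage from diagrams in $\iota\PP$ to diagrams in $\PP$ is a routine consequence of full faithfulness of $\iota$. A minor point worth stating carefully is that $\iota\PP$ should be taken as the image subcategory of $\iota$ on $\PP$, so that the cocone components $(\iota\kappa)_d$ genuinely land in $\iota\PP$ and the factorizing morphism $e'$ is a morphism of the path category, as the definition of accessibility requires.
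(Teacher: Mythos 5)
The paper does not prove this proposition; it is imported from \cite{dubut16} without an argument, so there is nothing internal to compare against. Judged on its own merits, your proof is correct and is the natural one: left adjoints preserve colimits (giving clause (1)), and the adjunction transpose together with the unit isomorphism and the triangle identity collapses $a'$ to $\iota(b)$, so the factorization in $\M$ pushes forward along $\iota$ to the required factorization in $\M'$ (equivalently, full faithfulness of $\iota$ plus $\eta_Z$ being iso already gives the bijection $\M(P,Z)\cong\M'(\iota P,\iota Z)$ directly). The one caveat you already flag --- that lifting diagrams from $\iota\PP$ back to $\PP$ needs $\iota$ to be, in effect, injective on objects of $\PP$, or $\iota\PP$ to be read as the image with its induced structure --- is the only point requiring care, and your handling of it is adequate.
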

we have that, in those cases, the categorical model $\maps{\iota\PP}{\M'}$ admits a notion of unfolding and an equivalent characterization of $\iota\PP$-bisimilarity using bisimulations \cite{dubut16}.

\section{Probabilistic systems}
\label{sec:proba}

We start with probabilistic systems as studied by Cheng et al.~in \cite{cheng95}, where they model the probabilistic bisimulations of \cite{larsen91} using open morphisms. We first give a translation from probabilistic systems to transition systems consisting of ``forgetting'' probabilities. We prove that this forms a coreflection and that we canonically recover the path category from \cite{cheng95}.

\subsection{The model}

\begin{wrapfigure}{r}{0.33\textwidth}
\vspace{-4mm}
\begin{tikzpicture}[scale=0.8]
	
	\node (eer) at (-1,0) {};	
	\node (1) at (0,0) {\scriptsize{$q_1$}};
	\node (3) at (2,-1) {\scriptsize{$q_3$}};
	\node (2) at (2,1) {\scriptsize{$q_2$}};
	\node (4) at (4,0) {\scriptsize{$q_4$}};
	
	\draw[->] (-0.7, 0) -- (1);
	\path[->, bend left = 30] (1) edge (2);
	\path[->, loop below] (1) edge (1);
	\path[->, bend right = 10] (1) edge (4);
	\path[->, bend right = 20] (2) edge (3);
	\path[->, bend left = 20] (2) edge (4);
	\path[->, bend left = 20] (4) edge (2);
	\path[->, bend left = 10] (4) edge (3);
	
	\node (t1) at (0.2, 0.9) {\scriptsize{$a, \frac{1}{2}$}};
	\node (t2) at (3.5, 0.9) {\scriptsize{$b, 0$}};
	\node (t3) at (0, -0.9) {\scriptsize{$a, \frac{1}{3}$}};
	\node (t4) at (3.5, -0.8) {\scriptsize{$a, \frac{1}{4}$}};
	\node (t5) at (2.4, 0.25) {\scriptsize{$a, \frac{1}{3}$}};
	\node (t6) at (1.5, 0.4) {\scriptsize{$b, \frac{1}{2}$}};
	\node (t7) at (1, -0.4) {\scriptsize{$b, 1$}};

\end{tikzpicture}
\vspace{-5mm}
\end{wrapfigure}
We start by describing a slight modification of partial probabilistic systems from \cite{cheng95}, to avoid the unnecessary use of hyperreals. A \textbf{partial probabilistic system} is a quadruple $(S, i, Supp, \mu)$ where $S$ is the set of \textbf{states}, $i$ is the \textbf{initial state}, $Supp \subseteq S\times\Sigma\times S$ is the \textbf{support relation}, and $\map{\mu}{Supp}{[0,1]}$ is the \textbf{transition distribution}. These data are required to satisfy that for every $s \in S$, and $a \in \Sigma$, $\{t \mid \mu(s,a,t) > 0\}$ is finite and $\sum\limits_{(s,a,t) \in Supp} \mu(s,a,t) \leq 1$. The adjective ``partial'' reflects the fact that these transition systems actually have transition \textit{sub}distributions. Note that $(S, i, Supp)$ is a transition system, and we will call it the underlying transition system.

A key point to observe is that we are distinguishing between having a transition $(s,a,t)$ with probability $0$ and not having a transition at all by our use of the support relation. This distinction prompted the use of the hyperreals in \cite{cheng95}, where the difference between probability 0 and infinitesimal probability provided the same effect. 

A \textbf{morphism of partial probabilistic systems} from $(S, i, Supp, \mu)$ to $(S', i', Supp', \mu')$ is a morphism $f$ between the underlying transition systems such that for every $(s,a,t) \in Supp$:
$$\sum\limits_{(s,a,t') \in Supp \mid f(t') = f(t)} \mu(s,a,t') \leq \mu'(f(s),a,f(t)).$$
We denote the category of partial probabilistic systems and morphisms by $\prob$.

\subsection{The coreflection}

There is an obvious translation from probabilistic systems to transition systems obtained by forgetting the transition distribution, i.e., mapping a partial probabilistic system $(S,i,Supp,\mu)$ to its underlying transition system $(S,i,Supp)$. We name this mapping $F$.

\begin{theorem}
\label{theo:proba}
  The mapping $F$ extends to a coreflection functor $\map{F}{\prob}{\ts}$.
\end{theorem}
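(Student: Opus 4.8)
The plan is to exhibit $F:\prob\to\ts$ as a right adjoint whose left adjoint $\iota$ is fully faithful, then invoke the characterization of coreflections given in Section~\ref{sec:coreflection}. First I would define $F$ on morphisms: since a morphism of partial probabilistic systems is by definition a morphism of the underlying transition systems satisfying the extra probability inequality, $F$ simply forgets that inequality, and functoriality is immediate. The real content is constructing the left adjoint $\iota:\ts\to\prob$. Given a transition system $(S,i,\Delta)$, the natural guess is to equip it with the ``maximally spread'' subdistribution: set $Supp=\Delta$ and define $\mu(s,a,t)$ so that the total outgoing mass for each pair $(s,a)$ is as large as possible while respecting $\sum_{(s,a,t)\in Supp}\mu(s,a,t)\le 1$ and finiteness of the positive support. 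A clean choice is to assign a fixed tiny positive weight to each transition, or to distribute mass uniformly; the key design constraint is that $\iota$ must be \emph{free}, i.e.\ put as little probabilistic structure as possible so that any morphism out of $\iota(T)$ in $\prob$ is forced to exist.

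Next I would verify the universal property establishing $\iota\dashv F$: for a transition system $T$ and a partial probabilistic system $X$, I must show that $\ts$-morphisms $T\to FX$ correspond bijectively and naturally to $\prob$-morphisms $\iota(T)\to X$. Unwinding, a $\prob$-morphism $\iota(T)\to X$ is a function $f$ on states that is a transition-system morphism \emph{and} satisfies the inequality $\sum_{f(t')=f(t)}\mu_{\iota(T)}(s,a,t')\le\mu_X(f(s),a,f(t))$. So the crux is to choose the weights in $\iota(T)$ small enough that this inequality is automatically satisfied by \emph{every} underlying transition-system morphism $f:T\to FX$. This is where the design of $\mu_{\iota(T)}$ matters: the weights must be dominated, after summing over fibers of $f$, by whatever positive probabilities $X$ carries. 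Since the fibers can be large and $X$'s probabilities can be arbitrarily small (but, crucially, on an existing support edge they are strictly positive, and on non-support edges there is simply no constraint because $f$ being a transition morphism sends support to support), the safe route is to make $\iota(T)$ assign weights whose fiber-sums are controlled, for instance by choosing $\mu_{\iota(T)}$ to be $0$ everywhere while keeping $Supp=\Delta$. Then the left-hand inequality reads $0\le\mu_X(\dots)$, which holds trivially, and the correspondence with $\ts$-morphisms is the identity on underlying functions.

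With that choice, the adjunction bijection is literally the identity: $F\iota(T)=T$ on the nose, the unit $\eta_T:T\to F\iota(T)$ is the identity morphism, and naturality is trivial. This simultaneously shows $\iota$ is fully faithful (its action on hom-sets is the identity, since a $\prob$-morphism $\iota(T)\to\iota(T')$ is exactly a transition morphism $T\to T'$ once all weights are $0$) and that the unit of the adjunction is an isomorphism, which by the definition in Section~\ref{sec:coreflection} is precisely what it means for $F$ to be a coreflection. The counit $\epsilon_X:\iota F X\to X$ is the identity on states, and I would check it is a legal $\prob$-morphism: its defining inequality becomes $\sum 0\le\mu_X$, again trivially true, and the triangle identities follow since every structure map is an identity on underlying sets.

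The main obstacle I anticipate is getting the definition of $\iota$ exactly right so that the counit $\epsilon$ is genuinely a morphism in $\prob$ and the triangle identities hold. The temptation is to make $\iota(T)$ carry ``generic'' positive probabilities, but then the counit inequality $\sum_{(s,a,t)\in Supp}\mu_{\iota FX}(s,a,t)\le\mu_X(s,a,t)$ could fail whenever $X$ has a genuine probability-$0$ support edge---exactly the case the paper deliberately keeps (distinguishing probability $0$ from absence of a transition). This is why assigning $\mu_{\iota(T)}\equiv 0$ while retaining the full support $Supp=\Delta$ is the essential trick: it records the branching structure in the support relation yet imposes no probabilistic constraint, making both the unit an isomorphism and the counit a valid morphism. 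Once $\iota$ is pinned down this way, every remaining verification is routine diagram-chasing on the underlying functions.
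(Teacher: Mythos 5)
Your proposal is correct and follows essentially the same route as the paper: define $\iota$ by keeping $Supp=\Delta$ and assigning the everywhere-$0$ transition distribution, so that the unit is the identity (whence $\iota$ is fully faithful) and the counit inequality reduces to $0\leq\mu$. Your closing observation about why any positive choice of weights would break the counit on probability-$0$ support edges is exactly the paper's own justification for this choice.
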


%\begin{proof}[Proof (sketch)]
%The fully faithful functor from $\ts$ to $\prob$ is obtained by defining the transition distribution being $0$ everywhere.
%\end{proof}

The fully faithful functor $\iota$ from $\ts$ to $\prob$ assigns the everywhere $0$ transition distribution to the support relation. To prove that this is a coreflection, we need to construct the counit morphism $\map{\epsilon_T}{(\iota\circ F)T}{T}$ for every partial probabilistic system $T$. Since $F$ forgets the transition distribution from $T$ and the morphism $\epsilon_T$ cannot increase the probabilities from the domain to the codomain, the transition distribution on $(\iota\circ F)T$ must be below any possible distribution on $T$. Therefore, $\iota$ must assign the everywhere $0$ distribution, and note that the uniqueness up to isomorphism of adjoints reflects this single possible choice of distribution. (This is also why Cheng et al.~fixed it to be an infinitesimal in \cite{cheng95}).

\subsection{The path category}

The path category we obtain from this coreflection consists of those partial probabilistic systems whose underlying transition system is finite linear and whose transition distribution is $0$ everywhere. This path category is equivalent to that of \cite{cheng95}, modulo the fact that infinitesimal probability in their work corresponds to probability $0$ in ours. 
%This path category is really simple, as they observed: probabilistic open morphisms only care about the underlying transition system, not the probabilities. If we restrict to total probabilistic systems, that is, those partial probabilistic systems whose transition distributions are distributions, Cheng et al.~proved that it is enough to capture probabilistic bisimilarity of \cite{larsen91}.

\section{Timed systems}
\label{sec:timed}

In this section, we start by recalling a common translation from timed systems to transition systems as described in \cite{nielsen99}, following the work from \cite{alur90, bengtsson03}. This does not form a coreflection, but we prove that it does after composing it with the unfolding functor. We then prove that we recover the path category from \cite{nielsen99} in a canonical way.

\subsection{The model}

\begin{wrapfigure}{r}{0.4\textwidth}
\vspace{-6mm}
\begin{tikzpicture}[scale=1]
	
	\node (eer) at (-1,0) {};	
	\node (1) at (0,0) {\scriptsize{$q_1$}};
	\node (3) at (2,-1) {\scriptsize{$q_3$}};
	\node (2) at (2,1) {\scriptsize{$q_2$}};
	\node (4) at (4,0) {\scriptsize{$q_4$}};
	
	\draw[->] (-0.7, 0) -- (1);
	\path[->, bend left = 30] (1) edge (2);
	\path[->] (2) edge (3); % bend right = 20
	\path[->] (4) edge (2); %, bend right = 20
	\path[->, bend left = 20] (3) edge (4);
	
	\node (t1) at (0.1, 1.2) {\scriptsize{$a, y:= 0$}};
	\node (t1p) at (0.1, 0.9) {\scriptsize{$x,y \in \rea$}};

	\node (t2) at (3.6, 1.3) {\scriptsize{$b, y := 0$}};
	\node (t2p) at (3.6, 1.0) {\scriptsize{$y \in ]1,2]$}};
	\node (t2pp) at (3.6, 0.7) {\scriptsize{$x \in \rea$}};

	\node (t4) at (3.3, -0.6) {\scriptsize{$a, \varnothing := 0$}};
	\node (t4p) at (3.3, -.9) {\scriptsize{$x \in [0, 1[$}};
	\node (t4pp) at (3.3, -1.2) {\scriptsize{$y \in \rea$}};

	\node (t6) at (1.3, 0.3) {\scriptsize{$b, \varnothing := 0$}};
	\node (t6p) at (1.3, 0) {\scriptsize{$y \in [1,1]$}};
	\node (t6pp) at (1.3, -0.3) {\scriptsize{$x \in \rea$}};

\end{tikzpicture}
\vspace{-8mm}
\end{wrapfigure}
Given a set $C$, we write $G_C$ for the set of subsets of $\rea^C$ of the form $\prod_{c \in C} I_c$, where $I_c$ is a non-negative non-empty interval. An element of $G_C$ is called a \textbf{guard}. The guards will give conditions on the values of the clocks of the system to allow a transition to be used.  A \textbf{timed transition system} is a quadruple $(S,i,C,\Delta)$ with $S$ the set of \textbf{states}, $i \in S$ the \textbf{initial state}, $C$ the set of \textbf{clocks}, and $\Delta \subseteq S\times\Sigma\times2^C\times G_C\times S$ the \textbf{transition relation}. The $2^C$ component of the transition relation describes which clocks are reset after the transition is completed.  A \textbf{morphism of timed transition systems} $(f,g)$ from $T = (S,i,C,\Delta)$ to $T' = (S',i',C',\Delta')$ is a function $\map{f}{S}{S'}$ and a function $\map{g}{C'}{C}$ such that for every $(s,a,R,\prod_{c \in C} I_c,s') \in \Delta$, there is a transition $(f(s),a,R', \prod_{c' \in C'} I'_{c'}, f(s')) \in \Delta'$ with $R' = g^{-1}(R)$ and $\forall c'\in C',\, I_{g(c')} \subseteq I'_{c'}$.

Remember that a morphism is meant to be a functional simulation, so in particular it should map runs into runs. A morphism of timed systems is then composed of two parts: a state part and a clock part. The state part is as for usual transition systems, while the clock is contravariant. This means that a morphism can essentially duplicate or forget clocks, but cannot create clocks. The reason is that, with guards, clocks gives conditions to satisfy to use a transition. Creating clocks would create new conditions that can fail in the codomain, which may forbid some runs that should not be forbidden. The two conditions then means that $c' \in C'$ acts at least as $g(c') \in C$, that is, when $g(c')$ is reset, $c'$ is reset, and guards on $c'$ are weaker than the guards on $g(c')$. Timed transition systems and morphisms of timed transition systems form a category, which we denote by $\tts$.

\subsection{The coreflection}

One usual translation of timed transition systems into labelled transition systems is given by \textbf{configurations}, where the state of the labelled transition system consists of the current state and the current values of the clocks in the timed transition system, and LTS transitions are given by an action and a time. Concretely, given a timed transition system $T = (S,i,C,\Delta)$, define the labelled transition system on the alphabet $\Sigma\times\mathbb{R}_{>0}$: $\Theta T = (S\times \rea^C, (i,\tilde{0}), \Gamma)$ where $\tilde{0}$ is the valuation that maps every clock to $0$, and $\Gamma$ is defined by $((s,\nu),(a,t),(s',\nu')) \in \Gamma$ if and only if there is $(s,a,R,\prod_{c \in C} I_c,s') \in \Delta$ such that:
$$\forall c \in C,\,\nu(c) + t \in I_c \text{~~~~and~~~~} \nu' = 
\left\{
\begin{array}{l c}
	\nu(c) + t	&	\text{if } c\notin R \\
    	0     		&	\text{if } c\in R
\end{array}
\right.$$

This extends to a functor $\map{\Theta}{\tts}{\tsrea}$, which is not quite a coreflection because the transition system $\Theta T$ has many unreachable states, namely impossible clock configurations. To find a coreflection, we consider $G = U\circ\Theta$ instead, where $U$ is the unfolding functor from Section \ref{subsec:unfolding}, which does not change the semantics modulo bisimilarity.

\begin{theorem}
\label{theo:timed}
The functor $\map{G}{\tts}{\trrea}$ is a coreflection.
\end{theorem}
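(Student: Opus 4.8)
The plan is to exhibit $G$ as a right adjoint whose unit is an isomorphism, by constructing an explicit fully faithful left adjoint $\map{\iota}{\trrea}{\tts}$. Since $U$ is the unfolding coreflection of Section~\ref{subsec:unfolding} applied to $\tsrea$ (transition systems over any alphabet, in particular $\Sigma\times\mathbb{R}_{>0}$, are accessible with respect to their finite linear systems), the only real work concerns how $\iota$ encodes timing information, and this is where the contravariant clock component of a timed morphism must be used cleverly. The guiding principle is that a morphism $\iota X \arrow T$ should correspond exactly to a tree morphism $X \arrow GT$, i.e.\ to a way of replaying each timed run of $X$ as a configuration run of $T$.

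First I would define $\iota X$ for a synchronization tree $X \in \trrea$. Its states are the nodes of $X$ and its initial state is the root. For the clocks, a single clock is not enough: a clock of an arbitrary target $T$ is reset along a whole set of transitions, so $\iota X$ must carry one clock $c_D$ for every subset $D$ of edges of $X$ (every possible reset pattern). On the edge $e\colon u \arrow x$ of $X$ labelled $(a,t)$ I would put the transition $(u, a, R_e, \prod_D I_{c_D}, x)$ that resets exactly $R_e = \{c_D \mid e \in D\}$ and whose guard on $c_D$ is the singleton interval $I_{c_D} = [\theta_D(e), \theta_D(e)]$, where $\theta_D(e)$ is the sum of the labels along the branch to $x$ since the last $D$-edge (or since the root if there is none). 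The crucial bookkeeping identity is $\theta_D(e) = \nu(c_D) + t$, where $\nu$ is the clock valuation reached at $u$ along the branch; it guarantees that every guard forces the elapsed time on $e$ to be exactly $t$. On a tree morphism $\map hX{X'}$, $\iota$ acts as $h$ on states and by $D' \mapsto h^{-1}(D')$ on clocks, and one checks the reset and guard inequalities reduce to the fact that $h$ preserves branches and labels.

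Next I would establish the two defining properties. For the unit, the bookkeeping identity shows that the reachable part of $\Theta(\iota X)$ forces a unique elapsed time on each edge and is therefore isomorphic to $X$; since $X$ is already a tree, applying $U$ returns it, so $G\iota X \cong X$ naturally, i.e.\ the unit $\natr{\eta}{\mathrm{Id}_{\trrea}}{G\iota}$ is an isomorphism. For the adjunction itself I would produce the natural bijection $\tts(\iota X, T) \cong \trrea(X, GT)$ directly: given a replay $\phi\colon X \arrow GT$, its induced assignment of a $T$-transition to each edge $e$ determines the state map $f$ (ending states) and the clock map $g(c) = c_{D_c}$, where $D_c = \{e \mid c \text{ is reset by the transition assigned to } e\}$ is the realised reset pattern of $c$. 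The equality $R^T_e = g^{-1}(R_e)$ then holds by construction, and the guard condition $I_{g(c)} \subseteq J_c$ amounts to the value $\theta_{D_c}(e)$ lying in the guard $J_c$, which is exactly the validity of the replayed run. The inverse direction reads the reset pattern of each $c$ off $g$ and recovers $\phi$.

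The main obstacle is getting the clock data of $\iota X$ exactly right so that both requirements hold at once: the contravariance of $g$ forces us to index clocks by reset patterns (subsets of edges) rather than by single states, and the singleton guards must be simultaneously tight enough to pin down the elapsed times (for the unit isomorphism) yet lie inside the target's guards under $g$ (for the morphism/adjunction condition). Verifying that the reset identity $R^T_e = g^{-1}(R_e)$ and the guard inclusions hold coherently, together with the naturality of the bijection, is the technical heart of the argument; the rest is a routine unwinding of the definitions of $\Theta$, $U$, and of timed morphisms.
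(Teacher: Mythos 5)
Your construction is essentially the paper's own: you define $\iota$ on a tree $X$ with one clock for each subset of edges (each possible reset pattern), singleton guards recording the elapsed time since the last reset along the unique branch, and the contravariant clock map given by preimage, exactly as in the appendix. The only difference is presentational --- you verify the adjunction via the hom-set bijection $\tts(\iota X,T)\cong\trrea(X,GT)$ while the paper constructs the unit and counit explicitly and checks the triangle identities --- and your clock map $g(c)=c_{D_c}$ reading off the realised reset pattern of $c$ is precisely the paper's counit component $g_T$.
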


%\ds{``as observed in \cite{cheng95}''?? Do we mean \cite{nielsen99}?}

The fully faithful functor $\iota$ from $\trrea$ to $\tts$ is quite technical: as observed in \cite{nielsen99}, getting a TTS from a TS, even finite linear, boils down to the definition of the clocks. If we come back to the description of coreflections from Section \ref{sec:coreflection}, for every TTS $T$ we need a counit $\epsilon_T$ from $(\iota\circ G)T$ to $T$. Remember that a morphism of TTS is composed of a state part (which is covariant) and of a clock part (which is contravariant). In particular, this means that for every clock of $T$, there must be a clock of $(\iota\circ G)T$ with the same resets. Concretely, given a synchronization tree on the alphabet $\Sigma\times\mathbb{R}_{>0}$ $T = (S,i,\Delta)$, we define the TTS $\iota T = (S, i, C, \Gamma)$ where:
\begin{itemize}
	\item $C = 2^\Delta$,
	\item $\Gamma$ is the set of transitions defined as follows. For every transition $(s, (a,t),s') \in \Delta$, if $i = q_0 \xrightarrow{(a_1,t_1)} \ldots \xrightarrow{(a_n,t_n)} q_n = s$ is the unique run from the initial state to $s$ in $T$, we have a transition $(s, a, R, \prod\limits_{U \subseteq C} \{t_U\}, s') \in \Gamma$ with:
	\begin{itemize}
		\item $R = \{U \subseteq \Delta \mid (s, (a,t), s') \in U\}$,
		\item $t_U = t + \sum\limits_{j = i_U +1}^{n} t_j$ where $i_U = \max\{j \leq n \mid (q_{j-1}, (a_j,t_j), q_j) \in U\}$.
	\end{itemize}
\end{itemize}
The intuition is that $C$ encodes all the possible behaviors of a clock on $T$, that is, all the possible points where a clock can be reset: a clock of $T$ is encoded by the set of transitions on which it is reset. The guard then ensures that the values of the clocks are coherent with the last time it has been reset. The fact that $T$ is a tree is important as we crucially use that there is a unique path from the initial state, and particularly, a unique sequence of reset times. As described previously, the interesting part is the definition of the clock part of $\epsilon_T$. Given a TTS $T = (S, i, C, \Gamma)$, a transition of $GT$ is of the form $$((i,\tilde{0}) = (s_0, \nu_0) \xrightarrow{(a_1,t_1)} \ldots \xrightarrow{(a_n,t_n)} (s_n,\nu_n), (a_{n+1}, t_{n+1}), ~~~~~~~~~~$$
$$~~~~~~~~~~(s_0, \nu_0) \xrightarrow{(a_1,t_1)} \ldots \xrightarrow{(a_n,t_n)} (s_n,\nu_n) \xrightarrow{a_{n+1}, t_{n+1}} (s_{n+1}, \nu_{n+1}))$$
A clock $c \in C$ is then encoded by the set of such transitions such that $\nu_{n+1}(c) = 0$, that is, the set of transitions in $GT$ that corresponds to transitions of $T$ where $c$ has been reset.

\subsection{The path category}

Describing the path category in $\tts$ then boils down to looking at the image of a finite linear transition system $T$ by $\iota$. $\iota T$ has the same states as $T$ and has the same number of transitions between two given states. Consequently, this is a finite linear TTS. The clocks are subsets of transitions of $T$, or equivalently, a subset of $\{1, \ldots, n\}$. $R_i$ is then the set of subsets $U$ of $\{1, \ldots, n\}$ such that $i \in U$. $G_i$ tells us that the value of $U$ must be equal to the time since the last time a transition of $T$ belonged to $U$. In summary, this path category is isomorphic to the one given by \cite{nielsen99}, and we have shown how to canonically recover it.

%\ds{``the one given by \cite{cheng95}''?? Do we mean \cite{nielsen99}?}

\section{Hybrid systems}
\label{sec:hybrid}

In this section, we study the case of hybrid systems and their translation to transition systems with observations. We prove that, again, when composing this translation with the unfolding functor from section \ref{subsec:observation}, we get a coreflection. This produces a new canonical path category on hybrid systems.

\subsection{The model}

\newcommand{\modebox}[6]
{
  \tikzmath{
    \a=1.5; % height of diff eq
    \aa=.3; % height of diff eq tip
    \b=.7; % height of inv
    \c=.7; % height of obs
    \x=3.5; % width of box
    \w=\x/2; % width of diff eq
    \y=\a+\b+\c; %height of box
  }
  % outer box
  \node (#6) at (0,0) [draw,minimum width=\x cm,minimum height=\y cm,anchor=south west] {};
  \node [above=0cm of #6] {#1};

  % obs
  \node (obs) at (0,0) [draw,minimum width=\x cm,minimum height=\c cm,anchor=south west] {#5};
  \node (obstip) at (0, \c) [draw,minimum width=\aa cm,minimum height=\aa cm, anchor=north west] {Ob};
  
  % inv
  \node (inv) at (0, \c) [draw,minimum width=\x cm,minimum height=\b cm,anchor=south west] {#4};
  \node (invtip) at (0, \c+\b) [draw,minimum width=\aa cm,minimum height=\aa cm, anchor=north west] {Iv};

  % diff box 1
  \node (diff1) at (0, \c + \b) [draw,minimum width=\w cm,minimum height=\a cm,anchor=south west] {#2};

  \node (diff1tip) at (0, \y) [draw,minimum width=\aa cm,minimum height=\aa cm, anchor=north west] {1};

  % diff box 2
  \node (diff2) at (\x / 2, \c + \b) [draw,minimum width=\w cm,minimum height=\a cm,anchor=south west] {#3};

  \node (diff2tip) at (\x / 2, \y) [draw,minimum width=\aa cm,minimum height=\aa cm, anchor=north west] {2};
}

In this section, we present hybrid systems similarly to definitions in the literature, e.g.~\cite{henzinger00}. A \textbf{hybrid system} is a undecuple $$(M, I, (n_i)_{i\in I}, E, (G_e)_{e\in E}, (R_{e,i})_{e\in E, i\in I}, (F_{m,i})_{m\in M, i\in I}, (I_m)_{m\in M}, m_0, \sigma_0, o)$$
\begin{wrapfigure}{r}{.6\textwidth}
\vspace{-4mm}
\begin{tikzpicture}[scale=0.6,transform shape]
  \modebox{$M_1$}{$\dot x=K_1$}{$\dot y=-K_2$}{$y\ge x-\alpha$}{$y-x$}{mone}
  \begin{scope}[shift={(7,0)}]
    \modebox{$M_2$}{$\dot x=-L_1$}{$\dot y=L_2$}{$x\ge y-\beta$}{$x-y$}{mtwo}
  \end{scope}

  \draw [-{>[scale=2.0]}]
  (mone.west) + (-2.0,0)
  to
  node [above, text width=3cm, align=center] {$x:=5$\\$y:=10$}
  (mone);
  
  \draw [bend left=15,-{>[scale=2.0]}]
  (mone) to
  node [above,text width=3cm,align=center] {$x:=x,~y:=y$\\$y\le x$}
  node [below] {switch}
  (mtwo);
  
  \draw [bend left=15,-{>[scale=2.0]}]
  (mtwo) to
  node [below,text width=3cm,align=center] {$x:=x,~y:=y$\\$x\le y$}
  node [above] {switch}
  (mone);
\end{tikzpicture}
\vspace{-6mm}
\end{wrapfigure}
where, $M$ is a set of \textbf{modes}, $I$ is a set of \textbf{subsystems}, $n_i \in \mathbb{N}$ is the \textbf{dimension} of the subsystem $i$, $E \subseteq M\times\Sigma\times M$ is a set of \textbf{events}, $G_e \subseteq \prod_{i\in I} \mathbb{R}^{n_i}$ is a \textbf{guard predicate}, $\map{R_{e,i}}{\mathbb{R}^{n_i}}{\mathbb{R}^{n_i}}$ is a \textbf{reset function}, $\map{F_{m,i}}{\mathbb{R}\times\mathbb{R}^{n_i}}{\mathbb{R}^{n_i}}$ is a continuous and locally Lipschitz in the second argument \textbf{flow function}, $I_m \subseteq \prod_{i\in I} \mathbb{R}^{n_i}$ is an \textbf{invariant predicate}, $m_0 \in M$ is the \textbf{initial mode}, $\sigma_0 = (\sigma_{i,0})_{i \in I} \in \prod_{i \in I} \mathbb{R}^{n_i}$ is the \textbf{initial valuation}, and $\map{o}{M\times\prod_{i\in I} \mathbb{R}^{n_i}}{\OO}$ is the \textbf{observation function}.
The above diagram is a graphical representation of a hybrid system. Each box corresponds to a mode of the system, and contains three data on the mode: differential equations for subsystems (tagged 1,2), the invariant (tagged Iv) and the observation function (tagged Ob). The right hand side of the differential equation at each subsystem is given by the flow function of the mode. Each arrow between mode boxes represents an event, and is annotated with a label (e.g. switch), a reset function (e.g. $x:=x,y:=y$) and a guard predicate (e.g. $x\le y$). Finally, the sourceless arrow going into $M_1$ designates the initial mode. This initial mode arrow is also annotated by the initial valuation ($x:=5,y:=10$).

\subsection{The coreflection}

To define the translation from hybrid systems to transition systems, similarly to \cite{girard07,henzinger00}, we need the notions of configurations and runs. A \textbf{configuration} is a pair consisting of a mode together with compatible values of the dynamics variables. Concretely, a configuration of a hybrid system $T$ is a pair $(m,\sigma) \in M\times \prod_{i\in I} \mathbb{R}^{n_i}$. We say that \textbf{$T$ moves from the configuration $(m,\sigma)$ to the configuration $(m', \sigma')$ by doing the action $a$ with time $t$} if the triple $e=(m, a, m')$ is in $E$, and for every $i \in I$, there is a differentiable function $\map{x_i}{[0,t]}{\mathbb{R}^{n_i}}$ such that:
\begin{itemize}
\item for every $s \in [0,t]$, $\dot{x_i}(s) = F_{m,i}(s,x_i(s))$ and $(x_i(s))_{i \in I} \in I_{m}$,
\item $(x_i(0))_{i\in I} = \sigma$ and $(x_i(t))_{i\in I} \in G_e$,
\item $\sigma' = (R_{e,i}(x_i(t)))_{i\in I}$.
\end{itemize}
The intuition is that $T$ has a discrete transition from $m$ to $m'$ with action $a$, and $\sigma'$ is obtained from $\sigma$ by applying the dynamics from mode $m$ for $t$ time units, and by resetting under the conditions given by the invariant and the guard. A \textbf{run} of $T$ is then a sequence denoted by:
$$(m_0, \sigma_0) \xrightarrow{a_1, t_1} \ldots \xrightarrow{a_k,t_k} (m_k, \sigma_k)$$
such that for every $j \in \{1, \ldots, k\}$, $T$ moves from $(m_{j-1}, \sigma_{j-1})$ to $(m_j, \sigma_j)$ by doing the action $a_j$ with time $t_j$.

 Given a hybrid system $T$, we define a transition system with observations on the alphabet $\Sigma\times\rea$ by $K T = (S,i,\Delta, \omega)$ where the states $S$ are configurations, the initial state is $i$ is $(m_0,\sigma_0)$, $((m,\sigma), (a,t), (m',\sigma')) \in \Delta$ if and only if $T$ moves from $(m,\sigma)$ to $(m', \sigma')$ by doing the action $a$ with time $t$, and the observation function is $\omega(m,\sigma) = o(m,\sigma)$.

This translation encodes continuous and discrete transitions all at once. Purely continuous transitions can be represented by an event $\text{id}_m = (m,\tau,m)$ where $\tau$ is a fresh letter to your hybrid system. Guards are then given by $G_{\text{id}_m} = \mathbb{R}^n$ and the reset functions are given by identities. Purely discrete transitions are those labelled by $(a,0)$, where the second component indicates no time has elapsed.

$K$ will not give a coreflection for the same reasons as for timed systems. So, the main idea is, again, to consider $H \triangleq V\circ K$, where $V$ is the unfolding on transition systems with observations from Section \ref{subsec:observation}, which does not affect the semantics modulo approximate bisimulations. $HT$ is then a synchronization tree with observations, whose states are runs of $T$ and whose transitions are extensions of runs with an additional move. To make $H$ into a coreflection, we need to specify a notion of morphism of hybrid systems, which has, as far as we know, never been done before. A \textbf{$\epsilon$-bounded morphism of hybrid systems} is a pair $(f_M, f_I)$ where $\map{f_M}{M}{M'}$ and $\map{f_I}{I'}{I}$ are functions. 
These data must satisfy some conditions:
\begin{itemize}
	\item $f_M$ is a morphism between the underlying transition systems given by modes and events, that is:
	\begin{itemize}
		\item $f_M(m_0) = m_0'$.
		\item if $(m,a,m') \in E$, then $(f_M(m), a, f_M(m')) \in E'$. If $e = (m,a,m')$, we denote by $f_E(e)$ the event $(f_M(m), a, f_M(m'))$.
	\end{itemize}
	\item the subsystems $i' \in I'$ and $f_I(i') \in I$ are the same, that is:
	\begin{itemize}
		\item they have the same dimension: $n_{i'} = n_{f_I(i')}$. 
		\item they have the same dynamics: $F_{m,f_I(i')}(t, x_{i'}) = F'_{f_M(m), i'}(t, x_{i'})$.
		\item they have the same reset functions: $R_{e, f_I(i')} = R'_{f_E(e), i'}$.
	\end{itemize}
	\item We can define the function $\map{f_X}{\prod_{i\in I} \mathbb{R}^{n_i}}{\prod_{i'\in I'} \mathbb{R}^{n_{i'}}}$ by $f_X((x_i)_{i\in I}) = (x_{f_I(i')})_{i' \in I'}$. This function must preserve guards, invariants and the initial valuation, that is:
	\begin{itemize}
		\item $f_X(\sigma_0) = \sigma_0'$, that is, $\sigma'_{i',0} = \sigma_{f_I(i'),0}$.
		\item if $\sigma \in I_m$, then $f_X(\sigma) \in I'_{f_M(m)}$.
		\item if $\sigma \in G_e$, then $f_X(\sigma) \in G'_{f_E(e)}$.
	\end{itemize}
	\item observations are close to each other, that is, for every run $(m_0, \sigma_0) \xrightarrow{a_1, t_1} \ldots \xrightarrow{a_k,t_k} (m_k, \sigma_k)$, $d(o(m_k,\sigma_k),o'(f_M(m_k),f_X(\sigma_k))) \leq \epsilon$.
\end{itemize}
By \textbf{bounded morphism} we mean $\epsilon$-bounded for some $\epsilon \geq 0$. Hybrid systems and bounded functions form a category that we denote by $\hysy$. The intuition underlying the notion of morphisms of hybrid systems is very similar to morphisms of timed systems: it is essentially a morphism of usual transition systems, plus a contravariant part that explicitly marks which subsystem of the domain models which subsystem of the codomain. Again, the contravariance means that we can duplicate or forget dynamics, but not create new ones, for the same reasons as clocks in timed systems.

\begin{theorem}
\label{theo:hybrid}
The functor $H:\hysy\arrow \trorea$ is a coreflection, whose unit and counit are $0$-bounded.
\end{theorem}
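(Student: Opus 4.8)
The plan is to establish the coreflection by directly exhibiting the data of Section~\ref{sec:coreflection}: a fully faithful left adjoint $\map{\iota}{\trorea}{\hysy}$, a natural isomorphism $\natr{\eta}{\mathrm{Id}}{H\circ\iota}$, a counit $\natr{\epsilon}{\iota\circ H}{\mathrm{Id}}$, and then to verify the two triangle identities and that $\eta,\epsilon$ are $0$-bounded. The argument parallels the timed case (Theorem~\ref{theo:timed}): since $H=V\circ K$ and $V$ is already the unfolding coreflection of Section~\ref{subsec:observation}, the tree-like structure and the removal of unreachable configurations are taken care of, and the only genuinely new work is the construction of a hybrid system $\iota T$ from a synchronization tree with observations $T$, playing the role that the clock construction $C=2^\Delta$ plays for timed systems. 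The guiding principle, exactly as in the timed case, is that the contravariant (subsystem) part of a hybrid morphism can only duplicate or forget subsystems, so $\iota T$ must be made rich enough in subsystems that the counit, for an arbitrary hybrid system $X$, can single out a subsystem of $\iota H X$ matching each subsystem of $X$.

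For the construction of $\iota$, given $T=(S,i,\Delta,\omega)$ in $\trorea$ whose underlying transition system is a tree on $\Sigma\times\rea$, I would take the modes of $\iota T$ to be the states $S$; since $T$ is a tree, every non-root state has a unique incoming transition carrying a unique action $a$ and time $t$, and I turn it into an event with a guard forcing the elapsed time to equal $t$. To enforce times I include a distinguished one-dimensional clock subsystem with flow $\dot x=1$, reset to $0$ on every event and pinned to the value $t$ by the relevant guard; the invariant is the whole space and the observation function is $o(m,\sigma)=\omega(m)$, constant in the continuous part. Crucially, the remaining subsystems range over \emph{all} admissible tuples of dimension, flow function per mode, reset function per event, and initial value that are \emph{consistent} with $T$, meaning that the associated differential equations admit solutions over all the prescribed time intervals along every run. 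This is the direct analogue of taking all reset patterns $2^\Delta$ in the timed case, now enlarged to range over all consistent continuous dynamics; restricting to consistent subsystems is what guarantees that no move is blocked, so that the configuration-unfolding of $\iota T$ is again $T$.

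The unit $\map{\eta_T}{T}{H\iota T}$ identifies each state of $T$ with the run of $\iota T$ ending at it: because the clock subsystem pins down the elapsed time of each event and the observation is constant per mode, the runs of $\iota T$ are in structure-preserving bijection with the states of $T$, with equal observations, so $\eta_T$ is a $0$-bounded isomorphism. For the counit $\map{\epsilon_X}{\iota H X}{X}$ with $X$ a hybrid system, the mode part sends a run of $X$ (a state of $HX$, hence a mode of $\iota H X$) to its ending mode in $X$, exactly as $\unf$ does; the contravariant subsystem part sends each subsystem $i$ of $X$ to the subsystem of $\iota H X$ whose dimension, flow per mode, reset per event, and initial value are precisely those of $i$ pulled back along the mode map. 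This target subsystem exists in $\iota H X$ by construction, and it is consistent because along every run of $X$ the subsystem $i$ does admit solutions over the required intervals, and these runs are exactly the modes and events of $\iota H X$. It then remains to check that $\epsilon_X$ preserves the initial mode, events, initial valuation, guards and invariants, and that observations agree \emph{exactly}, which gives $0$-boundedness; the final equality of observations follows because the matched dynamics and initial values force the continuous parts of corresponding configurations to coincide by induction along each run. Functoriality of $\iota$ and naturality of $\eta,\epsilon$ reduce to checking that pulling subsystem data back along a tree morphism preserves consistency and the morphism conditions, and the triangle identities then hold by the same bookkeeping as in Theorem~\ref{theo:timed}.

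The main obstacle is the continuous dynamics, which has no counterpart in the timed case where every clock evolves by the fixed, globally defined law $\dot x=1$. Two points need care. First, because a move requires \emph{all} subsystems to possess ODE solutions on the whole prescribed interval, naively including every abstract flow function would let a finite-time blow-up block a transition and destroy the isomorphism $H\iota T\cong T$; this is precisely why $\iota T$ must be built from the consistent subsystems only. Second, the counit must mirror flow functions that are merely continuous and locally Lipschitz, hence possibly non-complete, \emph{exactly} as functions, since the morphism conditions demand equality of dynamics on the nose; this is compatible with consistency because a hybrid system's own subsystems, by the very definition of its runs, admit solutions along those runs, and those runs are exactly what becomes the mode--event structure of $\iota H X$. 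Reconciling these two requirements---enough richness for the contravariant counit, yet enough consistency to avoid blocking---is the crux of the proof.
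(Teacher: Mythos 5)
Your overall architecture matches the paper's: modes of $\iota T$ are the states of the tree, events come from transitions with their (unique) times, subsystems range over tuples $(n,\sigma,(F_s)_{s},(R_e)_{e})$ of all possible dynamics, the unit identifies states with runs, and the contravariant part of the counit sends a subsystem $i$ of $X$ to the copy of its own data sitting inside $\iota H X$. Your observation that one must worry about ODE solutions failing to exist on the whole prescribed intervals is a fair point about well-definedness that the paper glosses over (it simply speaks of ``the solution on $[0,t_{j+1}]$'' for every locally Lipschitz flow), and your restriction to consistent subsystems is compatible with the counit for the reason you give.

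However, there is a genuine gap in your choice of guards and invariants for $\iota T$. You take the invariant to be the whole space and the guard to constrain only a distinguished clock coordinate. The morphism conditions are covariant on valuations: for $(f_M,f_I):T\to T'$ one needs $\sigma\in I_m\Rightarrow f_X(\sigma)\in I'_{f_M(m)}$ and $\sigma\in G_e\Rightarrow f_X(\sigma)\in G'_{f_E(e)}$, where the \emph{domain's} guard and invariant must land inside the \emph{codomain's}. For the counit $\epsilon_X:\iota HX\to X$ the codomain is an arbitrary hybrid system whose guards and invariants can be proper subsets of $\prod_i\mathbb{R}^{n_i}$; with your fat guards and invariants on $\iota HX$, $f_{T,X}$ of an arbitrary valuation satisfying only the clock constraint will in general not lie in $G_e$ or $I_m$ of $X$, so $\epsilon_X$ is simply not a morphism. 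This is why the paper makes $G_{(s,a,s')}$ the \emph{singleton} $\{(\sigma_{s,\alpha}(t))_{\alpha\in I}\}$ of valuations actually reached by integrating every subsystem along the unique run to $s$, and makes $I_s$ the set of such reachable valuations for times up to an outgoing transition time; the whole content of the paper's Lemma~\ref{lem:guainv} is that these reachable valuations are carried by $f_{T,X}$ into the guards and invariants of $X$ (by induction along the run, using uniqueness of ODE solutions). Your clock subsystem is still needed --- the paper uses exactly the subsystem $(1,0,((u,x)\mapsto 1)_s,(x\mapsto 0)_e)$ to pin down transition times when showing the unit is an isomorphism --- but it must appear as one coordinate of a singleton guard, not as the only constrained coordinate. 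Shrinking the guards and invariants this way does not harm the isomorphism $H\iota T\cong T$, since the reachable valuations are precisely those realized along runs; it is the step your proposal is missing.
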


This time the tricky part of the adjunction comes from the dynamics. Similarly to the contravariant clock part in timed systems, we have a contravariant subsystem part given by $f_I$. So the subsystem part of the counit $\epsilon_T$ from $\iota\circ HT$ to $T$ tells us that the dynamics of $\iota\circ HT$ must simulate the dynamics of $T$. Let us describe more concretely the functor $\iota$ and the subsystem part of the counit. Given a synchronization tree with observations $T = (S, i, \Delta, \omega)$ in $\trorea$, we define the following hybrid system $\iota T$:
\begin{itemize}
	\item Its modes are states $S$.
	\item Its events are $E = \{(s,a,s') \mid \exists t \in \rea, \, (s, (a,t), s') \in \Delta\}$. Observe that the existing $t$ is necessarily unique, since $T$ is a tree.
	\item Its subsystems $I$ are given by tuples $(n,\sigma, (F_s)_{s\in S}, (R_e)_{e\in E})$ with:
		\begin{itemize}
			\item $\sigma \in \mathbb{R}^n$.
			\item $\map{F_s}{\mathbb{R}\times\mathbb{R}^n}{\mathbb{R}^n}$ continuous and locally Lipschitz in the second variable.
			\item $\map{R_e}{\mathbb{R}^n}{\mathbb{R}^n}$.
		\end{itemize}
	The idea is that they contain all the possible kinds of subsystems, and that they are indexed by their own content.
	\item Before describing guards and invariants, let us describe the dynamics more carefully as follows. Given a state $s$ of $T$, let $\pi$ be the unique run from $i$ to $s$ in $T$ (which is a tree):
	$$i = s_0 \xrightarrow{a_1, t_1} \ldots \xrightarrow{a_k,t_k} s_k = s.$$
	Given additionally $\alpha = (n, \sigma, (F_{s})_{s \in S}, (R_{e})_{e \in E}) \in I$ and $t \in \rea$, define $\sigma_{s, \alpha}(t) \in \mathbb{R}^n$ as $x_{\alpha}^k(t)$, where $x_{\alpha}^j$ is the solution on $[0, t_{j+1}]$ (with the convention that $t_{k+1} = t$) of the equation $\dot{x}(u) = F_{s_j}(u,x(u))$ with the following initial condition:
		\begin{itemize}
			\item $x(0) = \sigma$ if $j = 0$,
			\item $x(0) = R_{(s_{j-1}, a_{j}, s_{j})}(x_{\alpha}^{j-1}(t_{j}))$ if $j > 0$.
		\end{itemize}
	Those $\sigma_{s,\alpha}$ give the values of all possible dynamics by following the times given by the unique run $\pi$.
	\item $G_{(s,a,s')} = \{(\sigma_{s,\alpha}(t))_{\alpha \in I}\}$, where $t$ is the unique time such that $(s, (a,t), s') \in \Delta$.
	\item The invariant $I_s$ is given similarly as the set of all $(\sigma_{s,\alpha}(t))_{\alpha\in I}$, with $t$ such that there is a transition $(s, (a, t'), s') \in \Delta$, with $t \leq t'$. This invariant lets us use the dynamics as long as we need, that is, as long as we still can pass a guard.
	\item $R_{(s,a,s'), (n, \sigma, (F_s)_{s\in S}, (R_e)_{e\in E})} = R_{(s,a,s')}$,
	\item $F_{s, (n,\sigma, (F_s)_{s \in s}, (R_e)_{e \in E})} = F_s$,
	\item $m_0 = i$,
	\item $\sigma_0 = (\sigma)_{(n,\sigma, (F_s)_{s\in S}, (R_e)_{e \in E}) \in I}$,
	\item $o(s, \sigma) = \omega(s)$.
\end{itemize}
Now, to describe the subsystem part of the counit from $\iota \circ H T$ to $T$, for a hybrid system $T$, observe first that the subsystems of $(\iota\circ H) T$ are tuples of the form $$(n, \sigma, (F_s)_{s \text{ state of } HT}, (R_e)_{e \text{ transition of } HT})$$ where $n$ is a natural number, $\sigma \in \mathbb{R}^n$, $(F_s)_{s \text{ state of } HT}$ is a collection of functions indexed by states of $HT$, that is, runs of $T$, and $(R_e)_{e \text{ transition of } HT}$ is a collection of functions indexed by transitions of $HT$, that is, extension of runs. Given a subsystem $i$ of $T$, this subsystem will be encoded by the following tuple:
\begin{itemize}
	\item the natural number is $n_i$,
	\item the vector is $\sigma_i$ where $\sigma_0 = (\sigma_i)_{i\in I}$,
	\item $F_s$, where $s$ is a run of $T$ whose last state is of the form $(m, \sigma)$, is given by $F_{m,i}$,
	\item $R_e$, where $e$ is an extension of runs whose added transition is of the form $(m_1, \sigma_1) \xrightarrow{a,t} (m_2, \sigma_2)$, is given by $R_{(m_1, a, m_2), i}$.
\end{itemize}
In other words, $i$ is modeled essentially by a copy of itself that we can find in $(\iota\circ H) T$.

\subsection{The path category}

The path category is given by those hybrid systems, whose underlying transition system, that is, whose states are modes, and transitions are events, is finite linear, subsystems encodes every possible dynamics and resets, guards allow us to leave a mode precisely at the time given by the transition and the invariants allow us to stay in this mode as long as we need before leaving it.

\section{Conclusion}

In this paper, we have shown how coreflections can be used to canonically define a path category for systems with quantitative information (probabilistic, timed, and hybrid). For the cases of probabilistic and timed systems, we recover the path categories from the literature, except that, in the present paper, they come from a canonical categorical constructions, which makes some of the main theorems from \cite{cheng95, nielsen99} automatic. We also use the same ideas to construct a path category for hybrid systems, which is a novelty. 

As a future work, we would like to understand more clearly the crucial role of the unfolding in those coreflections, and see if this could be a general pattern, for example, for $\PP$-accessible categories. We would also like to use the path category on hybrid systems in practice to help proving that such systems are approximate bisimilar, similarly to what have been done in \cite{nielsen99} for timed systems.

%\section*{Acknowledgement}
%
%The authors were supported by ERATO HASUO Metamathematics for Systems Design Project (No.~JPMJER1603), JST.

\bibliography{coreflections}

\pagebreak
\appendix
\section{Omitted proofs from Section \ref{subsec:observation}}

\begin{proof}[Proof (of Proposition \ref{prop:approximate})]~
\begin{itemize}
	\item[iii) $\Rightarrow$ i)] Obvious.
	\item[i) $\Rightarrow$ ii)] Given such a span $\map{f}{T''}{T}$ and $\map{g}{T''}{T'}$, define the relation $R = \{(f(s''), g(s'')) \mid s'' \in S''\}$. Since $\lino$-open are $\lin$-open, then we know from the case of usual transition system that $R$ is a strong bisimulation. Finally, given $s'' \in S''$, we need to prove that $d(\omega(f(s'')),\omega'(g(s''))) \leq \epsilon$. This is proved as follows:
\begin{center}
\begin{tabular}{rclcl}
    $d(\omega(f(s'')),\omega'(g(s'')))$ & $\leq$ & $d(\omega(f(s'')),\omega''(s'')) + d(\omega''(s''),\omega'(g(s'')))$ & ~~~~~ & (triangular inequality)\\
    & $=$ & $d(\omega(f(s'')),\omega''(s'')) + d(\omega'(g(s'')),\omega''(s''))$ & ~~~~~ & (symmetry)\\
    & $\leq$ & $\epsilon_1 + \epsilon_2$ & ~~~~~ & (boundedness of $f$ and $g$) \\
    & $=$ & $\epsilon$ & ~~~~~ & (assumption)
\end{tabular}
\end{center}
	\item[ii) $\Rightarrow$ iii)] Given an $\epsilon$-approximate bisimulation $R$ between $T$ and $T'$, define $T_R = (R, (i,i'), \Delta_R, \omega_R)$ the following transition system with observations:
	\begin{itemize}
		\item $\Delta_R = \{((s,s'), a, (t,t')) \mid (s,a,t) \in \Delta \wedge (s',a,t') \in \Delta'\}$,
		\item $\omega_R(s,s') = \omega(s)$.
	\end{itemize}
	Define $f$ (resp. $g$) be the first (resp. second) projection. Since $R$ is an approximate bisimulation, then it is a strong bisimulation between the underlying transition systems, so we know, by the case of usual transition systems, that $f$ and $g$ are $\lin$-open and so $\lino$-open. $f$ is $0$-bounded by definition, and $g$ is $\epsilon$-bounded because $R$ is $\epsilon$-approximate.
\end{itemize}
\end{proof}

\section{Omitted proofs on probabilistic systems}

The translation functor is defined as $F(S,i,Supp, \mu) = (S,i,Supp)$. Given a morphism of partial probabilistic systems $f$ from $(S, i, Supp, \mu)$ to $(S', i', Supp', \mu')$, we define $F(f) = f$, which is a morphism of transition systems by definition. This is a functor.

$\map{\iota}{\ts}{\prob}$ is defined on the objects as $\iota(S,i,\Delta) = (S, i, \Delta, 0_\Delta)$, where $0_\Delta$, the function from $\Delta$ to $[0,1]$, equal to $0$ everywhere. Given a morphism of transition systems $f$ from $(S,i,\Delta)$ to $(S', i', \Delta')$, $\iota(f) = f$ is a morphism of partial probabilistic systems since for every $(s,a,t) \in \Delta$:
$$\sum\limits_{(s,a,t') \in \Delta \mid f(t') = f(t)} 0_\Delta(s,a,t') = 0 = 0_{\Delta'}(f(s),a,f(t)).$$

For every transition system $T$, $F\circ\iota(T) = T$, so we can define the unit $\eta_T$ as $\text{id}_T$. This is a natural isomorphism.

For every partial probabilistic system $T = (S,i,Supp,\mu)$, $\iota\circ F(T) = (S,i,Supp,0_{Supp})$. So if we define $\eta_T$ as $\text{id}_S$, this is a morphism of partial probabilistic systems from $\iota\circ F(T)$ to $T$, since for every $(s,a,t) \in Supp$:
\begin{center}
\begin{tabular}{rclcl}
    $\sum\limits_{(s,a,t') \in Supp \mid \text{id}_S(t') = \text{id}_S(t)} 0_{Supp}(s,a,t')$ & $=$ & $0_{Supp}(s,a,t)$ & ~~~~~ & (identity)\\
    & $=$ & $0$ & ~~~~~ & (definition of $0_{Supp}$)\\
    & $\leq$ & $\mu(s,a,t)$ & ~~~~~ & (positivity of $\mu$)
\end{tabular}
\end{center}

The naturality of $\epsilon$ and the two conditions are obvious, since $\eta$ and $\epsilon$ are identity functions.

\section{Omitted proofs on timed systems}

\begin{lemma}
$G$ extends to a functor.
\end{lemma}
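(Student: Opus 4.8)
The plan is to show that the assignment $G = U \circ \Theta$ given on objects extends to morphisms in a way that respects identities and composition. Since $G$ is already defined as a composite $U \circ \Theta$, and $U$ is known to be a functor (it is the unfolding functor from Section~\ref{subsec:unfolding}, shown to extend to a coreflection in Proposition~\cite{dubut16}), the real content is to verify that $\Theta : \tts \arrow \tsrea$ is a functor; functoriality of $G$ then follows by composition. So first I would make the object part of $\Theta$ precise (already given in the text) and then define its action on a morphism $(f,g) : (S,i,C,\Delta) \arrow (S',i',C',\Delta')$ of timed transition systems.

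The key step is defining $\Theta(f,g)$ on the configuration states $S \times \rea^C$ of $\Theta T$. The natural candidate sends a configuration $(s,\nu)$ to $(f(s), \nu \circ g)$, where $\nu \circ g : C' \arrow \rea$ precomposes the clock valuation $\nu : C \arrow \rea$ with the contravariant clock map $g : C' \arrow C$. Note this is a map on the alphabet $\Sigma \times \mathbb{R}_{>0}$ that fixes the time-and-action labels, so I must check that it is a genuine morphism in $\tsrea$, i.e.\ that it preserves transitions. Concretely, I would take a transition $((s,\nu),(a,t),(s',\nu')) \in \Gamma$, which by definition arises from some $(s,a,R,\prod_{c\in C} I_c, s') \in \Delta$ with $\nu(c)+t \in I_c$ for all $c$ and $\nu'$ obtained by resetting the clocks in $R$. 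Using that $(f,g)$ is a morphism of timed transition systems, there is a transition $(f(s),a,R',\prod_{c'\in C'} I'_{c'}, f(s')) \in \Delta'$ with $R' = g^{-1}(R)$ and $I_{g(c')} \subseteq I'_{c'}$ for every $c'$. I would then verify that this transition witnesses $((f(s),\nu\circ g),(a,t),(f(s'),\nu'\circ g)) \in \Gamma'$: the guard condition $(\nu\circ g)(c') + t = \nu(g(c')) + t \in I_{g(c')} \subseteq I'_{c'}$ holds, and the reset behaviour matches because $c' \in R' = g^{-1}(R)$ exactly when $g(c') \in R$, so that $(\nu' \circ g)(c')$ is correctly $0$ on $R'$ and $(\nu\circ g)(c')+t$ off $R'$.

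Once $\Theta(f,g)$ is checked to be a morphism of $\tsrea$, functoriality is routine: $\Theta$ preserves identities since $\mathrm{id}_{(S,i,C,\Delta)} = (\mathrm{id}_S, \mathrm{id}_C)$ yields $(s,\nu) \mapsto (s,\nu)$, and it preserves composition because $(f' \circ f, g \circ g')$ induces $(s,\nu) \mapsto (f'(f(s)), \nu \circ g \circ g')$, which is the composite of the two induced state maps (the contravariance of the clock part reverses, matching the contravariant precomposition with valuations). Finally, since $U$ is a functor, $G = U \circ \Theta$ is a functor as a composite. The main obstacle I anticipate is the verification that $\Theta(f,g)$ preserves transitions, as this is the one step that genuinely uses the two defining conditions $R' = g^{-1}(R)$ and $I_{g(c')}\subseteq I'_{c'}$ on morphisms of timed transition systems; everything else is bookkeeping. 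In particular, care is needed to confirm that the reset on the target configuration is computed with respect to $R'$ and agrees with precomposing the reset source valuation $\nu'$ by $g$, which is precisely where the identity $g^{-1}(R)$ plays its role.
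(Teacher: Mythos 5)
Your proposal is correct and follows essentially the same route as the paper: reduce to functoriality of $\Theta$ since $U$ is already a functor, define $\Theta(f,g)(s,\nu) = (f(s), \nu\circ g)$, and verify transition preservation using the guard inclusion $I_{g(c')}\subseteq I'_{c'}$ and the reset identity $R' = g^{-1}(R)$. The only difference is that you also spell out preservation of identities and composition, which the paper leaves implicit.
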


\begin{proof}
We have defined $G$ as $U\circ\Theta$, where $U$ is the unfolding functor. To prove it is a functor, we have to prove that $\Theta$ is a functor. Let $(f,g)$ be a morphism of TTS from $T = (S,i,C,\Delta)$ to $T' = (S',i',C',\Delta')$. We want a morphism $\Theta (f,g)$ of TS from $\Theta T = (S\times \rea^C, (i,\tilde{0}), \Gamma)$ to $\Theta T' = (S'\times \rea^{C'}, ('i,\tilde{0}), \Gamma')$. We define $\Theta(f,g)(s,\nu) = (f(s),\nu\circ g)$ ($\nu \in \rea^C$ is identified with a function from $C$ to $\rea$). Let us check that this is a morphism of TS. Let $((s,\nu),(a,t),(s',\nu')) \in \Gamma$ and let $(s,a,R,\prod\limits_{c \in C} I_c,s') \in \Delta$ the associated transition in $T$ in the definition of $\Theta T$. Since $(f,g)$ is a morphism of TTS, there is an associated transition $(f(s),a,R',\prod\limits_{c' \in C'} I'_{c'},f(s'))$ in $T'$. Let us prove that $((f(s),\nu\circ g), (a,t), (f(s'), \nu'\circ g))$ is a transition of $\Theta T'$ using this transition of $T'$. That is, we have to prove that:
\begin{itemize}
	\item for every $c' \in C$, $\nu\circ g(c') + t \in I'_{c'}$: $\nu(g(c')) + t \in I_{g(c')} \subseteq I'_{c'}$.
	\item $\nu'\circ g = (\nu\circ g + t)[R' := 0]$: $\nu' = (\nu + t)[R := 0]$, so $\nu'\circ g = (\nu + t)[R := 0]\circ g = (\nu\circ g + t)[g^{-1}(R) := 0] = (\nu\circ g + t)[R' := 0]$.
\end{itemize}
\end{proof}

\begin{lemma}
$\iota$ extends to a functor.
\end{lemma}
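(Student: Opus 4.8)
The plan is to define $\iota$ on morphisms, observing that its clock component must be a preimage map, and then to verify the two morphism-of-$\tts$ conditions together with the functor axioms; the only non-formal point is the guard equation, and this is where the tree hypothesis is used.

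First I would fix the action on arrows. A morphism $f$ in $\trrea$ from $T = (S,i,\Delta)$ to $T' = (S',i',\Delta')$ is just a map of underlying transition systems on the alphabet $\Sigma\times\mathbb{R}_{>0}$, so it preserves the initial state and sends each transition $(s,(a,t),s')$ to $(f(s),(a,t),f(s'))\in\Delta'$; write $f_\Delta:\Delta\to\Delta'$ for the induced map on transitions. The clocks of $\iota T$ and $\iota T'$ are $C=2^\Delta$ and $C'=2^{\Delta'}$, and a morphism of $\tts$ carries a contravariant clock part, so I would set $\iota f=(f,g)$ with $g:2^{\Delta'}\to2^\Delta$ the preimage map $g(U')=f_\Delta^{-1}(U')$.

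Next I would check that $(f,g)$ is a morphism of $\tts$. The state part preserves the initial state by hypothesis. Fix a transition $(s,a,R,\prod_{U\in C}\{t_U\},s')$ of $\iota T$, coming from $(s,(a,t),s')\in\Delta$ along the unique run $i=q_0\xrightarrow{(a_1,t_1)}\cdots\xrightarrow{(a_n,t_n)}q_n=s$. Since $f$ is a tree morphism it transports this to $i'=f(q_0)\xrightarrow{(a_1,t_1)}\cdots\xrightarrow{(a_n,t_n)}f(q_n)=f(s)$, which is the \emph{unique} run to $f(s)$ because $T'$ is a tree; hence $(f(s),(a,t),f(s'))$ yields a transition $(f(s),a,R',\prod_{U'\in C'}\{t'_{U'}\},f(s'))$ of $\iota T'$. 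For the resets, $g^{-1}(R)=\{U'\mid f_\Delta^{-1}(U')\ni(s,(a,t),s')\}=\{U'\mid(f(s),(a,t),f(s'))\in U'\}=R'$. For the guards, writing $t'_{U'}=t+\sum_{j>k'}t_j$ with $k'=\max\{j\mid(f(q_{j-1}),(a_j,t_j),f(q_j))\in U'\}$ and $t_{g(U')}=t+\sum_{j>k}t_j$ with $k=\max\{j\mid(q_{j-1},(a_j,t_j),q_j)\in f_\Delta^{-1}(U')\}$, the defining equivalence $(q_{j-1},(a_j,t_j),q_j)\in f_\Delta^{-1}(U')\iff(f(q_{j-1}),(a_j,t_j),f(q_j))\in U'$ forces $k=k'$, hence $t_{g(U')}=t'_{U'}$ and so $\{t_{g(U')}\}\subseteq\{t'_{U'}\}$ as required. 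Functoriality is then routine: $\iota(\mathrm{id}_T)=(\mathrm{id}_S,\mathrm{id}_{2^\Delta})$, and since state parts compose covariantly while clock parts are preimage maps, $(f'\circ f)_\Delta=f'_\Delta\circ f_\Delta$ gives $(f'\circ f)_\Delta^{-1}=f_\Delta^{-1}\circ(f'_\Delta)^{-1}$, matching the contravariant composition of clock parts in $\tts$.

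I expect the guard identity $t_{g(U')}=t'_{U'}$ to be the main obstacle, since it is the only step that is not purely formal bookkeeping. It uses the tree hypothesis twice: once so that $f$ carries the unique run at $s$ to the unique run at $f(s)$ (guaranteeing both offset families are computed along the same label-and-time sequence), and once so that applying $f$ commutes with selecting the last index at which a clock is reset. Initiality, the reset equation, and functoriality all fall out immediately from the choice of $g$ as a preimage map.
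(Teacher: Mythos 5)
Your proof is correct and follows essentially the same route as the paper: you define the clock part of $\iota f$ as the preimage map on subsets of transitions, verify the reset condition $R' = g^{-1}(R)$ by unwinding definitions, and establish the guard identity $t_{g(U')} = t'_{U'}$ by comparing the last-reset indices along the unique run and its image, exactly as the paper does. The only difference is that you additionally spell out the functor axioms (identities and contravariant composition of clock parts), which the paper leaves implicit.
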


\begin{proof}
Given a morphism of TS between two trees $\map{f}{T = (S,i,\Gamma)}{T' = (S',i',\Gamma')}$, we have to defined a morphism of TTS $\map{\iota(f) = (f,g)}{\iota(T) = (S,i,C,\Delta)}{\iota(T') = (S',i',C',\Delta')}$, with $\map{g}{2^{\Gamma'}}{2^\Gamma}$ maps $U'$ to $\{(s,(a,t), s') \in \Gamma \mid (f(s), (a,t), f(s')) \in U'\}$. We have to prove that $\iota(f)$ is a morphism of TTS. Given $(s, (a,t), s') \in \Gamma$, let $(s, a, R, \prod\limits_{U \subseteq \Gamma} \{t_U\}, s')$ be the associated transition in $\Delta$. Since $f$ is a morphism of transition systems, then $(f(s), (a,t), f(s')) \in \Gamma'$. Let $(f(s), a, R', \prod\limits_{U' \subseteq \Gamma'} \{t'_{U'}\}, f(s'))$ be the associated transition in $\Delta'$. We must prove that:
\begin{itemize}
	\item $R' = g^{-1}(R)$: 
\begin{center}
\begin{tabular}{rclcl}
    $g^{-1}(R)$ & $=$ & $g^{-1}(\{U \subseteq \Gamma \mid (s,(a,t),s') \in U\})$ & ~~~~~ & (definition of $R$)\\
    & $=$ & $\{U' \subseteq \Gamma' \mid (s,(a,t),s') \in g(U')\}$ & ~~~~~ & \\
    & $=$ & $\{U' \subseteq \Gamma' \mid (f(s), (a,t), f(s')) \in U'\}$ & ~~~~~ & (definition of $g$)\\
    & $=$ & $R'$ & ~~~~~ & (definition of $R'$)
\end{tabular}
\end{center}
	\item for every $U' \subseteq \Gamma'$, $t_{U'} = t_{g(U')}$: if $i = q_0 \xrightarrow{(a_1,t_1)} \ldots \xrightarrow{(a_n,t_n)} q_n = s$ is the unique run from the initial state to $s$ in $T$, then if $i' = f(q_0) \xrightarrow{(a_1,t_1)} \ldots \xrightarrow{(a_n,t_n)} f(q_n) = s$ is the unique run from the initial state to $f(s)$ in $T'$, since $f$ is a morphism. So $t_{U'} = t_{g(U')}$ is a consequence of the fact that $(q_{j-1}, (a_j,t_j), q_j) \in g(U')$ iff $(f(q_{j-1}), (a_j, t_j), f(q_j)) \in U'$ by definition of $g$.
\end{itemize}
\end{proof}

Given a TTS $T = (S,i,C,\Delta)$, $\iota\circ GT$ is of the form $(S', (i,\tilde(0)), 2^{\Delta'}, \Gamma)$, where:
\begin{itemize}
	\item $S'$ is the set of runs of $\Theta T$,
	\item $\Delta'$ is the set of transitions of $G T$.
\end{itemize}
The counit $\epsilon_T$ is defined as $(f_T, g_T)$ with:
\begin{itemize}
	\item $f_T$ maps a run of $\Theta T$, which is of the form $$(q_0, \nu_0) \xrightarrow{(a_1,t_1)} \ldots \xrightarrow{(a_n,t_n)} (q_n, \nu_n)$$
	to the state $q_n$.
	\item $g_T$ maps a clock $c$ of $C$ to the set of transitions 
	$$((i,\tilde{0}) = (s_0, \nu_0) \xrightarrow{(a_1,t_1)} \ldots \xrightarrow{(a_n,t_n)} (s_n,\nu_n), (a_{n+1}, t_{n+1}), ~~~~~~~~~~$$
$$~~~~~~~~~~(s_0, \nu_0) \xrightarrow{(a_1,t_1)} \ldots \xrightarrow{(a_n,t_n)} (s_n,\nu_n) \xrightarrow{a_{n+1}, t_{n+1}} (s_{n+1}, \nu_{n+1}))$$
of $GT$ with $\nu_{n+1}(c) = 0$.
\end{itemize}

\begin{lemma}
$\epsilon_T$ is a morphism of TTS.
\end{lemma}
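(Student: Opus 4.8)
The plan is to verify that the pair $(f_T, g_T)$ just defined satisfies the two defining conditions of a morphism of TTS, namely that each transition of $\iota \circ GT$ maps to a transition of $T$ with the correct reset set (via $g_T^{-1}$) and with guards that are weakened in the appropriate direction. First I would fix an arbitrary transition of $\iota \circ GT = (S', (i,\tilde{0}), 2^{\Delta'}, \Gamma)$. By the definition of $\iota$ (applied to the synchronization tree $GT$), such a transition is built from a transition $(x, (a,t), x') \in \Delta'$, where $x$ is a run of $\Theta T$ ending in some $(q_n, \nu_n)$ and $x'$ extends it by the move $(a_{n+1}, t_{n+1}) = (a,t)$ reaching $(q_{n+1}, \nu_{n+1})$. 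The associated transition in $\Gamma$ has reset set $R = \{U \subseteq \Delta' \mid (x,(a,t),x') \in U\}$ and singleton guards $\prod_{U} \{t_U\}$ with $t_U$ the time elapsed since the run last took a transition lying in $U$.

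Next I would unwind what happens under $(f_T, g_T)$. Since $x$ ends at $(q_n, \nu_n)$ and $x'$ ends at $(q_{n+1}, \nu_{n+1})$, the image states are $f_T(x) = q_n$ and $f_T(x') = q_{n+1}$. Because $(x,(a,t),x') \in \Delta'$ is a transition of $GT = U\Theta T$, it arises from a transition $((q_n,\nu_n),(a,t),(q_{n+1},\nu_{n+1})) \in \Gamma$ of $\Theta T$, which in turn witnesses a transition $(q_n, a, R_T, \prod_{c\in C} I_c, q_{n+1}) \in \Delta$ of the original TTS $T$. This is the target transition I must match. It remains to check the two conditions against it: (1) $g_T^{-1}(R_T)$ equals the reset set $R$ computed above, and (2) for each clock $c \in C$, the guard singleton value on the $c$-coordinate of the domain is contained in $I_c$.

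For condition (1), I would compute directly: $g_T^{-1}(R_T) = \{U \subseteq \Delta' \mid g_T(\text{?})\}$ — more precisely, since $R_T \subseteq C$ is a set of clocks and $g_T : C \to 2^{\Delta'}$, the required identity is $R = \{U \mid (x,(a,t),x') \in U\}$ versus the condition coming from which clocks $c$ are reset by the original transition. The key bridge is the definition of $g_T(c)$ as the set of transitions of $GT$ whose target valuation has $\nu_{n+1}(c) = 0$; a clock $c$ is reset exactly when $c \in R_T$, which by the definition of $\Gamma$ in $\Theta T$ is equivalent to $\nu_{n+1}(c) = 0$ having been forced by the reset. I would make this equivalence precise and thereby identify the two sets. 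For condition (2), I would use that the guard singleton $t_U$ records the time since $U$ was last entered, and that under $g_T$ the clock $c$ corresponds to $U = g_T(c)$; the value $\nu_n(c) + t$ appearing in the definition of $\Gamma$ for $\Theta T$ is exactly $t_{g_T(c)}$, and it lies in $I_c$ precisely because $((q_n,\nu_n),(a,t),(q_{n+1},\nu_{n+1}))$ was a genuine transition of $\Theta T$ satisfying $\nu_n(c) + t \in I_c$.

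I expect the main obstacle to be condition (1), the matching of reset sets through the contravariant clock map $g_T$: one must carefully track how a clock $c$ of $T$, an element of the power set $2^{\Delta'}$ in $\iota\circ GT$, and the reset bookkeeping in the unfolding $\Theta T$ all correspond, and confirm that the \emph{last-reset} information encoded by $t_U$ is consistent along the unique run of the tree $GT$. The uniqueness of runs in the tree is what makes the time-value $t_U$ well-defined and lets the verification go through; I would flag explicitly where that tree property is used. The guard-containment condition (2) should then be a routine consequence of the definition of $\Gamma$ on $\Theta T$, requiring no more than substituting the correct clock-to-subset correspondence.
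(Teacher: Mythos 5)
Your plan matches the paper's own proof: both unwind a transition of $\iota\circ GT$ back through $GT$ and $\Theta T$ to the underlying transition $(s_n,a_{n+1},R',\prod_{c\in C}I_c,s_{n+1})$ of $T$, verify the reset condition by characterizing $g_T^{-1}(R)$ as the clocks with $\nu_{n+1}(c)=0$, and verify the guard condition via the identity $t_{g_T(c)}=\nu_n(c)+t_{n+1}\in I_c$ (which needs the small induction showing $\nu_n(c)=\sum_{j=i_c+1}^{n}t_j$). The one detail you leave implicit that the paper states explicitly is that the equivalence between $c\in R'$ and $\nu_{n+1}(c)=0$ relies on $t_{n+1}>0$, guaranteed because the alphabet is $\Sigma\times\mathbb{R}_{>0}$.
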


\begin{proof}
Assume given a transition $\delta$ of $\iota\circ GT$. Then $\delta_1 = (\pi, a_{n+1}, R, \prod\limits_{U \subseteq \Delta'} \{t_U\}, \pi')$ is defined from a transition $\delta_2$:
$$((i,\tilde{0}) = (s_0, \nu_0) \xrightarrow{(a_1,t_1)} \ldots \xrightarrow{(a_n,t_n)} (s_n,\nu_n), (a_{n+1}, t_{n+1}), ~~~~~~~~~~$$
$$~~~~~~~~~~(s_0, \nu_0) \xrightarrow{(a_1,t_1)} \ldots \xrightarrow{(a_n,t_n)} (s_n,\nu_n) \xrightarrow{a_{n+1}, t_{n+1}} (s_{n+1}, \nu_{n+1}))$$
of $G T$. This means that $((s_n,\nu_n), (a_{n+1}, t_{n+1}), (s_{n+1}, \nu_{n+1}))$ is a transition in $\Theta T$. This transition comes from a transition $(s_n, a_{n+1}, R', \prod\limits_{c \in C} I_c,s_{n+1}) \in \Delta$. Let us prove that this is the transition we are looking for:
\begin{itemize}
	\item $R' = g_T^{-1}(R)$: by construction, $R$ is the set of subsets of transitions of $GT$ that contains $\delta_2$. So $g_T^{-1}(R)$ is the set of clocks $c$ of $T$ such that $\nu_{n+1}(c) = 0$. Furthermore, since $t_{n+1} > 0$, and by definition of $R'$, $R'$ is the set of clocks $c$ of $T$ such that $\nu_{n+1}(c) = 0$.
	\item for every $c \in C$, $t_{g_T(c)} \in I_c$: by definition, $t_{g_T(c)} = \sum\limits_{j = i_c+1}^{n+1} t_j$ with $t_c = \max\{j \leq n \mid \nu_j(c) = 0\}$. So then, by induction, for every $n \geq k \geq i_c$, $\nu_k(c) = \sum\limits_{j = i_c+1}^{k} t_j$. In particular, $\nu_n(c) = \sum\limits_{j = i_c+1}^{n} t_j$, and then, by construction, $t_{g_T(c)} = \sum\limits_{j = i_c+1}^{n+1} t_j = \nu_n(c) + t_{n+1} \in I_c$.
\end{itemize}
\end{proof}

\begin{lemma}
$\epsilon$ is a natural transformation.
\end{lemma}

\begin{proof}
Assume given a morphism of TTS $(f,g)$, from $T = (S,i,C,\Delta)$ to $T' = (S',i',C',\Delta')$. We denote by $(f',g')$ the morphism $\iota\circ G(f,g)$. $f'$ maps a run of GT $$(s_0, \nu_0) \xrightarrow{(a_1,t_1)} \ldots \xrightarrow{(a_n,t_n)} (s_n,\nu_n)$$ to the run of $GT'$ $$(f(s_0), \nu_0\circ g) \xrightarrow{(a_1,t_1)} \ldots \xrightarrow{(a_n,t_n)} (f(s_n),\nu_n\circ g).$$ 
So both functions $f_{T'}\circ f'$ and $f\circ f_T$ map a run of $GT$ $$(s_0, \nu_0) \xrightarrow{(a_1,t_1)} \ldots \xrightarrow{(a_n,t_n)} (s_n,\nu_n)$$ to $f(s_n)$. Then $f_{T'}\circ f' = f\circ f_T$.

$g'$ maps a set $U$ of transitions of $GT'$ to the set of transitions of $GT$ whose image by $f'$ belongs to $U$. Then both functions $g'\circ g_{T'}$ and $g_T\circ g$ map a clock $c'$ of $C'$ to the set of transitions of $GT$ of the form:
$$((i,\tilde{0}) = (s_0, \nu_0) \xrightarrow{(a_1,t_1)} \ldots \xrightarrow{(a_n,t_n)} (s_n,\nu_n), (a_{n+1}, t_{n+1}), ~~~~~~~~~~$$
$$~~~~~~~~~~(s_0, \nu_0) \xrightarrow{(a_1,t_1)} \ldots \xrightarrow{(a_n,t_n)} (s_n,\nu_n) \xrightarrow{a_{n+1}, t_{n+1}} (s_{n+1}, \nu_{n+1}))$$
with $\nu_{n+1}(g(c')) = 0$. So $g'\circ g_{T'} = g_T\circ g$.
\end{proof}

Given a synchronization tree $T = (S,i,\Delta)$, we want to construct a morphism $\map{\eta_T}{T}{G\circ\iota T}$. Given a state $s \in S$, let $i = s_0 \xrightarrow{(a_1,t_1)} \ldots \xrightarrow{(a_n,t_n)} s$ be the unique path to $s$ in $T$. This means there are transitions $(s_{j-1}, a_j, R_j, \prod\limits_{U \subseteq \Delta} \{t_U^j\}, s_j)$ in $\iota T$. Moreover, by construction of $\iota T$, we know that:
\begin{itemize}
	\item $R_j$ is the set of subset of transitions of $T$ that contains $(s_{j-1}, (a_j, t_j), s_j)$.
	\item If we fix $t_U^0 = 0$ then $t_U^{j+1}$ is
		\begin{itemize}
			\item $t_{j+1}$ if $(s_{j-2}, (a_{j-1}, t_{j-1}), s_{j-1}) \in U$,
			\item $t_U^j + t_{j+1}$ otherwise.
		\end{itemize}
\end{itemize}
Define $\nu_j(U)$ as being $0$ if $(s_{j-1}, (a_j, t_j), s_j) \in U$, and $t_U^j$ otherwise. Then 
$$(i, \tilde{0}) = (s_0, \nu_0) \xrightarrow{(a_1,t_1)} \ldots \xrightarrow{(a_n,t_n)} (s_n, \nu_n) = (s, \nu_n)$$
is a run of $\Theta\circ\iota T$, so a state of $G\circ\iota T$. Define $\eta_T(s)$ as this run.

\begin{lemma}
$\eta_T$ is a morphism of TS.
\end{lemma}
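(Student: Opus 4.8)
The plan is to verify directly the two defining properties of a morphism of transition systems: that $\eta_T$ preserves the initial state, and that it sends each transition of $T$ to a transition of $G\circ\iota T$. The initial state is immediate: for $s=i$ the unique run to $s$ is the trivial one (the case $n=0$), so $\eta_T(i)$ is the singleton run $(i,\tilde 0)$, which is exactly the initial state of $G\circ\iota T = U\circ\Theta\circ\iota T$.

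For transition-preservation, given a transition $(s,(a,t),s')\in\Delta$ I would first use that $T$ is a tree: the unique run to $s'$ is the unique run $i=s_0\xrightarrow{(a_1,t_1)}\cdots\xrightarrow{(a_n,t_n)}s_n=s$ to $s$, extended by one step $s_n\xrightarrow{(a,t)}s_{n+1}=s'$ (so that $(a_{n+1},t_{n+1})=(a,t)$). Hence $\eta_T(s')$ is obtained from $\eta_T(s)$ by appending the configuration $(s_{n+1},\nu_{n+1})$. By the definition of the transitions of the unfolding $U$, it then suffices to show that $\bigl((s_n,\nu_n),(a,t),(s_{n+1},\nu_{n+1})\bigr)$ is a transition of $\Theta\circ\iota T$, witnessed by the TTS-transition $(s_n,a,R_{n+1},\prod_{U\subseteq\Delta}\{t_U^{n+1}\},s_{n+1})$ of $\iota T$ associated with the $(n+1)$-th step.

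By the definition of $\Theta$, this amounts to two checks: (a) the guard is met, i.e. $\nu_n(U)+t=t_U^{n+1}$ for every clock $U\subseteq\Delta$; and (b) the resets are respected, i.e. $\nu_{n+1}(U)=0$ when $U\in R_{n+1}$ and $\nu_{n+1}(U)=\nu_n(U)+t$ otherwise. Condition (b) in the reset case is immediate, since by definition $\nu_{n+1}(U)=0$ holds precisely when $U$ contains the $(n+1)$-th transition, which is exactly the membership $U\in R_{n+1}$; in the non-reset case (b) collapses to the identity $t_U^{n+1}=\nu_n(U)+t$ of (a). Thus everything reduces to the single identity in (a), which I would establish by a case split on whether the last transition $(s_{n-1},(a_n,t_n),s_n)$ of the run to $s$ lies in $U$: if it does then $\nu_n(U)=0$ and the recursive definition of the reset-times gives $t_U^{n+1}=t_{n+1}=t$; if it does not then $\nu_n(U)=t_U^n$ and the recursion gives $t_U^{n+1}=t_U^n+t_{n+1}$, so in both cases $\nu_n(U)+t=t_U^{n+1}$.

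The main obstacle is the bookkeeping in (a): one must keep the accumulated reset-times $t_U^j$ and the running clock values $\nu_j(U)$ in lockstep along the run, so that the clock value at $s_n$ plus the elapsed time $t$ lands exactly on the singleton guard $\{t_U^{n+1}\}$. This computation is essentially the inverse of the one already carried out for the counit $\epsilon_T$, and the only real care needed is to align the index of the \textit{last reset} of $U$ with the recursive definition of $t_U^{j}$; once that alignment is set up, the case analysis above closes the argument.
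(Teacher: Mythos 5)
Your proof is correct and takes essentially the same route as the paper's: extend the unique run to $s$ by the new transition, set $\nu_{n+1}(U)=0$ exactly when $U$ contains that transition, and verify that the resulting pair is a transition of $G\circ\iota T$ via the guard and reset conditions of $\Theta\circ\iota T$. You make explicit the key identity $\nu_n(U)+t=t_U^{n+1}$ (via the case split on whether the last transition of the run to $s$ lies in $U$), which the paper's one-line proof leaves implicit.
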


\begin{proof}
Given $(s, (a,t),s')$ a transition of $T$, let us assume the notations form above. Define $\nu_{n+1}(U)$ as being $0$ if $(s, (a,t),s') \in U$, $\nu_n(U) + t$ otherwise. Then 
$$((s_0, \nu_0) \xrightarrow{(a_1,t_1)} \ldots \xrightarrow{(a_n,t_n)} (s_n,\nu_n), (a, t), ~~~~~~~~~~~~~$$
$$~~~~~~~~~~~~~(s_0, \nu_0) \xrightarrow{(a_1,t_1)} \ldots \xrightarrow{(a_n,t_n)} (s_n,\nu_n) \xrightarrow{a, t} (s', \nu_{n+1}))$$
is a transition of $G\circ\iota T$.
\end{proof}

We want to prove that $\eta_T$ is an isomorphism. To this end, we construct its inverse $\map{\rho_T}{G\circ\iota T}{T}$ as follows. Given a state $(i, \tilde{0}) = (s_0, \nu_0) \xrightarrow{(a_1,t_1)} \ldots \xrightarrow{(a_n,t_n)} (s_n, \nu_n)$ of $G\circ\iota T$, that is, a run of $\Theta \circ\iota T$. $\rho_T$ maps this run to $s_n$.

\begin{lemma}
$\rho_T$ is a morphism of TS.
\end{lemma}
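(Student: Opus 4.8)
The plan is to check directly the two conditions defining a morphism of transition systems. Recall that $\rho_T$ sends a state of $G\circ\iota T$---that is, a run $(i,\tilde{0})=(s_0,\nu_0)\xrightarrow{(a_1,t_1)}\cdots\xrightarrow{(a_n,t_n)}(s_n,\nu_n)$ of $\Theta\circ\iota T$---to its final state $s_n$. The initial state of $G\circ\iota T$ is the singleton run $(i,\tilde{0})$, which $\rho_T$ maps to $i$, the initial state of $T$, so initial states are preserved and only the transition condition remains.

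A transition of $G\circ\iota T$ out of a run $\pi$ ending in $(s_n,\nu_n)$ extends $\pi$ by a transition $((s_n,\nu_n),(a,t),(s_{n+1},\nu_{n+1}))$ of $\Theta\circ\iota T$, is labelled $(a,t)$, and reaches the run $\pi'$ ending in $(s_{n+1},\nu_{n+1})$. Thus $\rho_T(\pi)=s_n$ and $\rho_T(\pi')=s_{n+1}$, and I must exhibit $(s_n,(a,t),s_{n+1})\in\Delta$. By the definition of $\Theta$, the added transition arises from a transition $(s_n,a,R,\prod_{U\subseteq\Delta}\{t_U\},s_{n+1})$ of $\iota T$, which by the definition of $\iota$ is itself induced by a unique transition $(s_n,(a,\hat t),s_{n+1})\in\Delta$ of $T$. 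So the whole lemma reduces to proving $t=\hat t$.

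For this, the plan is to read off the singleton guard at the distinguished clock $U=\Delta\in 2^\Delta$, the subset containing \emph{every} transition of $T$. On the guard side, since every step of the unique run from $i$ to $s_n$ is a transition of $T$ and hence lies in $\Delta$, the index $i_\Delta$ equals $n$, so the sum defining $t_\Delta$ is empty and $t_\Delta=\hat t$. On the valuation side, because $\Delta$ contains every transition it belongs to the reset set $R$ of every transition of $\iota T$, so it is reset at each step and $\nu_n(\Delta)=0$ (and $\nu_0(\Delta)=\tilde{0}(\Delta)=0$ when $n=0$). The guard requirement $\nu_n(\Delta)+t\in\{t_\Delta\}$ coming from the definition of $\Theta$ then forces $0+t=\hat t$, i.e.\ $t=\hat t$, whence $(s_n,(a,t),s_{n+1})=(s_n,(a,\hat t),s_{n+1})\in\Delta$ and $\rho_T$ is a morphism of transition systems.

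The only delicate point I anticipate is this guard computation---correctly seeing that the clock $\Delta$ is reset at every step (so $\nu_n(\Delta)=0$ along any run) and that its guard value is the empty sum $t_\Delta=\hat t$. Once this single clock is chosen, the equality $t=\hat t$ drops out immediately, and everything else is a routine unwinding of the definitions of $t_U$, of the unfolding, of $\Theta$, and of $\iota$.
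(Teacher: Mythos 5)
Your proof is correct and follows essentially the same route as the paper's: both reduce the claim to matching the transition time, and both do so by evaluating the singleton guard at the distinguished clock $U=\Delta$, using that $\Delta$ lies in every reset set (so $\nu_n(\Delta)=0$) and that $i_\Delta=n$ makes the defining sum for $t_\Delta$ empty. Your write-up just spells out these two computations slightly more explicitly than the paper does.
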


\begin{proof}
Given a transition 
$$((i,\tilde{0}) = (s_0, \nu_0) \xrightarrow{(a_1,t_1)} \ldots \xrightarrow{(a_n,t_n)} (s_n,\nu_n), (a_{n+1}, t_{n+1}), ~~~~~~~~~~$$
$$~~~~~~~~~~(s_0, \nu_0) \xrightarrow{(a_1,t_1)} \ldots \xrightarrow{(a_n,t_n)} (s_n,\nu_n) \xrightarrow{a_{n+1}, t_{n+1}} (s_{n+1}, \nu_{n+1}))$$
of $GT$, we know that $((s_n,\nu_n), (a_{n+1}, t_{n+1}), (s_{n+1}, \nu_{n+1}))$ is a transition of $\Theta T$. So it comes from a transition $(s_n, a_{n+1}, R, \prod\limits_{U \subseteq \Delta} \{t_U\}, s_{n+1})$ in $\iota T$. By definition, $t_{\Delta} = t_{n+1}$, since $\nu_n(\Delta) = 0$. Furthermore, the transition $(s_n, a_{n+1}, R, \prod\limits_{U \subseteq \Delta} \{t_U\}, s_{n+1})$ comes from a transition $(s_n, (a_{n+1}, t), s_{n+1})$ of $T$. By definition, $t_{\Delta} = t$. So, $(s_n, (a_{n+1}, t_{n+1}), s_{n+1})$ is a transition of $T$.
\end{proof}

\begin{lemma}
$\rho_T$ is the inverse of $\eta_T$.
\end{lemma}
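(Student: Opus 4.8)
The statement to prove is that $\rho_T$ and $\eta_T$ are mutually inverse, i.e.\ $\rho_T\circ\eta_T=\mathrm{id}_T$ and $\eta_T\circ\rho_T=\mathrm{id}_{G\circ\iota T}$. The plan is to dispatch the first identity immediately and to concentrate on the second. For $\rho_T\circ\eta_T=\mathrm{id}_T$: by construction $\eta_T(s)$ is a run of $\Theta\circ\iota T$ built from the unique path to $s$ in $T$, so its final state has mode-component $s$; since $\rho_T$ simply reads off this final mode, $\rho_T(\eta_T(s))=s$ for every state $s$. As $\eta_T$ and $\rho_T$ both act as the identity on labels, this gives the morphism identity.

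The real content is the converse $\eta_T\circ\rho_T=\mathrm{id}$. Here I would fix a state of $G\circ\iota T$, that is, a run
$$r=\left[(i,\tilde 0)=(s_0,\nu_0)\xrightarrow{(a_1,t_1)}\cdots\xrightarrow{(a_n,t_n)}(s_n,\nu_n)\right]$$
of $\Theta\circ\iota T$, and show that $\eta_T(\rho_T(r))=\eta_T(s_n)$ reconstructs $r$ exactly. Since two runs of a tree-unfolding coincide precisely when they share the same sequence of states and labels, it suffices to check three things in turn: (a) the mode/action sequence underlying $r$ is the unique path to $s_n$ in $T$; (b) the time labels $t_j$ of $r$ agree with the times of that unique path (hence with the times used by $\eta_T$); and (c) the valuations $\nu_j$ of $r$ agree with the valuations produced by $\eta_T$.

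For (a), each transition $(s_{j-1},\nu_{j-1})\xrightarrow{(a_j,t_j)}(s_j,\nu_j)$ of $r$ is witnessed by a transition of $\iota T$, which by construction of $\iota T$ is created from a genuine transition $(s_{j-1},(a_j,\hat t_j),s_j)\in\Delta$ of $T$. Concatenating these yields a path from $i=s_0$ to $s_n$ in $T$, which must be \emph{the} unique such path because $T$ is a tree; this fixes the modes and actions and identifies $n$ with that path's length. For (b), I would exploit the clock $U=\Delta\in 2^\Delta=C$: since $\Delta$ contains every transition, it is reset at every step, so $\nu_{j-1}(\Delta)=0$ for all $j$, while its singleton guard value for the $j$-th transition collapses to $t_\Delta=\hat t_j$. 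The constraint $\nu_{j-1}(\Delta)+t_j=t_\Delta$ then forces $t_j=\hat t_j$, so the time labels of $r$ are exactly the times of the unique path, which are the times $\eta_T$ uses. For (c), both $r$ (by definition of $\Theta$) and $\eta_T(s_n)$ (by its valuation clause) start from $\nu_0=\tilde 0$ and update each clock $U$ by the same rule---reset to $0$ when $(s_{j-1},(a_j,t_j),s_j)\in U$, increment by $t_j$ otherwise---so a straightforward induction on $j$ gives equality of the valuations. Combining (a)--(c) yields $\eta_T(\rho_T(r))=r$.

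I expect the main obstacle to be step (b): reconstructing the time labels from the data of $r$. Because the guards of $\iota T$ are singletons, a run of $\Theta\circ\iota T$ may look over-determined, and one has to isolate exactly the right clock---the maximal set $\Delta$, reset at every transition---to read off the elapsed time cleanly and conclude $t_j=\hat t_j$. Steps (a) and (c) are routine once the tree property and the common reset/increment recurrence are in hand; the only care needed there is to keep the indexing of the $t_U^j$ used by $\eta_T$ aligned with the valuation bookkeeping of $\Theta$.
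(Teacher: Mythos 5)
Your proof is correct and follows essentially the same route as the paper: dispose of $\rho_T\circ\eta_T=\mathrm{id}$ immediately, then show that a run of $\Theta\circ\iota T$ is determined by its final state via the tree property of $T$ (fixing states and actions), the clock $\Delta\in 2^\Delta$ (fixing the times), and induction on the valuations. The only cosmetic difference is that the paper packages your steps (a) and (b) as an appeal to the previously established lemma that $\rho_T$ is a morphism of transition systems---whose proof is exactly your clock-$\Delta$ argument---and states the second identity as uniqueness of the run ending at a given state.
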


\begin{proof}~
\begin{itemize}
	\item The fact that $\rho_T\circ\eta_T$ is the identity is obvious.
	\item The fact that $\eta_T\circ\rho_T$ is the identity boils down to the following fact: given a state $s \in S$, there is a unique run of the form 
	$$(i,\tilde{0}) = (s_0, \nu_0) \xrightarrow{(a_1,t_1)} \ldots \xrightarrow{(a_n,t_n)} (s_n,\nu_n) = (s, \nu_n)$$
in $\Theta T$. Let us prove this fact. We already know there is at least one from the definition of $\eta_T$. Assume given another one
$$(i,\tilde{0}) = (s'_0, \nu'_0) \xrightarrow{(a'_1,t'_1)} \ldots \xrightarrow{(a'_{n'},t'_{n'})} (s'_{n'},\nu'_{n'}) = (s, \nu'_{n'})$$
Since $\rho_T$ is a morphism, 
$$i = s'_0 \xrightarrow{(a'_1,t'_1)} \ldots \xrightarrow{(a'_{n'},t'_{n'})} s'_{n'} = s$$
is a run in $T$. Since $T$ is tree, this run is the unique run 
$$i = s_0 \xrightarrow{(a_1,t_1)} \ldots \xrightarrow{(a_n,t_n)} s_n = s$$
used in the definition of $\eta_T$. Then the run 
$$(i,\tilde{0}) = (s'_0, \nu'_0) \xrightarrow{(a'_1,t'_1)} \ldots \xrightarrow{(a'_{n'},t'_{n'})} (s'_{n'},\nu'_{n'}) = (s, \nu'_{n'})$$
can only be $\eta_T(s)$, by induction on $n$.
\end{itemize}
\end{proof}

\begin{lemma}
$\rho$ (and so $\eta$) is a natural transformation.
\end{lemma}

\begin{proof}
Assume given a morphism $f$ of TS, from $T = (S,i, \Delta)$ to $T' = (S', i', \Delta')$. $G\circ\iota f$ maps a run 
$$(i,\tilde{0}) = (s_0, \nu_0) \xrightarrow{(a_1,t_1)} \ldots \xrightarrow{(a_n,t_n)} (s_n,\nu_n)$$
of $\Theta T$ to the run 
$$(i',\tilde{0}) = (f(s_0), \nu'_0) \xrightarrow{(a_1,t_1)} \ldots \xrightarrow{(a_n,t_n)} (f(s_n),\nu'_n)$$
of $\Theta T'$, where $\nu'_j(U) = \nu_j(\{(s, (a,t), s') \mid (f(s), (a, t), f(s')) \in U\})$.
So then, both functions $\rho_{T'}\circ (G\circ\iota f)$ and $f\circ\rho_T$ map a run 
$$(i,\tilde{0}) = (s_0, \nu_0) \xrightarrow{(a_1,t_1)} \ldots \xrightarrow{(a_n,t_n)} (s_n,\nu_n)$$
of $\Theta T$ to $f(s_n)$.
\end{proof}

\begin{lemma}~
\begin{itemize}
	\item For every TS $T$, $\epsilon_{\iota T} = \iota(\rho_T)$.
	\item For every TTS $T$, $G(\epsilon_T) = \rho_{GT}$.
\end{itemize}
\end{lemma}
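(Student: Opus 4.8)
The plan is to prove each equality by comparing the underlying data of the two parallel morphisms, and then to note that, together with $\rho=\eta^{-1}$, these are exactly the two triangle identities needed to conclude that $G$ is a coreflection: the first part yields $\epsilon_{\iota T}\circ\iota(\eta_T)=\iota(\rho_T)\circ\iota(\eta_T)=\iota(\rho_T\circ\eta_T)=\mathrm{id}_{\iota T}$, and the second yields $G(\epsilon_T)\circ\eta_{GT}=\rho_{GT}\circ\eta_{GT}=\mathrm{id}_{GT}$.

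For the first part, both $\epsilon_{\iota T}$ and $\iota(\rho_T)$ are morphisms of $\tts$ from $\iota G\iota T$ to $\iota T$, so I compare their covariant state part and their contravariant clock part separately. The state part of $\epsilon_{\iota T}$ sends a run of $\Theta(\iota T)$ to the state-component of its final configuration, which is precisely the definition of $\rho_T$; hence the state parts agree. For the clock parts, recall that a clock of $\iota T$ is a subset $c$ of transitions of $T$, that $\iota T$ resets $c$ exactly on the transitions lying in $c$, and that the alphabet is $\Sigma\times\mathbb{R}_{>0}$, so every elapsed time is strictly positive. The crucial consequence is that along any run of $\Theta(\iota T)$ one has $\nu_{n+1}(c)=0$ if and only if $c$ was reset by the last transition, i.e. if and only if the underlying transition $(q_n,(a_{n+1},t_{n+1}),q_{n+1})$ of $T$ belongs to $c$. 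Now the clock part of $\epsilon_{\iota T}$ sends $c$ to the set of transitions of $G\iota T$ whose target satisfies $\nu_{n+1}(c)=0$, whereas the clock part of $\iota(\rho_T)$ sends $c$ to the set of transitions whose $\rho_T$-image $(q_n,(a_{n+1},t_{n+1}),q_{n+1})$ belongs to $c$; by the displayed equivalence these two sets coincide.

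For the second part, both $G(\epsilon_T)$ and $\rho_{GT}$ are morphisms of $\trrea$ from $G\iota G T$ to $GT$, and since a morphism between synchronization trees is determined by its action on states it suffices to compare the two state functions on an arbitrary state of $G\iota GT$, that is, on a run
$$r=\bigl((\bar s_0,\bar\nu_0)\xrightarrow{(a_1,t_1)}\cdots\xrightarrow{(a_m,t_m)}(\bar s_m,\bar\nu_m)\bigr)$$
of $\Theta(\iota GT)$. By definition $\rho_{GT}(r)=\bar s_m$, which is itself a run of $\Theta T$. On the other side $G=U\circ\Theta$ applies $\epsilon_T=(f_T,g_T)$ configuration-wise via $(\bar s,\bar\nu)\mapsto(f_T(\bar s),\bar\nu\circ g_T)$ and carries runs to runs, so $G(\epsilon_T)(r)$ is the run $(f_T(\bar s_0),\bar\nu_0\circ g_T)\to\cdots\to(f_T(\bar s_m),\bar\nu_m\circ g_T)$ of $\Theta T$. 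It then remains to identify this run with $\bar s_m$. Because $GT$ is a tree, $r$ traces the unique path from the root to $\bar s_m$, so the $\bar s_j$ are the nested prefixes of $\bar s_m$ as a run of $\Theta T$, and the singleton guards of $\iota GT$ force the $j$-th label $(a_j,t_j)$ of $r$ to equal the $j$-th label of $\bar s_m$; writing $\bar s_m=(u_0,\mu_0)\xrightarrow{(a_1,t_1)}\cdots\xrightarrow{(a_m,t_m)}(u_m,\mu_m)$, I verify $f_T(\bar s_j)=u_j$ (as $f_T$ returns the final state-component of $\bar s_j$) and $\mu_j=\bar\nu_j\circ g_T$. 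The latter is once more the reset-versus-zero principle: clock $c$ of $T$ is reset at step $j$ of $\bar s_m$ exactly when the $GT$-transition $\bar s_{j-1}\to\bar s_j$ lies in $g_T(c)$, i.e. exactly when $g_T(c)$ is reset at step $j$ of $r$, and between resets both clocks advance by the same elapsed times.

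I expect the bookkeeping of the second part---relating a run of $\Theta(\iota GT)$ to the run of $\Theta T$ it denotes and matching clock valuations through $g_T$---to be the main obstacle. Everything else reduces to the single observation that, because all times are strictly positive, a clock of $\iota(-)$ reads $0$ at a configuration precisely when it was reset by the immediately preceding transition; this is exactly what lets ``$e$ resets the clock encoding $c$'' and ``$e\in c$'' be identified throughout both computations.
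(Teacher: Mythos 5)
Your proposal is correct and follows essentially the same route as the paper: both bullets are proved by comparing the state parts and clock parts of the two parallel morphisms directly, with the second bullet reduced to showing $\mu_j=\bar\nu_j\circ g_T$ by induction along the run, using the key observation that since all elapsed times are strictly positive a clock of $\iota(-)$ reads $0$ exactly when it was reset by the immediately preceding transition. The only addition is your opening remark connecting the lemma to the triangle identities, which is a correct (and clarifying) framing but not part of the proof itself.
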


\begin{proof}~
\begin{itemize}
	\item Fix $\epsilon_{\iota T} = (f_1, g_1)$ and $\iota(\rho_T) = (f_2, g_2)$. Both morphisms are from $\iota\circ G\circ\iota T$ to $\iota T$. The states of $\iota\circ G\circ\iota T$ are runs of $\Theta\circ\iota T$, that is runs of the form 
$$(i,\tilde{0}) = (s_0, \nu_0) \xrightarrow{(a_1,t_1)} \ldots \xrightarrow{(a_n,t_n)} (s_n,\nu_n)$$
But as we have seen earlier, this means that 
$$i = s_0 \xrightarrow{(a_1,t_1)} \ldots \xrightarrow{(a_n,t_n)} s_n$$
is a run of $T$ and the $\nu_j$ are uniquely determined by this run.

Both $f_1$ and $f_2$ map such a run 
$$(i,\tilde{0}) = (s_0, \nu_0) \xrightarrow{(a_1,t_1)} \ldots \xrightarrow{(a_n,t_n)} (s_n,\nu_n)$$
of $\Theta\circ\iota T$ to $s_n$.

Both $g_1$ and $g_2$ map a set $U$ of transition of $T$ (which is a clock of $\iota T$), to the set of transitions 
$$((i,\tilde{0}) = (s_0, \nu_0) \xrightarrow{(a_1,t_1)} \ldots \xrightarrow{(a_n,t_n)} (s_n,\nu_n), (a_{n+1}, t_{n+1}), ~~~~~~~~~~$$
$$~~~~~~~~~~(s_0, \nu_0) \xrightarrow{(a_1,t_1)} \ldots \xrightarrow{(a_n,t_n)} (s_n,\nu_n) \xrightarrow{a_{n+1}, t_{n+1}} (s_{n+1}, \nu_{n+1}))$$
of $G\circ\iota T$ such that $(s_n, (a_{n+1}, t_{n+1}), s_{n+1}) \in U$.
	\item Both morphisms are from $G\circ\iota\circ GT$ to $GT$. A state of $G\circ\iota\circ GT$ is a run 
$$(\pi_0, \nu_0) \xrightarrow{(a_1,t_1)} \ldots \xrightarrow{(a_n,t_n)} (\pi_n,\nu_n)$$
of $\Theta\circ\iota\circ GT$, which means that $\pi_j$ are themselves runs of $\Theta T$, of the form:
\begin{itemize}
	\item $\pi_0$ is the singleton $(i, \tilde{0})$,
	\item $\pi_{j+1}$ is obtained from $\pi_j$ by extending it with a $(a_{j+1}, t_{j+1})$ transition.
\end{itemize}
This means there are $s_j$, $\nu'_j$ such that 
$$\pi_j = (s_0, \nu'_0) \xrightarrow{(a_1,t_1)} \ldots \xrightarrow{(a_j,t_j)} (s_j, \nu'_j)$$
Then $\rho_{GT}$ maps 
$$(\pi_0, \nu_0) \xrightarrow{(a_1,t_1)} \ldots \xrightarrow{(a_n,t_n)} (\pi_n,\nu_n)$$
to $\pi_n$, while $G(\epsilon_T)$ maps it to
$$(s_0, \nu_0\circ g_T) \xrightarrow{(a_1,t_1)} \ldots \xrightarrow{(a_n,t_n)} (s_n, \nu_n\circ g_T)$$
To conclude, it is then enough to prove that $\nu'_j = \nu_j\circ g_T$, by induction on $j$:
\begin{itemize}
	\item $\nu'_0 = \tilde{0} = \nu_0\circ g_T$.
	\item By construction, the transition $(\pi_j, (a_{j+1}, t_{j+1}), \pi_{j+1})$ of $GT$ induces a transition of the form $\delta = (\pi_j, a_{j+1}, R_j, \prod\limits_{U \subseteq \Gamma} t_U^j, \pi_{j+1})$ of $\iota\circ GT$. It is also induced by a transition $\delta' = (s_j, a_{j+1}, R'_j, \prod\limits_{c \in C} I_c, s_{j+1})$. Furthermore, since $GT$ is a tree, this transition is the unique one of the form $(\pi_j, a_{j+1}, \_, \_, \pi_{j+1})$. This means that the transition $((\pi_j, \nu_j), (a_j, t_j), (\pi_{j+1}, \nu_{j+1}))$ of $\Theta\circ\iota\circ GT$ is induced by $\delta$. In total, this means that:
	\begin{itemize}
		\item for every clock $c \in C$, $\nu'_{j+1}(c) = 0$ if $c \in R'_j$ or $\nu_j(c) + t_{j+1}$.
		\item for every $U \subseteq \Gamma$, $\nu_{j+1}(U) = 0$ if $U \in R_j$ or $\nu_j(U) + t_{j+1}$.
	\end{itemize}
To conclude, it is enough to prove that for every clock $c \in C$, $c \in R'_j$ iff $g_T(c) \in R_j$. 
\begin{center}
\begin{tabular}{rcl}
    $c \in R'_j$ & $\Leftrightarrow$ & $\nu'_{j+1}(c) = 0$\\
     & $\Leftrightarrow$ & $(\pi_j, (a_{j+1}, t_{j+1}), \pi_{j+1}) \in g_T(c)$\\
     & $\Leftrightarrow$ & $g_T(c) \in R'_j$
\end{tabular}
\end{center}
\end{itemize}
	
\end{itemize}
\end{proof}

\section{Omitted proofs on hybrid systems}

\begin{lemma}
\label{lem:moves}
Let $f = (f_M, f_I)$ be a morphism from $T$ to $T'$. If $T$ moves from $(m,\sigma)$ to $(m',\sigma')$ by doing the action $a$, with time $t$, then $T'$ moves from $(f_M(m),f_X(\sigma))$ to $(f_M(m'),f_X(\sigma'))$ by doing the action $a$, with time $t$. Consequently, if $$(m_0, \sigma_0) \xrightarrow{a_1, t_1} \ldots \xrightarrow{a_k, t_k} (m_k, \sigma_k)$$ is a run of $T$, $$(f_M(m_0), f_X(\sigma_0)) \xrightarrow{a_1, t_1} \ldots \xrightarrow{a_k, t_k} (f_M(m_k), f_X(\sigma_k))$$ is a run of $T'$.
\end{lemma}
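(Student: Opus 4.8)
The plan is to prove the single-step claim first and then obtain the run statement by an immediate induction. So the core of the argument is to show that whenever $T$ moves from $(m,\sigma)$ to $(m',\sigma')$ via action $a$ in time $t$, the image $T'$ moves from $(f_M(m), f_X(\sigma))$ to $(f_M(m'), f_X(\sigma'))$ via the same $a$ and $t$. By the definition of ``moves'', I am given differentiable functions $x_i : [0,t] \to \mathbb{R}^{n_i}$ witnessing the move in $T$, and I must produce witnessing functions $x'_{i'} : [0,t] \to \mathbb{R}^{n_{i'}}$ for $T'$. The natural candidate is to set $x'_{i'} = x_{f_I(i')}$, exploiting that by the morphism conditions the subsystems $i'$ and $f_I(i')$ have the same dimension ($n_{i'} = n_{f_I(i')}$), so the types match.

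First I would verify that $e' = (f_M(m), a, f_M(m'))$ is indeed an event of $T'$: this is exactly the condition that $f_M$ is a morphism of the underlying transition systems of modes and events, giving $f_E(e) = e' \in E'$. Next I would check each of the three witnessing conditions for the candidate $x'_{i'} = x_{f_I(i')}$. For the differential equation $\dot{x'}_{i'}(s) = F'_{f_M(m), i'}(s, x'_{i'}(s))$, I use the ``same dynamics'' condition $F_{m,f_I(i')} = F'_{f_M(m),i'}$ together with the fact that $x_{f_I(i')}$ already solves $\dot{x}_{f_I(i')}(s) = F_{m,f_I(i')}(s, x_{f_I(i')}(s))$ in $T$. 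For the invariant membership $(x'_{i'}(s))_{i'\in I'} \in I'_{f_M(m)}$ at each $s$, I observe that by construction $(x'_{i'}(s))_{i'\in I'} = f_X\big((x_i(s))_{i\in I}\big)$, and then invoke the morphism condition that $f_X$ preserves invariants: since $(x_i(s))_{i\in I} \in I_m$, its image lies in $I'_{f_M(m)}$. The initial and guard conditions are handled the same way: $(x'_{i'}(0))_{i'\in I'} = f_X(\sigma)$, which I need to equal the required initial valuation, and $(x'_{i'}(t))_{i'\in I'} = f_X\big((x_i(t))_{i\in I}\big)$, which lies in $G'_{e'}$ by the guard-preservation condition applied to $(x_i(t))_{i\in I} \in G_e$.

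The remaining point is the reset condition, namely $f_X(\sigma') = (R'_{e', i'}(x'_{i'}(t)))_{i'\in I'}$, which is where the bookkeeping must line up. Starting from $\sigma' = (R_{e,i}(x_i(t)))_{i\in I}$ in $T$, I compute $f_X(\sigma')$ componentwise as $(R_{e, f_I(i')}(x_{f_I(i')}(t)))_{i'\in I'}$ by definition of $f_X$, then rewrite $R_{e, f_I(i')} = R'_{f_E(e), i'} = R'_{e', i'}$ using the ``same reset functions'' morphism condition, and recognize $x_{f_I(i')}(t) = x'_{i'}(t)$. This yields exactly the claimed reset relation, confirming that $T'$ moves as stated. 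Finally, the run statement follows by applying the single-step result to each of the $k$ moves in the run and concatenating the resulting moves in $T'$, using that the end configuration $(f_M(m_{j-1}), f_X(\sigma_{j-1}))$ of one step is the start configuration of the next.

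The main obstacle I anticipate is purely one of careful definitional accounting rather than genuine difficulty: I must be scrupulous about the contravariant indexing through $f_I$, making sure that the candidate reindexing $x'_{i'} = x_{f_I(i')}$ is well-typed and that every morphism condition (same dimension, same dynamics, same resets, preservation of guards/invariants/initial valuation) is invoked at precisely the right place. In particular, the equation $(x'_{i'}(s))_{i'\in I'} = f_X((x_i(s))_{i\in I})$ is the bridge that lets me transfer the invariant and guard memberships, and I want to state it explicitly once and reuse it, so that the invariant, initial, and guard checks become one-line applications of the corresponding preservation conditions.
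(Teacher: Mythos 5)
Your proposal is correct and follows essentially the same route as the paper's proof: both define the witnessing functions for $T'$ by the contravariant reindexing $x'_{i'} = x_{f_I(i')}$, verify the event, ODE, invariant, initial, guard, and reset conditions via the respective morphism clauses (using $(x'_{i'}(s))_{i'\in I'} = f_X((x_i(s))_{i\in I})$ as the bridge), and obtain the run statement by iterating the single-step claim.
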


\begin{proof}
If $T$ moves from $(m,\sigma)$ to $(m',\sigma')$ by doing the action $a$, with time $t$, then this means that:
\begin{itemize}
	\item $e = (m, a, m') \in E$,
	\item for every $i \in I$, there is a derivable function $\map{x_i}{[0,t]}{\mathbb{R}^{n_i}}$ such that:
		\begin{itemize}
				\item for every $s \in [0,t]$, $\dot{x}(s) = F_{m,i}(s,x_i(s))$,
				\item for every $s \in [0,t]$, $x(s) \in I_{m}$,
				\item $(x_i(0))_{i\in I} = \sigma$,
				\item $(x_i(t))_{i\in I} \in G_e$,
				\item $\sigma' = (R_{e,i}(x_i(t)))_{i\in I}$.
		\end{itemize}
\end{itemize}
Now, for every $i' \in I'$, consider $\map{y_{i'}}{[0,t]}{\mathbb{R}^{n_{i'}}}$, defined as $y_{i'} = x_{f_I(i')}$. This function is derivable. Furthermore, $f(e) = (f_M(m), a, f_M(m)) \in E'$ by definition of a morphism and:
			\begin{itemize}
				\item for every $s \in [0,t]$, 
\begin{center}
\begin{tabular}{rcll}
    $\dot{y_{i'}}(s)$ & $=$ & $\dot{x_{f_I(i')}}(s)$ & (definition of $y_i$)\\
    & $=$ & $F_{m,f_I(i')}(s,x_{f_I(i')}(s))$ & (hypothesis on $x_{f_I(i')}$)\\
    & $=$ & $F'_{f_M(m), i'}(s, x_{f_I(i')}(s))$ & ($f$ is a morphism)\\
    & $=$ & $F'_{f_M(m), i'}(s, y_{i'}(s))$ & (definition of $y_{i'}$)
\end{tabular}
\end{center}
				\item for every $s \in [0,t]$, $(y_{i'}(s))_{i' \in I'} = f_{X}((x_{i}(s))_{i \in I}) \in I_{f_M(m)}$, 
				\item $(y_{i'}(0))_{i'\in I'} = (x_{f_I(i')}(0))_{i'\in I'} = f_X((x_{i}(0))_{i \in I}) = f_X(\sigma)$,
				\item $(y_{i'}(t))_{i'\in I'} = f_X((x_i(t))_{i\in I}) \in G_{f(e)}$,
				\item $f_X(\sigma') = (R_{e,f_I(i')}(x_{f_I(i')}(t)))_{i' \in I'} = (R'_{f(e), i'}(x_{f_I(i')}(t)))_{i'\in I'} = (R'_{f(e), i'}(y_{i'}))_{i' \in I'}$.
			\end{itemize}
		This witnesses the fact that $T'$ moves from $(f_M(m),f_X(\sigma))$ to $(f_M(m'),f_X(\sigma'))$ by doing the action $a$, with time $t$.
\end{proof}

\begin{lemma}
$\hysy$ is a category, whose composition is given by:
$$(g_M, g_I)\circ(f_M, f_I) = (g_M\circ f_M, f_I\circ g_I)$$
\end{lemma}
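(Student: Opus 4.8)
The plan is to verify the four category axioms for the proposed composition rule: well-definedness of composition, existence of identities, associativity, and the unit laws. Since a morphism $(f_M, f_I)$ consists of a covariant mode part and a contravariant subsystem part, and the composite is defined componentwise by ordinary function composition (forward on modes, backward on subsystems), associativity and the unit laws will follow immediately from associativity and unitality of function composition. The substance therefore lies in showing that the componentwise composite of two bounded morphisms is again a bounded morphism, and in identifying its bound.

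Concretely, I would fix an $\epsilon$-bounded morphism $f = (f_M, f_I) : T \to T'$ and an $\epsilon'$-bounded morphism $g = (g_M, g_I) : T' \to T''$, and write $h_M = g_M \circ f_M$ and $h_I = f_I \circ g_I$. First I would check the discrete part: $h_M$ preserves the initial mode, since $h_M(m_0) = g_M(m_0') = m_0''$, and it sends events to events by chaining the event-preservation properties of $f_M$ and $g_M$, so that $h_E = g_E \circ f_E$ is well-defined. Next I would check the subsystem-matching conditions: $h_I(i'')$ and $i''$ have equal dimension, dynamics and reset functions, obtained by composing the corresponding equalities for $g$ (relating $i''$ and $g_I(i'')$) with those for $f$ (relating $g_I(i'')$ and $f_I(g_I(i''))$). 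The key observation here is that the induced valuation maps compose as expected, $h_X = g_X \circ f_X$, which follows by unwinding $h_X((x_i)_i) = (x_{h_I(i'')})_{i''} = (x_{f_I(g_I(i''))})_{i''}$; the dimension equalities just verified are what make all three maps well-typed. From $h_X = g_X \circ f_X$, preservation of the initial valuation, the invariants and the guards by $h_X$ follows by composing the corresponding preservation properties of $f_X$ and $g_X$.

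The one genuinely nontrivial condition is the observation-closeness bound, and this is the only place I expect subtlety. Given a run $(m_0, \sigma_0) \xrightarrow{a_1, t_1} \ldots \xrightarrow{a_k, t_k} (m_k, \sigma_k)$ of $T$, Lemma~\ref{lem:moves} guarantees that its image $(f_M(m_0), f_X(\sigma_0)) \xrightarrow{a_1, t_1} \ldots \xrightarrow{a_k, t_k} (f_M(m_k), f_X(\sigma_k))$ is a genuine run of $T'$; this is exactly what licenses applying the $\epsilon'$-boundedness of $g$ to the image run, since otherwise that hypothesis would be unavailable here. Combining the two bounds via the triangle inequality in $(\OO, d)$,
\begin{align*}
d\bigl(o(m_k, \sigma_k), o''(h_M(m_k), h_X(\sigma_k))\bigr)
 &\leq d\bigl(o(m_k, \sigma_k), o'(f_M(m_k), f_X(\sigma_k))\bigr) \\
 &\quad + d\bigl(o'(f_M(m_k), f_X(\sigma_k)), o''(h_M(m_k), h_X(\sigma_k))\bigr)
 \leq \epsilon + \epsilon',
\end{align*}
and using $h_M(m_k) = g_M(f_M(m_k))$ together with $h_X(\sigma_k) = g_X(f_X(\sigma_k))$, shows the composite is $(\epsilon + \epsilon')$-bounded, hence bounded.

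Finally, for the identity on $T$ I would take $(\mathrm{id}_M, \mathrm{id}_I)$: all matching conditions hold trivially, the induced valuation map is the identity, and the observation condition holds with bound $0$, so the identity is $0$-bounded. The unit laws $\mathrm{id} \circ f = f$, $g \circ \mathrm{id} = g$, and associativity $(h \circ g) \circ f = h \circ (g \circ f)$ then reduce componentwise to the corresponding laws for composition of functions (with the order reversed on the contravariant subsystem component), which hold on the nose. The main obstacle is thus simply ensuring that boundedness composes, for which the two essential ingredients are the identity $g_X \circ f_X = h_X$ and the use of Lemma~\ref{lem:moves} to transport runs along $f$.
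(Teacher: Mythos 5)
Your proposal is correct and follows essentially the same route as the paper: the substantive content is that an $\epsilon$-bounded and an $\epsilon'$-bounded morphism compose to an $(\epsilon+\epsilon')$-bounded one, established by checking each condition componentwise, using the identity $(g\circ f)_X = g_X\circ f_X$, transporting runs along $f$ via Lemma~\ref{lem:moves}, and applying the triangle inequality for the observation bound. The only difference is that you also spell out the identities, unit laws, and associativity, which the paper leaves implicit as immediate from componentwise function composition.
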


\begin{proof}
Assume that $f = (f_M, f_I)$ from $T$ to $T'$ is $\epsilon$-bounded and $g = (g_M, g_I)$ from $T'$ to $T''$ is $\epsilon'$-bounded, and let us prove that $g\circ f$ is $\epsilon+\epsilon'$-bounded. 
\begin{itemize}
	\item First, let us type check $g\circ f$:
	\begin{itemize}
		\item $\map{f_M}{M}{M'}$ and $\map{g_M}{M'}{M''}$, so $\map{g_M\circ f_M}{M}{M''}$.
		\item $\map{f_I}{I'}{I}$ and $\map{g_I}{I''}{I'}$, so $\map{f_I\circ g_I}{I''}{I}$.
	\end{itemize}
	\item We want to prove the requirement for $g\circ f$ to be an $\epsilon+\epsilon'$-bounded morphism:
		\begin{itemize}
			\item $g_M\circ f_M(m_0) = g_M(m'_0) = m''_0$.
			\item $n_{i''} = n_{g_I(i'')} = n_{f_I(g_I(i''))}$.
			\item Let us denote by $(g\circ f)_X$, the function from $\prod\limits_{i\in I} \mathbb{R}^{n_i}$ to $\prod\limits_{i''\in I''} \mathbb{R}^{n_{i''}}$ defined by:
	$$(x_i)_{i\in I} \mapsto (x_{f_I\circ g_I(i'')})_{i'' \in I''}$$
Then $(g\circ f)_X = g_X \circ f_X$
			\item $(g\circ f)_X(\sigma_0) = g_X\circ f_X(\sigma_0) = g_X(\sigma'_0) = \sigma''_0$.
			\item if $(m, a, m') \in E$, then $(f_M(m), a, f_M(m')) \in E'$, and so $(g_M\circ f_M(m), a, g_M\circ f_M(m')) \in E''$.
			\item if $\sigma \in I_m$, then $f_X(\sigma) \in I'_{f_M(m)}$, and so $(g\circ f)_X(\sigma) = g_X\circ f_X(\sigma) \in I''_{g_M\circ f_M(m)}$.
			\item idem, if $\sigma \in G_e$, $(g\circ f)_X(\sigma) \in G''_{g\circ f(e)}$.
			\item let $x_{i''} \in \mathbb{R}^{n_{f_I\circ g_I(i'')}}$ and $t \in \mathbb{R}$, 
\begin{center}
\begin{tabular}{rcll}
    $F_{m,f_I\circ g_I(i'')}(t, x_{i''})$ & $=$ & $F'_{f_M(m), g_I(i'')}(t, x_{i''})$ & ($f$ morphism)\\
    & $=$ & $F''_{g_M\circ f_M(m), i''}(t, x_{i''})$ & ($g$ morphism)
\end{tabular}
\end{center}			
			\item $R_{e, f_I\circ g_I(i'')} = R'_{f(e), g_I(i'')}= R''_{g\circ f(e), i''}$.
			\item let 
			$$(m_0, \sigma_0) \xrightarrow{a_1, t_1} \ldots \xrightarrow{a_k, t_k} (m_k, \sigma_k)$$
			be a run of $T$, then 
			$$(f_M(m_0), f_X(\sigma_0)) \xrightarrow{a_1, t_1} \ldots \xrightarrow{a_k, t_k} (f_M(m_k), f_X(\sigma_k))$$
			is a run of $T'$. We would like to prove that $$d(o(m_k,\sigma_k), o''(g_M\circ f_M(m_k),(g\circ f)_X(\sigma_k))) \leq \epsilon + \epsilon'$$
\begin{center}
\begin{tabular}{rcll}
    $d(o(m_k,\sigma_k), o''(g_M\circ f_M(m_k),(g\circ f)_X(\sigma_k)))$ & $=$ & $d(o(m_k,\sigma_k), o''(g_M\circ f_M(m_k),g_X\circ f_X(\sigma_k)))$ & (cf. previous point)\\
    & $\leq$ & $d(o(m_k,\sigma_k),o'(f_M(m_k), f_X(\sigma_k)))$  & \\
    & ~ & $+ d(o'(f_M(m_k), f_X(\sigma_k)), o''(g_M\circ f_M(m_k),g_X\circ f_X(\sigma_k)))$ & (triangular inequality)\\
    & $\leq$ & $\epsilon + \epsilon'$ & ($f$ and $g$ bounded)
\end{tabular}
\end{center}
		\end{itemize}
\end{itemize}
\end{proof}

\begin{lemma}
$H$ extends to a functor.
\end{lemma}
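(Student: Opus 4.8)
The plan is to exploit the factorisation $H = V\circ K$. Since Section~\ref{subsec:observation} already establishes that the unfolding $\map{V}{\tso}{\tro}$ is a functor, I only need to describe how $H$ acts and check that it preserves identities and composition, and essentially all of the content is a repackaging of Lemma~\ref{lem:moves}. On objects, $HT = V(KT)$ is the synchronization tree with observations whose states are the runs $(m_0,\sigma_0)\xrightarrow{a_1,t_1}\cdots\xrightarrow{a_k,t_k}(m_k,\sigma_k)$ of $T$, whose transitions extend a run by one further move, and whose observation sends such a run to $o(m_k,\sigma_k)$. The action on morphisms has to be defined with a little care, which is exactly where the unfolding earns its keep.

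Given a bounded hybrid morphism $f=(f_M,f_I)$ from $T$ to $T'$, I would define $Hf$ to send a run of $T$ to its pointwise image under $(f_M,f_X)$:
$$(m_0,\sigma_0)\xrightarrow{a_1,t_1}\cdots\xrightarrow{a_k,t_k}(m_k,\sigma_k)\ \longmapsto\ (f_M(m_0),f_X(\sigma_0))\xrightarrow{a_1,t_1}\cdots\xrightarrow{a_k,t_k}(f_M(m_k),f_X(\sigma_k)).$$
By Lemma~\ref{lem:moves} the right-hand side is again a run, now of $T'$, so $Hf$ is a well-defined map on the state set of $HT$. It sends the initial run, the singleton $(m_0,\sigma_0)$, to the initial run of $HT'$ because $f_M(m_0)=m_0'$ and $f_X(\sigma_0)=\sigma_0'$, and it carries a run-extension transition of $HT$ to the corresponding run-extension transition of $HT'$, again directly by Lemma~\ref{lem:moves}. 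Hence $Hf$ is a morphism of the underlying synchronization trees.

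It remains to check boundedness, and this is the heart of the matter. Because the states of $HT$ are precisely the runs of $T$, the observation-closeness clause in the definition of an $\epsilon$-bounded hybrid morphism --- which is stated exactly over runs as $d(o(m_k,\sigma_k),o'(f_M(m_k),f_X(\sigma_k)))\le\epsilon$ --- is literally the assertion that $Hf$ is $\epsilon$-bounded. This explains why we pass through $V$ rather than working with $K$ alone: on the full configuration space $KT$ one would additionally have to bound observations at unreachable configurations, which hybrid morphisms do not control, so $K$ by itself need not land in bounded morphisms. Functoriality is then routine: the identity hybrid morphism induces the pointwise identity on runs, and since the preceding lemma on composition in $\hysy$ records that $g\circ f$ acts as $g_M\circ f_M$ on modes and satisfies $(g\circ f)_X=g_X\circ f_X$, the maps $H(g\circ f)$ and $Hg\circ Hf$ agree run by run. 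The only genuinely delicate point --- and the main obstacle --- is this boundedness check together with the realisation that it forces the use of the unfolded, reachable state space of runs rather than all configurations.
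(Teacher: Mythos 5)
Your proposal is correct and follows essentially the same route as the paper: define $Hf$ on a run of $T$ as its pointwise image under $(f_M,f_X)$, invoke Lemma~\ref{lem:moves} for well-definedness and for preservation of run-extension transitions, and observe that the observation-closeness clause of an $\epsilon$-bounded hybrid morphism, being quantified over runs, is literally the $\epsilon$-boundedness of $Hf$. Your additional explicit check of identities and composition, and the remark on why one must pass through the unfolding $V$ rather than $K$ alone, go slightly beyond what the paper writes down but are consistent with it.
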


\begin{proof}
Assume given an $\epsilon$-bounded morphism $f = (f_M, f_I)$ from $T$ to $T'$. $H f$ will map a run 
$$(m_0, \sigma_0) \xrightarrow{a_1, t_1} \ldots \xrightarrow{a_k, t_k} (m_k, \sigma_k)$$ of $T$ to the run $$(f_M(m_0), f_X(\sigma_0)) \xrightarrow{a_1, t_1} \ldots \xrightarrow{a_k, t_k} (f_M(m_k), f_X(\sigma_k))$$ of $T'$. Let us prove that $H f$ is a well-defined $\epsilon$-bounded morphism of transition systems with observations.
\begin{itemize}
	\item First, $H f$ is well-defined, by lemma \ref{lem:moves}.
	\item $Hf(m_0, \sigma_0) = (f_M(m_0), f_X(\sigma_0)) = (m'_0, \sigma'_0)$.
	\item The fact that $Hf$ preserves the extensions of runs, that is, maps transitions of $HT$ to transitions of $HT'$, is again a consequence of lemma \ref{lem:moves}.
	\item Let $\pi$ be a run of $T$ of the form:
	$$(m_0, \sigma_0) \xrightarrow{a_1, t_1} \ldots \xrightarrow{a_k, t_k} (m_k, \sigma_k)$$
	and denote by $f\pi$ the run
	$$(f_M(m_0), f_X(\sigma_0)) \xrightarrow{a_1, t_1} \ldots \xrightarrow{a_k, t_k} (f_M(m_k), f_X(\sigma_k))$$
	 of $T'$. Then $d(\omega(\pi), \omega'(f\pi') = d(o(m_k,\sigma_k), o(f_M(m_k), f_X(\sigma_k))) \leq \epsilon$.
\end{itemize}
\end{proof}

\begin{lemma}
$\iota$ extends to a functor from $\trorea$ to $\hysy$.
\end{lemma}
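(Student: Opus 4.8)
The object part of $\iota$ is already fixed, so the plan is to define $\iota$ on morphisms and then check both the defining conditions of a morphism in $\hysy$ and functoriality. Given a bounded morphism $\map{f}{T}{T'}$ in $\trorea$, with $T = (S,i,\Delta,\omega)$ and $T' = (S',i',\Delta',\omega')$ synchronization trees, I would set $\iota(f) = (f_M, f_I)$. The mode part $f_M$ is just $f$ itself, since the modes of $\iota T$ are the states of $T$; and I let $f_E(s,a,s') = (f(s),a,f(s'))$ be the induced map on events, which is well defined because $f$ is a morphism of transition systems. The contravariant subsystem part sends a subsystem $\alpha' = (n',\sigma',(F'_{s'})_{s'\in S'},(R'_{e'})_{e'\in E'})$ of $\iota T'$ to $f_I(\alpha') = (n',\sigma',(F'_{f(s)})_{s\in S},(R'_{f_E(e)})_{e\in E})$, i.e.\ it keeps the dimension and initial vector and reindexes the flows and resets along $f$. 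Preservation of the initial mode and of events by $f_M$ is immediate.

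Next I would dispatch the conditions that depend only on the combinatorial data. Because the object definition of $\iota$ sets the flow and reset of a subsystem $\alpha$ at a mode/event to be exactly the corresponding component of $\alpha$, the equalities $n_{i'} = n_{f_I(i')}$, $F_{m,f_I(i')} = F'_{f_M(m),i'}$ and $R_{e,f_I(i')} = R'_{f_E(e),i'}$ all reduce, after unfolding, to the tautologies $F'_{f(s)} = F'_{f(s)}$ and $R'_{f_E(e)} = R'_{f_E(e)}$. The initial valuation is preserved for the same bookkeeping reason: the $\alpha'$-component of $f_X(\sigma_0)$ is the initial vector stored in $f_I(\alpha')$, which is $\sigma'$, matching $\sigma'_0$. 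Finally, since the observation of $\iota T$ reads only $\omega$ (via $o(s,\sigma) = \omega(s)$), the last condition becomes $d(\omega(s),\omega'(f(s))) \le \epsilon$ for every state $s$, which is exactly the $\epsilon$-boundedness of $f$ in $\trorea$; so $\iota(f)$ is $\epsilon$-bounded with the same $\epsilon$.

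The main obstacle is preservation of guards and invariants by $f_X$, since these are the genuinely dynamical data. The crux is the sublemma
$$\sigma_{s,f_I(\alpha')}(t) = \sigma'_{f(s),\alpha'}(t) \quad\text{for all } t \in \rea,\ \alpha' \in I',$$
where the left-hand side is computed in $\iota T$ and the right-hand side in $\iota T'$. To prove it I would use that $f$ is a tree morphism: the unique run to $f(s)$ in $T'$ is the $f$-image of the unique run $i = s_0 \xrightarrow{a_1,t_1} \cdots \xrightarrow{a_k,t_k} s_k = s$ to $s$ in $T$. Substituting the definition of $f_I(\alpha')$ into the recursive construction of $\sigma_{s,f_I(\alpha')}$ shows that both sides solve the identical chain of initial value problems $\dot x = F'_{f(s_j)}(u,x)$ on each $[0,t_{j+1}]$, with the same initial data $\sigma'$ and the same reset maps $R'_{(f(s_{j-1}),a_j,f(s_j))}$ between segments; local Lipschitzness of the flows gives uniqueness of these solutions, so the two agree by induction on $j$. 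Granting the sublemma, a point of the guard $G_{(s,a,s')}$ is $(\sigma_{s,\alpha}(t))_{\alpha}$ for the unique $t$ with $(s,(a,t),s') \in \Delta$; its image under $f_X$ is $(\sigma'_{f(s),\alpha'}(t))_{\alpha'}$, and since $T'$ is a tree the unique time attached to $f_E(s,a,s')$ is again $t$, so this lands in $G'_{f_E(e)}$. Invariants transport in the same way, the defining inequality $t \le t'$ being preserved because $(f(s),(a,t'),f(s')) \in \Delta'$.

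It remains to check functoriality, which I expect to be routine. Identities are sent to identities by construction, and the three assignments $f \mapsto f_M$, $f \mapsto f_E$ (covariant) and $f \mapsto f_I$ (contravariant) are each strictly compositional, so $\iota(g \circ f) = (g_M \circ f_M, f_I \circ g_I) = \iota(g) \circ \iota(f)$ matches the composition rule of $\hysy$.
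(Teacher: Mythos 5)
Your proposal is correct and follows essentially the same route as the paper: define $f_M = f$, define $f_I$ by reindexing flows and resets along $f$ (keeping dimension and initial vector), dispatch the combinatorial conditions as tautologies, and reduce preservation of guards and invariants to the identity $\sigma_{s,f_I(\alpha')}(t) = \sigma'_{f(s),\alpha'}(t)$, which both you and the paper obtain by matching the two chains of initial value problems along the unique run to $s$ and its $f$-image. Your explicit appeal to uniqueness of solutions and your remark on functoriality only make explicit what the paper leaves implicit.
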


\begin{proof}
Given an $\epsilon$-bounded morphism of synchronization trees with observations from $T = (S, i, \Delta, \omega)$ to $T' = (S', i', \Delta', \omega')$, define $\iota f = (f_M, f_I)$ where:
\begin{itemize}
	\item $\map{f_M}{S}{S'}$ is given by $f$.
	\item $\map{f_I}{I'}{I}$ maps $(n, \sigma, (F_{s'})_{s' \in S}, (R_{(s',a,t')})_{(s',a,t') \in E'})$ to $(n, \sigma, (F_{f(s)})_{s \in S}, (R_{(f(s), a, f(t))})_{(s,a,t) \in E})$.
\end{itemize}
Let us prove it is an $\epsilon$-bounded morphism of hybrid systems from $\iota T$ to $\iota T'$:
\begin{itemize}
	\item $f_M(i) = f(i) = i'$.
	\item $n_{(n, \sigma, (F_{s'})_{s' \in S}, (R_{(s',a,t')})_{(s',a,t') \in E'})} = n = n_{(n, \sigma, (F_{f(s)})_{s \in S}, (R_{(f(s), a, f(t))})_{(s,a,t) \in E})}$.
	\item ~
\vspace{-0.3cm}
\begin{center}
\begin{tabular}{rcll}
    $f_X(\sigma_0)$ & $=$ & $f_X((\sigma)_{(n, \sigma, (F_s)_{s \in S}, (R_e)_{e \in E})})$ & (definition of $\sigma_0$)\\
    & $=$ & $(\sigma)_{(n, \sigma, (F_{s'})_{s' \in S'}, (R_{e'})_{e' \in E'})}$ & (definition of $f_X$)\\
    & $=$ & $\sigma'_0$ & (definition of $\sigma'_0$)
\end{tabular}
\end{center}
	\item if $(s, a, s') \in E$, that is, if there is $t \in \rea$, such that $(s, (a,t), s') \in \Delta$, then, since $f$ is a morphism, $(f(s), (a,t), f(s')) \in \Delta'$, and so $(f(s), a, f(s')) \in E'$.
	\item assume $\sigma \in G_{(s,a,s')}$, that is, $\sigma = (x_\alpha^{k+1}(t))_{\alpha\in I}$ with:
		\begin{itemize}
			\item $t$ is the unique time such that $(s,(a,t),s') \in \Delta$,
			\item $i = s_0 \xrightarrow{a_1, t_1} \ldots \xrightarrow{a_k, t_k} s_k = s$ is the unique path from $i$ to $s$, 			\item $a_{k+1} = a$, $t_{k+1} = t$, $s_{k+1} = s'$, 
			\item for any $\alpha \in I$, so of the form $\alpha = (n, \sigma, (F_s)_{s\in S}, (R_e)_{e\in E})$, and every $1 \leq j \leq k+1$, $\map{x_\alpha^j}{[0,t_j]}{\mathbb{R}^n}$ is the solution of $\dot{x}(u) = F_{s_{j-1}}(u,x)$, with the following initial condition:
				\begin{itemize}
					\item $x(0) = \sigma$ if $j = 1$,
					\item $x(0) = R_{(s_{j-2}, a_{j-1}, s_{j-1})}(x_\alpha^{j-1}(t_{j-1}))$ if $j > 1$.
				\end{itemize}
			\end{itemize}
		Let us prove that $f_X(\sigma) \in G'_{(f(s), a, f(s'))}$. For that, observe that:
		\begin{itemize}
			\item $f_X(\sigma) = (x_{f_I(\beta)}^{k+1}(t))_{\beta\in I'}$,
			\item $t$ is the unique time such that $(f(s),(a,t),f(s')) \in \Delta'$,
			\item $i' = f(s_0) \xrightarrow{a_1, t_1} \ldots \xrightarrow{a_k, t_k} f(s_k) = f(s)$ is the unique path from $i'$ to $f(s)$,
			\item for any $\beta \in I'$, so of the form $\beta = (n, \sigma, (F_{s'})_{s'\in S'}, (R_{e'})_{e'\in E'})$, and every $1 \leq j \leq k+1$, $\map{x_{f_I(\beta)}^j}{[0,t_j]}{\mathbb{R}^n}$ is the solution of $\dot{y}(u) = F_{f(s_{j-1})}(u,y)$, with the following initial condition:
				\begin{itemize}
					\item $y(0) = f_X(\sigma)$ if $j = 1$,
					\item $y(0) = R_{(f(s_{j-2}), a_{j-1}, f(s_{j-1}))}(x_{f_I(\beta)}^{j-1}(t_{j-1}))$ if $j > 1$.
				\end{itemize}
		\end{itemize}
	\item similarly, if $\sigma \in I_s$, $f_X(\sigma) \in I'_{f(s)}$.
	\item the next two conditions are obvious by definition of $f_I$.
	\item given a run 
	$$(s_0, \sigma_0) \xrightarrow{a_1, t_1} \ldots \xrightarrow{a_k, t_k} (s_k, \sigma_k)$$
	of $\iota T$, $$d(o(s_k, \sigma_k), o'(f(s_k), f_X(\sigma_k))) = d(\omega(s_k), \omega'(f(s_k))) \leq \epsilon.$$
\end{itemize}
\end{proof}

Given an hybrid system $T = (M, I, (n_i)_{i\in I}, E, (G_e)_{e\in E}, (R_{e,i})_{e\in E, i\in I}, (F_{m,i})_{m\in M, i\in I}, (I_m)_{m\in M}, m_0, \sigma_0, o)$, $\iota\circ HT$ is of the form:
$$(M', I', (n'_{i'})_{i'\in I'}, E', (G'_{e'})_{e'\in E'}, (R'_{e',i'})_{e'\in E', i'\in I'}, (F'_{m',i'})_{m'\in M', i'\in I'}, (I'_{m'})_{m'\in M'}, m'_0, \sigma'_0, o')$$
with:
\begin{itemize}
	\item $M'$ is the set of runs of $T$,.
	\item $E'$ is the set of transition of $HT$ for which we have forgotten the time, that is, triples:
	$$((m_0,\sigma_0) \xrightarrow{a_1, t_1} \ldots \xrightarrow{a_n,t_n} (m_n, \sigma_n), a_{n+1},~~~~~~~~~~~~~~~~~~~~$$
	$$~~~~~~~~~~~~~~~~~~~~(m_0,\sigma_0) \xrightarrow{a_1, t_1} \ldots \xrightarrow{a_n,t_n} (m_n, \sigma_n) \xrightarrow{a_{n+1}, t_{n+1}} (m_{n+1}, \sigma_{n+1}))$$
	where $$(m_0,\sigma_0) \xrightarrow{a_1, t_1} \ldots \xrightarrow{a_n,t_n} (m_n, \sigma_n)$$ and $$(m_0,\sigma_0) \xrightarrow{a_1, t_1} \ldots \xrightarrow{a_n,t_n} (m_n, \sigma_n) \xrightarrow{a_{n+1}, t_{n+1}} (m_{n+1}, \sigma_{n+1})$$ are runs of $T$.
	\item $I'$ is the set of quadruples $(n, \sigma, (F_{m'})_{m' \in M'}, (R_{e'})_{e' \in E'})$ where:
		\begin{itemize}
			\item $n$ is am integer,
			\item $\sigma \in \mathbb{R}^n$,
			\item $\map{F_{m'}}{\mathbb{R}\times\mathbb{R}^n}{\mathbb{R}^n}$ is a function which is continuous and locally Lipschitz on the second argument,
			\item $\map{R_{e'}}{\mathbb{R}^n}{\mathbb{R}^n}$ is a function.
		\end{itemize}
	\item $n'_{(n, \sigma, (F_{m'})_{m' \in M'}, (R_{e'})_{e' \in E'})} = n$.
	\item Given a run $\pi$ of $T$ of the form:
	$$(m_0,\sigma_0) \xrightarrow{a_1, t_1} \ldots \xrightarrow{a_k,t_k} (m_k, \sigma_k)$$
	write $\pi_j$ for the run:
	$$(m_0,\sigma_0) \xrightarrow{a_1, t_1} \ldots \xrightarrow{a_j,t_j} (m_j, \sigma_j).$$
	Given additionally $\alpha = (n, \sigma, (F_{m'})_{m' \in M'}, (R_{e'})_{e' \in E'}) \in I'$ and $t$, define $\sigma_{\pi, \alpha}(t) \in \mathbb{R}^n$ as $x_{\alpha}^k(t)$, where $x_{\alpha}^j$ is the solution on $[0, t_{j+1}]$ (with the convention that $t_{k+1} = t$) of the equation $\dot{x}(s) = F_{\pi_j}(s,x(s))$ with the following initial condition:
		\begin{itemize}
			\item $x(0) = \sigma$ if $j = 0$,
			\item $x(0) = R_{(\pi_{j-1}, a_{j}, \pi_{j})}(x_{\alpha}^{j-1}(t_{j}))$ if $j > 0$.
		\end{itemize}
	\item $G'_{e'}$, where $e'$ is of the form 
	$$(\pi, a_{n+1}, \pi') = ((m_0,\sigma_0) \xrightarrow{a_1, t_1} \ldots \xrightarrow{a_n,t_n} (m_n, \sigma_n), a_{n+1},~~~~~~~~~~~~~~~~~~~~$$
	$$~~~~~~~~~~~~~~~~~~~~(m_0,\sigma_0) \xrightarrow{a_1, t_1} \ldots \xrightarrow{a_n,t_n} (m_n, \sigma_n) \xrightarrow{a_{n+1}, t_{n+1}} (m_{n+1}, \sigma_{n+1}))$$
	is given by $\{(\sigma_{\pi, \alpha}(t_{n+1}))_{\alpha \in I'}\}$.
	\item $I'_{m'}$, when $m'$ is a run of the form:
	$$(m_0,\sigma_0) \xrightarrow{a_1, t_1} \ldots \xrightarrow{a_n,t_n} (m_n, \sigma_n)$$
	is given by the set $$\{(\sigma_{m', \alpha}(t))_{\alpha \in I'} \mid \exists t_{n+1}, a_{n+1}, m_{n+1}, \sigma_{n+1}, \, (m_n,\sigma_n) \xrightarrow{a_{n+1}, t_{n+1}} (m_{n+1}, \sigma_{n+1}) \wedge t \leq t_{n+1}\}.$$
	\item Given $e'' \in E'$, and $i' = (n, \sigma, (F_{m'})_{m' \in M'}, (R_{e'})_{e' \in E'})$, $R'_{e'', i'} = R_{e''}$.
	\item Given $m'' \in M'$, and $i' = (n, \sigma, (F_{m'})_{m' \in M'}, (R_{e'})_{e' \in E'})$, $F'_{m'', i'} = F_{m''}$.
	\item $m'_0$ is given by the initial run of $T$, that is, the singleton $(m_0, \sigma_0)$. 
	\item $\sigma'_0$ is given by the collection $(\sigma)_{(n, \sigma, (F_{s'})_{s' \in S'}, (R_{e'})_{e' \in E'})}$.
	\item $o'(m', \sigma)$ where $m'$ is a run of the form:
	$$(m_0,\sigma_0) \xrightarrow{a_1, t_1} \ldots \xrightarrow{a_n,t_n} (m_n, \sigma_n)$$
	is given by $o'(m_n, \sigma_n)$.
\end{itemize}

Now $\map{\epsilon_T = (f_{T, M}, f_{T, I})}{\iota\circ HT}{T}$, is given by the following:
\begin{itemize}
	\item $f_{T, M}$ maps a run $(m_0,\sigma_0) \xrightarrow{a_1, t_1} \ldots \xrightarrow{a_k,t_k} (m_k, \sigma_k)$ of $T$ to $m_k$.
	\item $f_{T,I}$ maps a subsystem $i$ of $T$ to the quadruple $\alpha_i = (n_i, \sigma_{i,0}, (F_{m'})_{m' \in M'}, (R_{e'})_{e' \in E'})$, where:
		\begin{itemize}
			\item $\sigma_0 = (\sigma_{i,0})_{i \in I}$,
			\item when $m'$ is a run of $T$ of the form 
			$$(m_0,\sigma_0) \xrightarrow{a_1, t_1} \ldots \xrightarrow{a_k,t_k} (m_k, \sigma_k)$$
			then $F_{m'} = F_{m_k, i}$,
			\item when $e'$ is a transition of $HT$ of the form
			$$((m_0,\sigma_0) \xrightarrow{a_1, t_1} \ldots \xrightarrow{a_n,t_n} (m_n, \sigma_n), a_{n+1},~~~~~~~~~~~~~~~~~~~~$$
	$$~~~~~~~~~~~~~~~~~~~~(m_0,\sigma_0) \xrightarrow{a_1, t_1} \ldots \xrightarrow{a_n,t_n} (m_n, \sigma_n) \xrightarrow{a_{n+1}, t_{n+1}} (m_{n+1}, \sigma_{n+1}))$$
			then $R_{e'} = R_{(m_n, a_{n+1}, m_{n+1}), i}$.
		\end{itemize}
\end{itemize}

\begin{lemma}
\label{lem:guainv}
Given $e' \in E'$, of the form
	$$(\pi, a_{n+1},\pi') = ((m_0,\sigma_0) \xrightarrow{a_1, t_1} \ldots \xrightarrow{a_n,t_n} (m_n, \sigma_n), a_{n+1},~~~~~~~~~~~~~~~~~~~~$$
	$$~~~~~~~~~~~~~~~~~~~~(m_0,\sigma_0) \xrightarrow{a_1, t_1} \ldots \xrightarrow{a_n,t_n} (m_n, \sigma_n) \xrightarrow{a_{n+1}, t_{n+1}} (m_{n+1}, \sigma_{n+1}))$$
if we denote by $\mu_i$ the function $\sigma_{\pi, f_{T,I}(i)}$, we have the following:
\begin{itemize}
	\item for every $t \leq t_{n+1}$, $(\mu_i(t))_{i \in I} \in I_{m_n}$.
	\item $(\mu_i(t_{n+1}))_{i \in I} \in G_{(m_n, a_{n+1}, m_{n+1})}$.
	\item $(R_{(m_n, a_{n+1}, m_{n+1}), i}(\mu_i))_{i \in I} = \sigma_{n+1}$.
\end{itemize}
\end{lemma}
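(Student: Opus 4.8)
The plan is to reduce the whole statement to the uniqueness of solutions of the flow ODEs, exploiting the fact that the run $\pi' = (m_0,\sigma_0)\xrightarrow{a_1,t_1}\cdots\xrightarrow{a_{n+1},t_{n+1}}(m_{n+1},\sigma_{n+1})$ of $T$ itself carries, for each of its $n+1$ moves, a witnessing family of trajectories. First I would unfold the definition of $\mu_i=\sigma_{\pi,f_{T,I}(i)}$. Writing $\alpha_i=f_{T,I}(i)=(n_i,\sigma_{i,0},(F_{m'})_{m'},(R_{e'})_{e'})$, the definition of the counit gives $F_{\pi_j}=F_{m_j,i}$ and $R_{(\pi_{j-1},a_j,\pi_j)}=R_{(m_{j-1},a_j,m_j),i}$, so that the piece $x^j_{\alpha_i}$ of $\mu_i$ is exactly the solution on $[0,t_{j+1}]$ of $\dot x(s)=F_{m_j,i}(s,x(s))$, with $x^0_{\alpha_i}(0)=\sigma_{i,0}$ and $x^j_{\alpha_i}(0)=R_{(m_{j-1},a_j,m_j),i}(x^{j-1}_{\alpha_i}(t_j))$ for $j>0$.

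On the other side, since $\pi'$ is a genuine run of $T$, for each $k\in\{1,\dots,n+1\}$ the move from $(m_{k-1},\sigma_{k-1})$ to $(m_k,\sigma_k)$ (source mode $m_{k-1}$, time $t_k$) supplies, by the definition of ``$T$ moves'', a differentiable family $(w^k_i)_{i\in I}$ with $w^k_i:[0,t_k]\to\mathbb{R}^{n_i}$ satisfying $\dot w^k_i(s)=F_{m_{k-1},i}(s,w^k_i(s))$, $(w^k_i(s))_i\in I_{m_{k-1}}$ for all $s\in[0,t_k]$, $(w^k_i(0))_i=\sigma_{k-1}$, $(w^k_i(t_k))_i\in G_{(m_{k-1},a_k,m_k)}$, and $\sigma_k=(R_{(m_{k-1},a_k,m_k),i}(w^k_i(t_k)))_i$.

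The core step is to prove by induction on $j\in\{0,\dots,n\}$ that $x^j_{\alpha_i}=w^{j+1}_i$ on $[0,t_{j+1}]$ for every $i$. Both are solutions of the same equation $\dot x=F_{m_j,i}(\cdot,x)$ on $[0,t_{j+1}]$, and since each flow function is locally Lipschitz in the second argument, Cauchy--Lipschitz yields uniqueness of the solution with a prescribed initial value: the set on which two such solutions agree is relatively open and closed in the interval, hence the whole of it. It therefore suffices to match initial conditions. For $j=0$, $x^0_{\alpha_i}(0)=\sigma_{i,0}=w^1_i(0)$, because $\sigma_0=(\sigma_{i,0})_i=(w^1_i(0))_i$. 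For $j>0$, the induction hypothesis gives $x^{j-1}_{\alpha_i}(t_j)=w^j_i(t_j)$, so $x^j_{\alpha_i}(0)=R_{(m_{j-1},a_j,m_j),i}(w^j_i(t_j))$, which is precisely the $i$-th component of $\sigma_j$ by the reset clause of the $j$-th move, and this equals $w^{j+1}_i(0)$. A pleasant by-product is that the a priori existence of $\sigma_{\pi,\alpha_i}$ on the whole interval is guaranteed by the existence of the witnesses $w^k_i$, so no separate existence argument is needed.

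Taking $j=n$ identifies $\mu_i=x^n_{\alpha_i}$ with the witness $w^{n+1}_i$ of the last move, whose source mode is $m_n$. The three conclusions are then read off directly: $(\mu_i(t))_i=(w^{n+1}_i(t))_i\in I_{m_n}$ for all $t\le t_{n+1}$; $(\mu_i(t_{n+1}))_i=(w^{n+1}_i(t_{n+1}))_i\in G_{(m_n,a_{n+1},m_{n+1})}$; and, applying the reset at the endpoint, $(R_{(m_n,a_{n+1},m_{n+1}),i}(\mu_i(t_{n+1})))_i=\sigma_{n+1}$. I expect the only genuine subtlety to be the careful bookkeeping of the off-by-one between the pieces $x^j_{\alpha_i}$ and the moves $k=j+1$, together with the clean invocation of ODE uniqueness on a full interval rather than merely locally; everything else is an unwinding of the definitions of the counit $f_{T,I}$ and of $\sigma_{\pi,\alpha}$.
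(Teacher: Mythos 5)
Your proof is correct and follows exactly the route the paper takes: the paper's own proof is the one-line remark ``induction on $n$, using the unicity of the solution from the Picard--Lindel\"of theorem,'' and your argument is precisely that induction spelled out, identifying each piece $x^j_{\alpha_i}$ of $\sigma_{\pi,f_{T,I}(i)}$ with the witnessing trajectory of the $(j+1)$-st move of the run via matching of initial conditions and ODE uniqueness.
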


\begin{proof}
This is done by induction on the $n$, by using the unicity of the solution from Picard-Lindel\"of theorem.
\end{proof}

\begin{lemma}
$\epsilon_T$ is a $0$-bounded morphism of hybrid systems.
\end{lemma}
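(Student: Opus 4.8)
The plan is to verify, clause by clause, that the pair $\epsilon_T = (f_{T,M}, f_{T,I})$ meets the definition of a $0$-bounded morphism of hybrid systems from $\iota\circ HT$ to $T$; here $f_{T,M}$ is the covariant mode part (a run $\mapsto$ its last mode) and $f_{T,I}$ the contravariant subsystem part ($i \mapsto \alpha_i$). Two families of clauses are immediate. First, $f_{T,M}$ is a morphism of the underlying transition systems: it sends the initial mode of $\iota\circ HT$, the singleton run $(m_0,\sigma_0)$, to the initial mode $m_0$ of $T$, and it sends an event of $\iota\circ HT$---a run-extension, which by construction comes from a move of $T$ with last modes $m_n, m_{n+1}$ and label $a_{n+1}$---to $(m_n,a_{n+1},m_{n+1})\in E$. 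Second, the ``same subsystem'' clauses (equal dimension, flow, reset) hold by the definition of $\alpha_i$ together with the rules $F'_{m'',\alpha}=F_{m''}$ and $R'_{e'',\alpha}=R_{e''}$ defining the flows and resets of $\iota\circ HT$: reading off the components of $\alpha_i$ gives that the flow of $\iota\circ HT$ at $(m'',\alpha_i)$ is $F_{m_k,i}$ (with $m_k$ the last mode of $m''$) and its reset at $(e'',\alpha_i)$ is $R_{(m_n,a_{n+1},m_{n+1}),i}$, which equal the flow and reset of $T$ at $(f_{T,M}(m''),i)$ and $(f_E(e''),i)$ respectively.

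The substantial clause is that the induced map $f_X$, which projects a valuation of $\iota\circ HT$ onto its $\alpha_i$-components, preserves guards, invariants and the initial valuation; here I would invoke Lemma~\ref{lem:guainv}. A guard $G'_{e'}$ of $\iota\circ HT$ is the singleton $\{(\sigma_{\pi,\alpha}(t_{n+1}))_{\alpha}\}$, so $f_X$ of its element is $(\sigma_{\pi,\alpha_i}(t_{n+1}))_i = (\mu_i(t_{n+1}))_i$ in the notation of Lemma~\ref{lem:guainv}, which the second bullet of that lemma places in $G_{(m_n,a_{n+1},m_{n+1})} = G_{f_E(e')}$. Likewise an element of an invariant $I'_{m'}$ is some $(\sigma_{m',\alpha}(t))_{\alpha}$ with $t\le t_{n+1}$, whose $f_X$-image $(\mu_i(t))_i$ lies in $I_{m_n} = I_{f_{T,M}(m')}$ by the first bullet. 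For the initial valuation, $\sigma'_0$ has $\alpha$-component equal to the vector stored in $\alpha$, so $f_X(\sigma'_0) = (\sigma_{i,0})_i = \sigma_0$.

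Finally I would check $0$-boundedness of observations. For a run of $\iota\circ HT$ ending at $(m',\sigma')$, where $m'$ is a run of $T$ ending in configuration $(m_k,\bar\sigma)$, the source observation unwinds to $o'(m',\sigma')=\omega_{HT}(m')=o(m_k,\bar\sigma)$, while the target observation is $o(f_{T,M}(m'),f_X(\sigma'))=o(m_k,f_X(\sigma'))$; thus it suffices to show $f_X(\sigma')=\bar\sigma$. Since $\sigma'$ arises by the reset of the last move, its $\alpha_i$-component is $R_{(m_n,a_{n+1},m_{n+1}),i}(\mu_i(t_{n+1}))$, and the third bullet of Lemma~\ref{lem:guainv} says the tuple of these over $i$ is exactly $\bar\sigma$ (with the singleton run handled by the initial-valuation computation above). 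Hence the observations coincide and their distance is $0$.

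The main obstacle sits in the guard, invariant and reset clauses, all of which rest on Lemma~\ref{lem:guainv}: the delicate point is that the functions $\sigma_{\pi,\alpha_i}$ computed inside $\iota\circ HT$---solutions of $\dot x = F_{m_j,i}$ glued along the unique run $\pi$---genuinely reproduce the continuous evolution of subsystem $i$ in $T$. This identification relies on uniqueness of solutions (Picard--Lindel\"{o}f) behind Lemma~\ref{lem:guainv}, and on the guards of $\iota\circ HT$ being singletons, which forces every move of $\iota\circ HT$ to use precisely the time $t_{n+1}$ of the matching move in $T$.
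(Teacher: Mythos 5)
Your proposal is correct and follows essentially the same route as the paper: a clause-by-clause verification in which the transition-system and subsystem clauses are read off from the definitions of $f_{T,M}$, $f_{T,I}$ and of $\iota\circ HT$, while the guard, invariant and observation clauses are all reduced to Lemma~\ref{lem:guainv} (whose content is exactly the Picard--Lindel\"{o}f identification of the glued dynamics you single out as the delicate point). Your unpacking of the observation clause via $f_X(\sigma')=\bar\sigma$ is slightly more explicit than the paper's, but it is the same argument.
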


\begin{proof}~
\begin{itemize}
	\item $f_{T,M}(m'_0) = f_{T,M}((m_0, \sigma)) = m_0$.
	\item $f_{T,X}(\sigma'_0) = (\sigma'_{f_{T,I}(i),0})_{i \in I} = (\sigma_i)_{i \in I} = \sigma_0$.
	\item Let $e' \in E'$, of the form
	$$((m_0,\sigma_0) \xrightarrow{a_1, t_1} \ldots \xrightarrow{a_n,t_n} (m_n, \sigma_n), a_{n+1},~~~~~~~~~~~~~~~~~~~~$$
	$$~~~~~~~~~~~~~~~~~~~~(m_0,\sigma_0) \xrightarrow{a_1, t_1} \ldots \xrightarrow{a_n,t_n} (m_n, \sigma_n) \xrightarrow{a_{n+1}, t_{n+1}} (m_{n+1}, \sigma_{n+1}))$$
	Then, in particular, $(m_0,\sigma_0) \xrightarrow{a_1, t_1} \ldots \xrightarrow{a_n,t_n} (m_n, \sigma_n) \xrightarrow{a_{n+1}, t_{n+1}} (m_{n+1}, \sigma_{n+1})$ is a run of $T$, which means that $T$ moves from $(m_n, \sigma_n)$ to $(m_{n+1}, \sigma_{n+1})$, by doing the $a_{n+1}$ action, with time $t_{n+1}$. In particular, $(m_n, a_{n+1}, m_{n+1}) \in E$.
	\item If $\sigma \in G_{e'}$, then $\sigma = (\sigma_{\pi, i'}(t_{n+1}))_{i' \in I'}$, with the notations from above. Then $f_{T,X}(\sigma) \in G_{(m_n, a_{n+1}, m_{n+1})}$, by lemma \ref{lem:guainv}.
	\item If $m'$ is a run of $T$ of the form
	$$\pi = (m_0,\sigma_0) \xrightarrow{a_1, t_1} \ldots \xrightarrow{a_n,t_n} (m_n, \sigma_n)$$
	$\sigma \in I_{m'}$, then $\sigma = (\sigma_{\pi, i'}(t))_{i' \in I'}$, for some $t \leq t_{n+1}$, with 
	$$\pi' = (m_0,\sigma_0) \xrightarrow{a_1, t_1} \ldots \xrightarrow{a_n,t_n} (m_n, \sigma_n) \xrightarrow{a_{n+1}, t_{n+1}} (m_{n+1}, \sigma_{n+1}))$$
	being a run of $T$. Then use lemma \ref{lem:guainv} with $e' = (\pi, a_{n+1}, \pi')$.
	\item the following two axioms are obvious by definition of $f_{T,I}$.
	\item Given a run 
	$$(m'_0,\sigma'_0) \xrightarrow{a_1, t_1} \ldots \xrightarrow{a_n,t_n} (m'_n, \sigma'_n)$$
	of $\iota HT$, then, by lemma \ref{lem:guainv}, $m'_j$ is the run of $T$ of the form:
	$$(f_{T,M}(m'_0),f_{T,X}(\sigma'_0)) \xrightarrow{a_1, t_1} \ldots \xrightarrow{a_n,t_n} (f_{T,M}(m'_j), f_{T,X}(\sigma'_j))$$
	and so $$d(o'(m'_n, \sigma'_n), o(f_{T,M}(m'_n), f_{T,X}(\sigma'_n))) = d(o(f_{T,M}(m'_n), f_{T,X}(\sigma'_n)), o(f_{T,M}(m'_n), f_{T,X}(\sigma'_n))) = 0.$$
\end{itemize}
\end{proof}

\begin{lemma}
$\epsilon$ is a natural transformation.
\end{lemma}

\begin{proof}
Assume given a bounded morphism of hybrid systems $g = (g_M, g_I)$ from $T$ to $T'$. Denote $\iota H g$ by $(h_M, h_I)$. Then:
\begin{itemize}
	\item $h_M$ maps a run 
	$$\pi = (m_0,\sigma_0) \xrightarrow{a_1, t_1} \ldots \xrightarrow{a_n,t_n} (m_n, \sigma_n)$$
	of $T$, to the run 
	$$g\pi = (g_M(m_0),g_X(\sigma_0)) \xrightarrow{a_1, t_1} \ldots \xrightarrow{a_n,t_n} (g_M(m_n), g_X(\sigma_n))$$
	of $T'$.
	\item $g_I$ maps $$(n, \sigma, (F_{\pi'})_{\pi' \text{ run of } T'}, (R_e')_{e' \text{ extension of runs of } T'})$$
	to $$(n, \sigma, (F_{g\pi}) _{\pi \text{ run of } T}, (R_{(f\pi_1, a, f\pi_2)})_{(\pi_1, a, \pi_2) \text{ extension of runs of } T})$$
\end{itemize}
Then:
\begin{itemize}
	\item $g_M\circ f_{T,M}$ and $f_{T',M}\circ h_M$ both map a run
	$$(m_0,\sigma_0) \xrightarrow{a_1, t_1} \ldots \xrightarrow{a_n,t_n} (m_n, \sigma_n)$$
	of $T$ to $g_M(m_n)$.
	\item Given a subsystem $i'$ of $T'$, $f_{T,I}\circ g_I(i')$ is
	$$(n_{g_I(i')}, \sigma_{g_I(i'),0}, (F_{\pi})_{\pi \text{ run of } T}, (R_e)_{e' \text{ extension of runs of } T})$$
	and $h_I\circ f_{T', I}(i')$ is
	$$(n_{i'}, \sigma_{i',0}, (F'_{\pi})_{\pi \text{ run of } T}, (R'_e)_{e' \text{ extension of runs of } T})$$
	with:
	\begin{itemize}
		\item given a run 
		$$\pi = (m_0,\sigma_0) \xrightarrow{a_1, t_1} \ldots \xrightarrow{a_n,t_n} (m_n, \sigma_n)$$
		of $T$, $F_\pi = F_{m_k, g_I(i')}$ and $F'_\pi = F'_{g_M(m_k), i'}$ which are both equal since $g$ is a morphism.
		\item given an extension $e$ of runs
		$$((m_0,\sigma_0) \xrightarrow{a_1, t_1} \ldots \xrightarrow{a_n,t_n} (m_n, \sigma_n), a_{n+1},~~~~~~~~~~~~~~~~~~~~$$
	$$~~~~~~~~~~~~~~~~~~~~(m_0,\sigma_0) \xrightarrow{a_1, t_1} \ldots \xrightarrow{a_n,t_n} (m_n, \sigma_n) \xrightarrow{a_{n+1}, t_{n+1}} (m_{n+1}, \sigma_{n+1}))$$
	of $T$, $R_e = R_{(m_n, a_{n+1}, m_{n+1}), g_I(i')}$ and $R'_e = R'_{(g_M(m_n), a_{n+1}, g_M(m_{n+1})), i'}$, which are both equal since $g$ is a morphism.
	\end{itemize}
	Furthermore, since $g$ is a morphism we have that $n_{g_I(i')} = n_{i'}$ and $\sigma_{g_I(i'),0} = \sigma_{i',0}$, Therefor, $f_{T,I}\circ g_I = h_I\circ f_{T', I}$.
\end{itemize}
\end{proof}

Given a synchronization tree with observations $T = (S,i,\Delta, o)$, we want to define a morphism $\map{\eta_T}{T}{H\iota T}$. 

\begin{lemma}
\label{lem:unicrun}
Given a run of 
$$s_0 \xrightarrow{a_1, t_1} \ldots \xrightarrow{a_n, t_n} s_n$$
of $T$, there is a unique run of the form
$$(s_0, \sigma_0) \xrightarrow{a_1, t_1} \ldots \xrightarrow{a_n, t_n} (s_n, \sigma_n)$$
in $\iota T$, and $\sigma_j = (R_{(s_{j-1}, a_j, s_j), \alpha}\circ\sigma_{s_{j-1}, \alpha}(t_j))_{\alpha \in I}$.
\end{lemma}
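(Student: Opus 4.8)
The plan is to prove the statement by induction on the length $n$ of the run, exploiting the crucial feature that in $\iota T$ every guard $G_{(s,a,s')}$ is a \emph{singleton}. First I would dispatch the base case $n = 0$: the only run of $\iota T$ of length zero is the one consisting of the initial configuration $(m_0,\sigma_0) = (i,(\sigma)_{\alpha\in I})$ (here $s_0 = i$, since runs of the tree $T$ start at its root), so both existence and uniqueness are immediate, and $\sigma_0$ is exactly the initial valuation, as the formula degenerates to require for $j = 0$.

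For the inductive step, I would assume the truncated run $s_0 \xrightarrow{a_1,t_1} \ldots \xrightarrow{a_{n-1},t_{n-1}} s_{n-1}$ lifts uniquely to $(s_0,\sigma_0) \xrightarrow{a_1,t_1} \ldots \xrightarrow{a_{n-1},t_{n-1}} (s_{n-1},\sigma_{n-1})$ with $\sigma_{n-1}$ given by the claimed formula, and then analyse a move from $(s_{n-1},\sigma_{n-1})$ with action $a_n$ and time $t_n$ into mode $s_n$ (the event $(s_{n-1},a_n,s_n)$ lies in $E$ because $(s_{n-1},(a_n,t_n),s_n)\in\Delta$). Such a move demands, for each subsystem $\alpha$, a solution $x_\alpha$ of $\dot{x}(u) = F_{s_{n-1},\alpha}(u,x(u))$ on $[0,t_n]$ with $x_\alpha(0) = \sigma_{n-1,\alpha}$, which by Picard--Lindel\"of is \textbf{unique}. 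Since the event's time is unique in the tree, $G_{(s_{n-1},a_n,s_n)} = \{(\sigma_{s_{n-1},\alpha}(t_n))_\alpha\}$ is a single point, so validity of the move forces $x_\alpha(t_n) = \sigma_{s_{n-1},\alpha}(t_n)$, and the reset clause then forces $\sigma_{n,\alpha} = R_{(s_{n-1},a_n,s_n),\alpha}(\sigma_{s_{n-1},\alpha}(t_n))$, which is precisely the asserted formula. This yields uniqueness; it remains to confirm that this $x_\alpha$ really does reach the guard point and stay inside the invariant, giving existence.

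The main obstacle is the bookkeeping required to identify the \emph{local} flow $x_\alpha$ of the move relation with the \emph{global}, piecewise-defined flow $\sigma_{s_{n-1},\alpha}$. The key sublemma I would establish is that $\sigma_{n-1,\alpha}$, which by the induction hypothesis is $R_{(s_{n-2},a_{n-1},s_{n-1})}$ applied to the endpoint of the previous segment, coincides with the initial value $x_\alpha^{n-1}(0)$ of the last segment in the definition of $\sigma_{s_{n-1},\alpha}$; this holds because the segments $x_\alpha^{j}$ for the runs to $s_{n-2}$ and to $s_{n-1}$ agree on their shared indices, the equations and reset conditions depending only on the common prefix $s_0,\ldots,s_j$ and $t_1,\ldots,t_j$. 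With this identification, uniqueness of solutions gives $x_\alpha = x_\alpha^{n-1}$ on $[0,t_n]$, hence $x_\alpha(u) = \sigma_{s_{n-1},\alpha}(u)$ throughout; in particular $x_\alpha(t_n)$ is the guard point, and since the invariant $I_{s_{n-1}}$ is defined to contain every $(\sigma_{s_{n-1},\alpha}(u))_\alpha$ with $u$ bounded by an outgoing transition time, and $u \le t_n$ qualifies because $(s_{n-1},(a_n,t_n),s_n)\in\Delta$, the whole trajectory stays in the invariant. This supplies existence and closes the induction.
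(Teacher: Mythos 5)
Your proof is correct and follows essentially the same route as the paper's: induction on the length of the run, Picard--Lindel\"of uniqueness for the ODE segments, and the observation that the singleton guards and the invariants of $\iota T$ are defined precisely as the points reached by the piecewise flow $\sigma_{s,\alpha}$. You are in fact somewhat more explicit than the paper, which compresses your ``key sublemma'' (that the segments of $\sigma_{s_{n-1},\alpha}$ and $\sigma_{s_n,\alpha}$ agree on the common prefix, so the local solution in the move relation coincides with the global flow) into the phrase ``following the notations above,'' and which justifies uniqueness only by asserting that hybrid systems are deterministic once the action and time are fixed.
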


\begin{proof}
Let us prove it by induction on the length $n$. The case $n = 0$ is obvious. So now assume given a run 
$$s_0 \xrightarrow{a_1, t_1} \ldots \xrightarrow{a_n, t_n} s_n \xrightarrow{a_{n+1}, t_{n+1}} s_{n+1}$$
of $T$. Then $\sigma_n = (R_{(s_{n-1}, a_n, s_n), \alpha}\circ\sigma_{s_{n-1}, \alpha}(t_n))_{\alpha \in I} = (R_{(s_{n-1}, a_n, s_n), \alpha}\circ x_\alpha^{n-1}(t_n))_{\alpha \in I}$ following the notations above. $x_\alpha^n$ is then the solution of $\dot{x}(u) = F_{s_n, \alpha}(u, x(u))$ with the initial condition $x(0) = \sigma_{\alpha, n}$, where $\sigma_n = (\sigma_{\alpha, n})_{\alpha \in I}$. Furthermore, by definition of the guards and the invariants of $\iota T$:
\begin{itemize}
	\item $(x_\alpha^n(t_{n+1}))_{\alpha \in I} \in G_{(s_n, a_{n+1}, s_{n+1})}$,
	\item $(x_\alpha^n(t))_{\alpha \in I} \in I_{s_n}$ for every $t \leq t_{n+1}$.
\end{itemize}
So defining $\sigma_{n+1} = (R_{(s_n, a_{n+1}, s_{n+1}), \alpha})_{\alpha\in I}$, then $\iota T$ moves from $(s_n, \sigma_n)$ to $(s_{n+1}, \sigma_{n+1})$ by doing $a_{n+1}$ with time $t_{n+1}$. Furthermore, it is the only one since hybrid systems are deterministic when the action and the time are given.
\end{proof}

Given a state $s \in S$, $\eta_T$ is the run obtained from previous lemma, applied on the unique run to $s$ in $T$.

\begin{lemma}
$\eta_T$ is a $0$-bounded morphism of transition system with observation.
\end{lemma}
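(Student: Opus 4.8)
The plan is to verify directly the four defining properties of a $0$-bounded morphism in $\trorea$: that $\eta_T$ is a well-defined function on states, that it preserves the initial state, that it preserves transitions, and that it changes observations by at most $0$. The workhorse throughout is Lemma~\ref{lem:unicrun}, which guarantees that every run $s_0\xrightarrow{a_1,t_1}\cdots\xrightarrow{a_n,t_n}s_n$ of $T$ lifts to a \emph{unique} run $(s_0,\sigma_0)\xrightarrow{a_1,t_1}\cdots\xrightarrow{a_n,t_n}(s_n,\sigma_n)$ of $\iota T$, with each $\sigma_j$ determined by the formula of that lemma. Since $T$ is a tree, every state $s$ has a unique run from $i$, so applying Lemma~\ref{lem:unicrun} to that run assigns to $s$ a canonical run of $\iota T$, i.e.\ a state of $H\iota T=V(K\iota T)$; this is precisely the definition of $\eta_T$, so well-definedness is immediate.

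For the initial state, the empty run at $i$ lifts, by the base case ($n=0$) of Lemma~\ref{lem:unicrun}, to the singleton run at the initial configuration of $\iota T$, which is exactly the initial state of $H\iota T$; hence $\eta_T(i)$ is the initial state. For transition preservation, given $(s,(a,t),s')\in\Delta$, the unique run from $i$ to $s'$ in the tree $T$ is obtained from the unique run to $s$ by appending this transition. The formula for the $\sigma_j$ in Lemma~\ref{lem:unicrun} depends only on the prefix of the run up to step $j$, so the lift of the run to $s'$ restricts to the lift of the run to $s$; that is, $\eta_T(s')$ is the extension of $\eta_T(s)$ by the move $(a,t)$, ending in the configuration $(s',\sigma_{n+1})$ supplied by the lemma. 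Such an extension is exactly a transition of the unfolding $H\iota T$, so $(\eta_T(s),(a,t),\eta_T(s'))$ is a transition and $\eta_T$ is a morphism of the underlying transition systems.

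Finally, $0$-boundedness follows by tracing the observations through the two constructions. By the definition of $\iota$, the hybrid system $\iota T$ observes a configuration $(s,\sigma)$ by the observation of $s$ in $T$; consequently $K\iota T$ assigns the same value, and its unfolding $H\iota T$ observes a run by the observation of its final configuration. Since $\eta_T(s)$ is a run whose final configuration has first component $s$ (by the shape of the lift in Lemma~\ref{lem:unicrun}), its observation in $H\iota T$ coincides exactly with the observation of $s$ in $T$. Hence the distance between the two observations is $0$ for every state $s$, and $\eta_T$ is $0$-bounded.

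I expect essentially no obstacle at this level: all the genuine difficulty has already been discharged in Lemma~\ref{lem:unicrun}, whose proof invokes Picard--Lindel\"of both to produce and, crucially, to \emph{uniquely} determine the lifted configurations $\sigma_j$ along a run. Once existence and uniqueness of the lift are in hand, the present statement is pure bookkeeping; the only point needing minor care is checking prefix-compatibility of the lifts, so that $\eta_T(s')$ really extends $\eta_T(s)$, which is what makes the transition-preservation step go through and which again follows directly from the configuration formula of Lemma~\ref{lem:unicrun}.
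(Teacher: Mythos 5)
Your proof is correct and follows essentially the same route as the paper: both reduce everything to Lemma~\ref{lem:unicrun}, using the existence and uniqueness of the lifted run to get transition preservation (the lift of the run to $s'$ must extend the lift of the run to $s$), and both observe that $0$-boundedness is immediate since $\iota T$ observes a configuration $(s,\sigma)$ by $\omega(s)$. Your write-up is somewhat more explicit about well-definedness, the initial state, and prefix-compatibility, but the substance is identical.
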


\begin{proof}
First, let us prove it is a morphism between the underlying transition systems. Given a transition $(s, (a, t), s') \in \Delta$, just apply the previous lemma on the unique run to $s$ and the unique run to $s'$ (which necessarily ends with the given transition). The unicity gives us that the second run $\eta_T(s')$ is an extension by $(a,t)$ of $\eta_T(s)$, which means there a transition of $H\iota T$ of the form $(\eta_T(s), (a, t), \eta_T(s'))$.

Now, given $s \in S$, 
$$d(\omega(s), \omega'(\eta_T(s))) = d(\omega(s), \omega(s)) = 0.$$
\end{proof}

Now, we want to construct the inverse $\map{\rho_T}{H\circ\iota T}$ of $\eta_T$. This morphism maps a run
$$(s_0, \sigma_0) \xrightarrow{a_1, t_1} \ldots \xrightarrow{a_n, t_n} (s_n, \sigma_n)$$
of $H\circ\iota T$ to $s_n$. 

\begin{lemma}
$\rho_T$ is a $0$-bounded morphism, which is the inverse of $\eta_T$.
\end{lemma}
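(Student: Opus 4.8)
The plan is to follow the template already established for timed systems (the chain of lemmas culminating in ``$\rho_T$ is the inverse of $\eta_T$''), since the hybrid construction is deliberately modelled on the clock construction, with the contravariant subsystem part of morphisms playing the role of the contravariant clock part. I would verify three things in turn: that $\rho_T$ is a morphism of the underlying transition systems, that it is $0$-bounded, and that it is a two-sided inverse of $\eta_T$.

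First, $\rho_T$ preserves the initial state trivially, since the initial state of $H\iota T$ is the singleton run at the initial configuration of $\iota T$, whose state component is $i$, and $\rho_T$ returns precisely this state component. For $0$-boundedness I would unfold the observations: by construction the observation attached in $H\iota T$ to a run ending at the configuration $(s_n,\sigma_n)$ is the observation of $(s_n,\sigma_n)$ in $\iota T$, namely $\omega(s_n)$, while $\rho_T$ sends this run to $s_n$, whose observation in $T$ is again $\omega(s_n)$; hence the two observations are literally equal and $d$ of them is $0$.

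The main step, and the main obstacle, is transition preservation. A transition of $H\iota T$ is the extension of a run $\pi$ ending at $(s_n,\sigma_n)$ by a pair $(a_{n+1},t_{n+1})$ reaching $(s_{n+1},\sigma_{n+1})$; by definition of $K$ this witnesses that $\iota T$ moves from $(s_n,\sigma_n)$ to $(s_{n+1},\sigma_{n+1})$ by action $a_{n+1}$ with time $t_{n+1}$, which requires an event $(s_n,a_{n+1},s_{n+1})\in E$ of $\iota T$. By construction this event exists exactly when there is a (necessarily unique) $t$ with $(s_n,(a_{n+1},t),s_{n+1})\in\Delta$, so it remains to argue $t_{n+1}=t$. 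Here the carefully chosen guards do the work: $G_{(s_n,a_{n+1},s_{n+1})}$ is the single point $\{(\sigma_{s_n,\alpha}(t))_{\alpha\in I}\}$, and the move condition forces the dynamics started at $\sigma_n$ to hit this point at time $t_{n+1}$. Because $I$ ranges over all possible subsystems, it contains a time-measuring one (for instance $n=1$, flow $\equiv 1$, never reset), whose component of $\sigma_{s_n,\alpha}(\cdot)$ is strictly increasing in time, so matching this single coordinate of the guard already forces $t_{n+1}=t$ (an appeal to the determinism of solutions, Picard--Lindel\"of, exactly as used in Lemma \ref{lem:guainv}). Thus $(s_n,(a_{n+1},t_{n+1}),s_{n+1})\in\Delta$, mirroring the timed case where the clock reset on every transition pinned down the elapsed time.

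Finally, for the inverse property, $\rho_T\circ\eta_T=\mathrm{id}_T$ is immediate: $\eta_T(s)$ is by construction (Lemma \ref{lem:unicrun}) a run of $\iota T$ ending at a configuration whose state component is $s$, so $\rho_T$ returns $s$. For $\eta_T\circ\rho_T=\mathrm{id}_{H\iota T}$ I would take a run $(s_0,\sigma_0)\xrightarrow{a_1,t_1}\cdots\xrightarrow{a_n,t_n}(s_n,\sigma_n)$ of $H\iota T$; since $\rho_T$ is a morphism, projecting onto states gives the run $s_0\xrightarrow{a_1,t_1}\cdots\xrightarrow{a_n,t_n}s_n$ of $T$, and because $T$ is a tree this is exactly the unique run to $s_n$ used to define $\eta_T(s_n)$. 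The uniqueness clause of Lemma \ref{lem:unicrun}---that over a given run of $T$ there is a single lift to $\iota T$, the system being deterministic once action and time are fixed---then forces the original run to coincide with $\eta_T(s_n)=\eta_T(\rho_T(\text{run}))$, completing the identity and hence the proof.
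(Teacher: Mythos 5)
Your proof is correct and takes essentially the same route as the paper: the only nontrivial step in both is pinning down $t_{n+1}=t$ via a designated time-measuring subsystem among the universal collection $I$, whose coordinate of the singleton guard forces the elapsed time, with $0$-boundedness and the two inverse identities handled exactly as you do (the latter via the uniqueness clause of Lemma \ref{lem:unicrun}). The one divergence is your choice of a \emph{never-reset} clock where the paper uses one reset to $0$ on every transition (namely $\alpha=(1,0,((u,x)\mapsto 1)_{s},(x\mapsto 0)_{e})$): with your witness the guard coordinate is the \emph{accumulated} time along the tree's run while the run's coordinate is the accumulated time along the run of $H\iota T$, so concluding $t_{n+1}=t$ needs a short induction showing the earlier times already agree, whereas the paper's reset-to-zero clock makes the comparison purely local---a cosmetic difference rather than a gap.
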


\begin{proof}
Given a transition 
$$((s_0,\sigma_0) \xrightarrow{a_1, t_1} \ldots \xrightarrow{a_n,t_n} (s_n, \sigma_n), (a_{n+1}, t_{n+1}),~~~~~~~~~~~~~~~~~~~~$$
	$$~~~~~~~~~~~~~~~~~~~~(s_0,\sigma_0) \xrightarrow{a_1, t_1} \ldots \xrightarrow{a_n,t_n} (s_n, \sigma_n) \xrightarrow{a_{n+1}, t_{n+1}} (s_{n+1}, \sigma_{n+1}))$$
	of $H\circ\iota T$, we have in particular that $(s_n, a_{n+1}, s_{n+1})$ is an event of $\iota T$, which means that there is a unique $t$ such that $(s_n, (a_{n+1}, t), s_{n+1})$. $t$ is necessarily $t_{n+1}$ because, among the subsystems of $\iota T$, we have $\alpha = (1, 0, ((u,x)\in \mathbb{R}^2 \mapsto 1)_{s \in S}, (x \in \mathbb{R} \mapsto 0)_{e \in E})$. Then:
	\begin{itemize}
		\item The run in $H\circ\iota T$ gives us that $\sigma_{\alpha, n+1} = t_{n+1}$.
		\item The guard $G_{(s_n, a_{n+1}, s_{n+1}), \alpha}$ gives us that $\sigma_{\alpha, n+1} = t$.
	\end{itemize}

For the $0$-boundedness, given a run $$\pi = (s_0, \sigma_0) \xrightarrow{a_1, t_1} \ldots \xrightarrow{a_n, t_n} (s_n, \sigma_n)$$
of $H\circ\iota T$:
$$d(\omega'(\pi), \omega(\rho_T(\pi))) = d(\omega(s_n), \omega(s_n)) = 0.$$

For the fact that $\rho_T$ is the inverse of $\eta_T$:
\begin{itemize}
	\item $\rho_T\circ\eta_T = \text{id}$ is obvious.
	\item $\eta_T\circ\rho_T = \text{id}$ comes from the unicity of lemma \ref{lem:unicrun}
\end{itemize}
\end{proof}

\begin{lemma}
$\rho$ (and so $\eta$) is a natural transformation.
\end{lemma}

\begin{proof}
Given a morphism $f$ from $T$ to $T'$, $H\iota T$ maps a run 
$$(s_0, \sigma_0) \xrightarrow{a_1, t_1} \ldots \xrightarrow{a_n, t_n} (s_n, \sigma_n)$$
of $\iota T$ to a run of the form
$$(f(s_0), \sigma'_0) \xrightarrow{a_1, t_1} \ldots \xrightarrow{a_n, t_n} (f(s_n), \sigma'_n).$$
So both $f \circ \rho_T$ and $\rho_{T'}\circ H\iota f$ maps a 
$$(s_0, \sigma_0) \xrightarrow{a_1, t_1} \ldots \xrightarrow{a_n, t_n} (s_n, \sigma_n)$$
of $\iota T$ to $f(s_n)$.
\end{proof}

\begin{lemma}~
\begin{itemize}
	\item For every synchronization tree with observations $T$, $\epsilon_{\iota T} = \iota(\rho_T)$.
	\item For every hybrid system $T$, $H(\epsilon_T) = \rho_{HT}$.
\end{itemize}
\end{lemma}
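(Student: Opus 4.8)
These two equalities are exactly the triangle (zig-zag) identities for the putative adjunction $\iota\dashv H$. Since the previous lemmas established that $\eta_T$ is an isomorphism with inverse $\rho_T$, the equality $\epsilon_{\iota T}=\iota(\rho_T)$ yields $\epsilon_{\iota T}\circ\iota(\eta_T)=\iota(\rho_T\circ\eta_T)=\mathrm{id}_{\iota T}$, and $H(\epsilon_T)=\rho_{HT}$ yields $H(\epsilon_T)\circ\eta_{HT}=\rho_{HT}\circ\eta_{HT}=\mathrm{id}_{HT}$; so this lemma is the last ingredient needed to conclude that $H$ is a coreflection. In both cases my plan is to unwind the (co)domains, compare the two morphisms on their mode part and their subsystem part separately, and reduce everything to a direct check, exactly as in the final lemma of the timed-systems section.

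For the first item, both $\epsilon_{\iota T}$ and $\iota(\rho_T)$ are morphisms $\iota\circ H\circ\iota T\to\iota T$ of hybrid systems, hence each is a pair of a covariant mode part and a contravariant subsystem part. A mode of $\iota\circ H\circ\iota T$ is a state of $H\iota T$, that is, a run $(s_0,\sigma_0)\xrightarrow{a_1,t_1}\cdots\xrightarrow{a_n,t_n}(s_n,\sigma_n)$ of the hybrid system $\iota T$. On modes the counit $f_{\iota T,M}$ returns the last mode $s_n$, and $\iota(\rho_T)$ has mode part $\rho_T$, which also returns $s_n$; so the mode parts agree. On subsystems I would simply unfold both maps on a subsystem $i=(n,\sigma,(F_s)_s,(R_e)_e)$ of $\iota T$. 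The counit $f_{\iota T,I}$ produces the copy of $i$ indexed over runs and extensions of $\iota T$, whose flow at a run is the $F$-component of $i$ at that run's last mode and whose reset at an extension is the $R$-component of $i$ at that extension's last event, while $\iota(\rho_T)_I$ re-indexes $i$ along $\rho_T$. Since $\rho_T$ is precisely ``take the last mode'' (and its induced event map is ``take the last event''), these two re-indexings coincide, and both keep $(n,\sigma)$ unchanged. This item is thus pure bookkeeping.

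For the second item, both $H(\epsilon_T)$ and $\rho_{HT}$ are morphisms $H\circ\iota\circ HT\to HT$ of synchronization trees with observations, hence are determined by their underlying functions on states. A state of $H\circ\iota\circ HT$ is a run $(\pi_0,\nu_0)\xrightarrow{a_1,t_1}\cdots\xrightarrow{a_n,t_n}(\pi_n,\nu_n)$ of $\iota\circ HT$; since $HT$ is a tree and hybrid systems are deterministic once the action and time are fixed (cf.\ Lemma~\ref{lem:unicrun}), the $\pi_j$ form the unique chain with $\pi_0=(m_0,\sigma_0)$ and $\pi_{j+1}$ equal to $\pi_j$ extended by $(a_{j+1},t_{j+1})$, so that $\pi_n$ is the run $(m_0,\sigma_0)\xrightarrow{a_1,t_1}\cdots\xrightarrow{a_n,t_n}(m_n,\sigma_n)$ of $T$. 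Now $\rho_{HT}$ returns $\pi_n$, whereas $H(\epsilon_T)$ applies $\epsilon_T=(f_{T,M},f_{T,I})$ configuration by configuration, returning $(f_{T,M}(\pi_0),f_{T,X}(\nu_0))\xrightarrow{a_1,t_1}\cdots\xrightarrow{a_n,t_n}(f_{T,M}(\pi_n),f_{T,X}(\nu_n))$ with $f_{T,M}(\pi_j)=m_j$. The whole item therefore reduces to proving $f_{T,X}(\nu_j)=\sigma_j$ for all $j$, which I would do by induction on $j$. This inductive step is where the real content lies and is the main obstacle: for a subsystem $i$ of $T$, the $i$-component of $f_{T,X}(\nu_j)$ is $\nu_j(f_{T,I}(i))$, namely the encoded trajectory $\sigma_{\pi_j,\,f_{T,I}(i)}$ evaluated at the time of the $j$-th step. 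By construction $f_{T,I}(i)$ is the copy of $i$ sitting inside $\iota\circ HT$, carrying the same dimension and initial value and using the flow $F_{m,i}$ at a run ending in mode $m$ and the reset $R_{(m_n,a_{n+1},m_{n+1}),i}$ at an extension ending with that event. Hence the ODE defining $\sigma_{\pi_j,\,f_{T,I}(i)}$, together with its resets, is literally the flow-and-reset data of subsystem $i$ along $\pi_n$, which is what generated the configurations $\sigma_j$ in the first place. By Lemma~\ref{lem:guainv} and the uniqueness of solutions from Picard--Lindel\"of, the computed value coincides with $(\sigma_j)_i$, closing the induction. This is the precise analogue of the identity $\nu'_j=\nu_j\circ g_T$ proved inductively in the timed case, the only difference being that here we are matching a computed continuous trajectory against the trajectory that defines the run of $T$, rather than matching two clock valuations.
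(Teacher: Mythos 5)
Your proposal is correct and follows essentially the same route as the paper: for the first identity you compare the mode and subsystem parts directly and observe both re-index along ``take the last state/event,'' and for the second you reduce to the inductive claim $f_{T,X}(\nu_j)=\sigma_j$, closed via the matching of flow/reset data under $f_{T,I}$ and Picard--Lindel\"of uniqueness, exactly as in the paper. The framing of the lemma as supplying the triangle identities (given that $\rho_T$ inverts $\eta_T$) is a correct and helpful addition, but does not change the substance of the argument.
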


\begin{proof}~
\begin{itemize}
	\item Fix $\epsilon_{\iota T} = (f_M, f_I)$ and $\iota(\rho_T) = (g_M, g_I)$. Then:
		\begin{itemize}
			\item Both $f_M$ and $g_M$ maps every run 
			$$(s_0, \sigma_0) \xrightarrow{a_1, t_1} \ldots \xrightarrow{a_n, t_n} (s_n, \sigma_n)$$
			of $\iota T$ to $s_n$.
			\item Both $f_I$ and $g_I$ maps a subsystem $(n, \sigma, (F_s)_{s\in S}, (R_e)_{e \in E})$ of $\iota T$ to the subsystem $$(n, \sigma, (F'_\pi)_{\pi \text{ run of } \iota T}, (R'_{e'})_{e' \text{ extension of runs of }\iota T})$$
			where:
			\begin{itemize}
				\item if $\pi$ is the run
				$$(s_0, \sigma_0) \xrightarrow{a_1, t_1} \ldots \xrightarrow{a_n, t_n} (s_n, \sigma_n)$$
				of $\iota T$, $F'_\pi = F_{s_n}$.
				\item if $e'$ is of the form 
				$$((s_0,\sigma_0) \xrightarrow{a_1, t_1} \ldots \xrightarrow{a_n,t_n} (s_n, \sigma_n), a_{n+1},~~~~~~~~~~~~~~~~~~~~$$
	$$~~~~~~~~~~~~~~~~~~~~(s_0,\sigma_0) \xrightarrow{a_1, t_1} \ldots \xrightarrow{a_n,t_n} (s_n, \sigma_n) \xrightarrow{a_{n+1}, t_{n+1}} (s_{n+1}, \sigma_{n+1}))$$
	$R'_{e'} = R_{(s_n, a_{n+1}, s_{n+1})}$.
			\end{itemize}
		\end{itemize}
	\item Given a run $\gamma$
	$$(\pi_0, \sigma_0) \xrightarrow{a_1, t_1} \ldots \xrightarrow{a_n, t_n} (\pi_n, \sigma_n)$$
	of $\iota H T$, there are $m_j$, $\sigma'_j$, such that $\pi_j$ is the run
	$$(m_0, \sigma'_0) \xrightarrow{a_1, t_1} \ldots \xrightarrow{a_j, t_j} (s_j, \sigma'_j)$$
	of $T$. Then $H(\epsilon_T)(\gamma)$ is the run
	$$(m_0, f_{T,X}(\sigma_0)) \xrightarrow{a_1, t_1} \ldots \xrightarrow{a_n, t_n} (m_n, f_{T,X}(\sigma_n))$$
	while $\rho_{HT}(\gamma) = \pi_n$. So to conclude, it is enough to prove that $(\sigma'_{i,j})_{i \in I} = \sigma'_j = f_{T,X}(\sigma_j) = f_{T,X}((\sigma_{\alpha, j})_{\alpha \in I'}) = (\sigma_{f_{T,I}(i), j})_{i \in I}$, by induction on $j$:
		\begin{itemize}
			\item \textbf{case $j=0$:} consequence of the fact that $\epsilon_T$ is a morphism, and so that $f_{T,X}$ preserves the initial valuation.
			\item \textbf{inductive case:} Assume that $\sigma'_j = f_{T,X}(\sigma_j)$. Then:
				\begin{itemize}
					\item $\sigma'_{i,j+1}$ is obtained as $R_{(m_{j}, a_{j+1}, s_{j+1}), i}(x(t_{j+1}))$ where $x$ is the solution on $[0, t_{j+1}]$ of $\dot{x}(u) = F_{m_{j}, i}(u, x(u))$, with the initial $x(0) = \sigma'_{i,j}$.
					\item for $\alpha = (n, \sigma, (F_\pi)_{\pi \text{ run of } T}, (R_e)_{e \text{ extension of runs of} T})$, $\sigma_{\alpha, j+1}$ is obtained as $R_{(\pi_{j}, a_{j+1}, \pi_{j+1})}(y(t_{j+1}))$, where $y$ is the solution on $[0, t_{j+1}]$ of $\dot{y}(u) = F_{\pi_{j}}(u, y(u))$, with the initial condition $y(0) = \sigma_{\alpha, j}$.
				\end{itemize}
			When $\alpha = f{T,I}(i)$, we have:
				\begin{itemize}
					\item $\sigma_{\alpha, j} = \sigma'_{i,j}$ by induction hypothesis,
					\item $R_{(\pi_{j}, a_{j+1}, \pi_{j+1})} = R_{(m_{j}, a_{j+1}, s_{j+1}), i}$ by definition of $f_{T,I}$,
					\item $F_{\pi_{j}} = F_{m_{j}, i}$ bu definition of $f_{T,i}$.
				\end{itemize}
			So $x$ and $y$ are solutions of the differential equations with the same initial conditions, so by Picard-Lindel\"of's theorem, $x = y$, and $\sigma'_{i,j+1} = \sigma_{f_{T,I}(i), j+1}$.
		\end{itemize}
\end{itemize}
\end{proof}

\end{document}